\newtheorem{theorem}{Theorem}
\newtheorem{assumption}{Assumption}
\newtheorem{proposition}{Proposition}
\newtheorem{lemma}{Lemma}
\newtheorem{remark}{Remark}
\newtheorem{definition}{Definition}
\def\Holder{{H\"{o}lder}}
\def\Cramer{Cram\'{e}r}
\newcommand{\Op}{\mathrm{O}_{p}}
\newcommand{\op}{\mathrm{o}_{p}}
\newcommand{\pa}{\mathrm{\pa}}
\newcommand{\epol}{\pi^\mathrm{e}}
\renewcommand{\eqref}[1]{(\ref{#1})}
\newcommand{\RN}[1]{%
  \textup{\uppercase\expandafter{\romannumeral#1}}%
}
\title[Adaptive Doubly Robust Estimator from Non-Stationary Logging Policy]{Adaptive Doubly Robust Estimator from Non-stationary Logging Policy under a Convergence of Average Probability}
\author{Masahiro Kato}
\address{Cyberagent, Inc.\\
Shibuya, Tokyo}
\email{masahiro\_kato@cyberagent.co.jp}
\def\AmSTeX{$\cal A$\kern-.1667em\lower.5ex\hbox{$\cal M$}\kern-.125em
            $\cal S$-\TeX}
\def\BibTeX{{\rm B\kern-.05em{\sc i\kern-.025em b}\kern-.08em
            T\kern-.1667em\lower.7ex\hbox{E}\kern-.125emX}}
\begin{document}

\begin{abstract}
\emph{Adaptive experiments}, including efficient average treatment effect estimation and multi-armed bandit algorithms, have garnered attention in various applications, such as social experiments, clinical trials, and online advertisement optimization. This paper considers estimating the \emph{mean outcome} of an action from samples obtained in adaptive experiments. In causal inference, the mean outcome of an action has a crucial role, and the estimation is an essential task, where the average treatment effect estimation and off-policy value estimation are its variants. In adaptive experiments, the probability of choosing an action (logging policy) is allowed to be sequentially updated based on past observations. Due to this logging policy depending on the past observations, the samples are often not \emph{independent and identically distributed} (i.i.d.), making obtaining an asymptotically normal estimator difficult. A typical approach for this problem is to assume that the logging policy converges in a time-invariant function. However, this assumption is restrictive in various applications, such as when the logging policy fluctuates or becomes zero at some periods. To mitigate this limitation, we propose another assumption that \emph{the average logging policy} converges to a time-invariant function and show the doubly robust (DR) estimator's asymptotic normality. Under the assumption, the logging policy itself can fluctuate or be zero for some actions. We also show the empirical properties by simulations.
\end{abstract}

\section{Introduction}
Estimating the \emph{mean outcome} of an action is an essential task in statistical inference under Neyman-Rubin potential outcomes model \citep{Luedtke2016}. The average treatment effect (ATE) estimation \citep{Holland1986stat,rubin87,robins94,hirano2003efficient,ImaiKosuke2014Cbps,imbens_rubin_2015} and off-policy value (OPV) estimation for multi-armed bandit (MAB) algorithms \citep{Precup2000,dudik2011doubly,Mahmood20014,Li2015,jiang2016,wang2017optimal,Bibaut2019moreffficient} are its special cases. We consider mean outcome estimation for dependent samples obtained from adaptive experiments \citep{ChowChang201112,Hahn2011,Kasy2021}, including MAB algorithms \citep{Villar2018} and treatment regimes (TR) \citep{Zhang2012,zhao2012est_ind,Chakraborty2013}. In adaptive experiments, we gather samples via \emph{logging policy} (probability of choosing an action), which is sequentially updated based on past observations. For instance, \citet{Laan2008TheCA} considered a situation where a research subject visits at each period $t=1,2,\dots,T$, and we select a treatment following a logging policy sequentially updated based on past observation to minimize the variance of an ATE estimator. Owing to the logging policy depending on the past observations, the samples are not \emph{independent and identically distributed} (i.i.d.). 

For statistical inference for the mean outcome, we aim to construct an asymptotically normal estimator, which also implies $\sqrt{T}$ consistency for a sample size $T$. Under the dependency, we cannot apply the standard central limit theorem (CLT). For mitigating this problem, existing studies proposed various approaches, and one of the main approaches is to apply the martingale CLT, which requires that the variance of the target random variable to a time-invariant one. Existing studies have proposed the following three strategies for satisfying this requirement. The first strategy is to assume that the logging policy converges to a time-invariant function in probability \citep{Laan2008TheCA,hadad2019,Kato2020adaptive}. The second strategy is to assume the existence of batched samples, where there are infinite samples in each batch \citep{Hahn2011,Laan2014onlinetml,kelly2020batched,kato2020batch}. The third strategy is to standardize the score function to equalize the variance of each period \citep{Luedtke2016}. 

\begin{figure}[t]
\begin{center}
 \includegraphics[width=90mm]{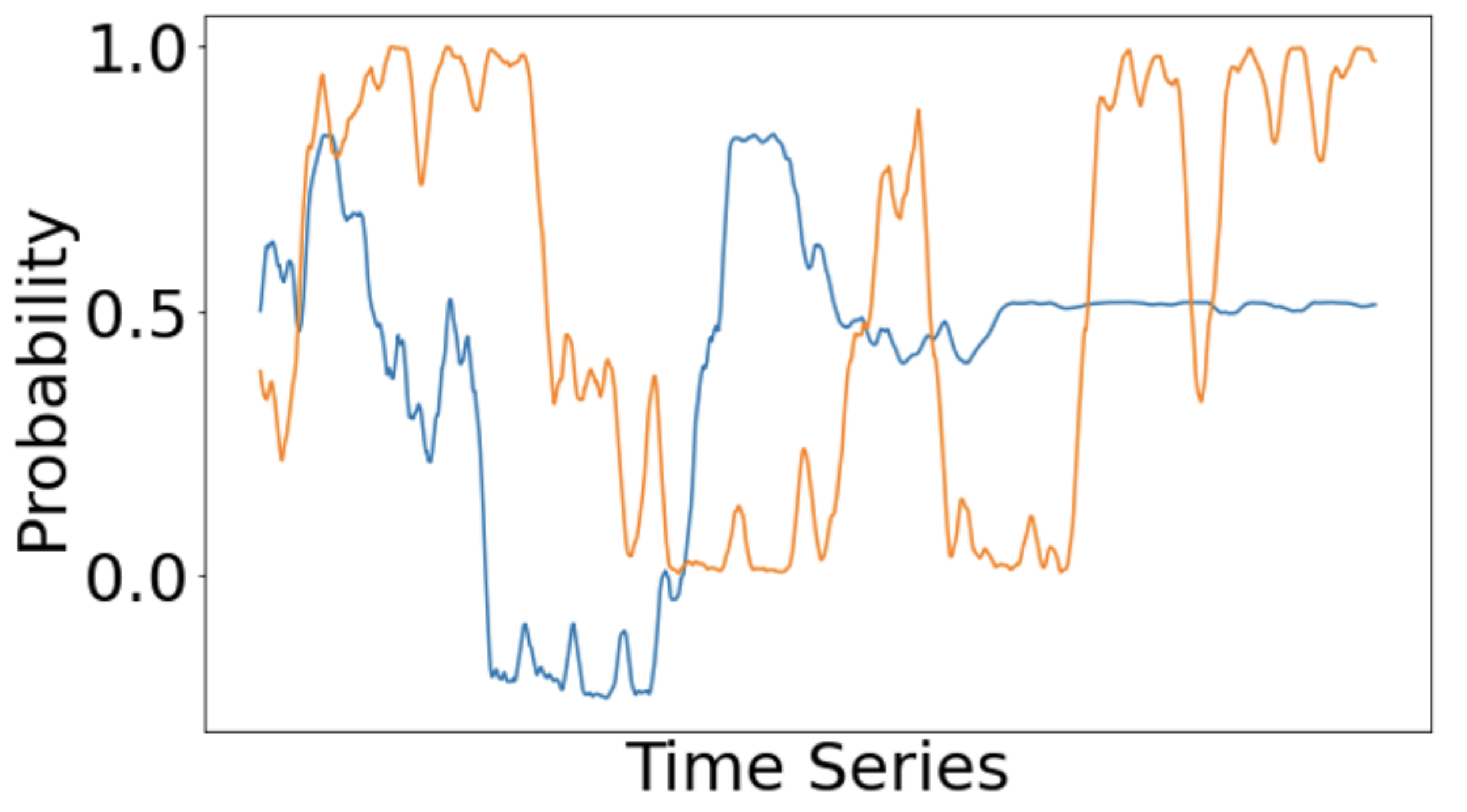}
\end{center}
\caption{The blue line denotes the logging policy converging to $0.5$. The orange line denotes the logging policy that does not converge, but the average converges to $0.5$. Existing studies mainly considered mean outcome estimation from dependent samples for a converging logging policy, such as the blue line. This paper shows how to obtain an asymptotically normal estimator from a logging policy without its convergence. A fundamental assumption is an average convergence as represented by the orange line.}
\label{fig:concept}
\end{figure}

However, these strategies are often restrictive in practice. For instance, we can raise the following two situations, where the first strategy is not applicable: (I) the logging policy fluctuates, and (II) the logging policy is $0$ or $1$ at a period. In best arm identification (BAI), \citet{kaufman2016complexity} and \citet{garivier2016optimal} showed that pulling arms with a specific ratio achieves the lower bound of the sample complexity. Their methods deterministically select an arm at a period to keep the ratio. Here, the logging policy does not converge to a time-invariance function. The value is $1$ for an arm and $0$ for the others. Therefore, we cannot apply an existing mean outcome estimator for the situations.

When the logging policy fluctuates, the martingale CLT is not applicable owing to the time-variant variance. To mitigate this problem, \citet{Luedtke2016} proposed standardizing the score function by its estimated variance, which is applicable to many cases. However, the estimator of \citet{Luedtke2016} does not achieve $\sqrt{T}$-consistency by splitting the samples to estimate the variance for the standardization. 

For overcoming these problems, instead of the conventional strategies, this paper proposes a new strategy based on the assumption that \emph{the average logging policy converges to a time-invariant function in probability}. This assumption is a generalization of the first strategy based on the assumption that the logging policy converges to a time-invariant function in probability. This is because when the logging policy itself converges, the average logging policy also converges. The new assumption is greatly useful in practice. For instance, we can apply our method to cases, where the logging policy can fluctuate, and the logging policy is $0$ or $1$  if the average logging policy converges. We also illustrate an example when the average logging policy converges in contrast to a case when the logging policy itself converges in Figure~\ref{fig:concept}.

\paragraph{Organization of this paper.} In Section~\ref{sec:problem}, we introduce our problem setting and the parameter that we want to estimate. In Section~\ref{sec:adr}, we propose two DR-type estimators and show the asymptotic normalities under the assumption that the average logging policy converges in probability. The first estimator is more natural and empirically performs well but requires conditions that are not easy to be confirmed (Theorem~\ref{thm:asymp_normal}). The second estimator does not empirically perform well as the first one, but we can show the asymptotic normality by using an assumption that is easier to be confirmed (Theorem~\ref{thm:asymp_normal2}). In Section~\ref{sec:exp}, we numerically investigate the performance of the proposed estimators. In Section~\ref{sec:discuss}, we discuss the remaining problems.

\section{Problem Setting}
\label{sec:problem}
In this section, we describe our problem setting.

\subsection{Data-Generating Process (DGP)}
Consider a time-series $t=1,2,\dots, T$. For each period $t$, let $A_t$ be an \emph{action} in $\mathcal{A}=\{1,2,\dots,K\}$, $X_t\in\mathcal{X}$ be a \emph{covariate} observed by the decision maker when choosing an action, and $\mathcal{X}$ be the space of covariate. Let a random variable denoting an outcome at period $t$ be $Y_t=\sum^K_{a=1}\mathbbm{1}[A_t = a]Y_t(a)$, where $Y_t(a)\in\mathbb{R}$ is a random variable denoting the potential (random) outcome of an action $a\in\mathcal{A}$. We have a dataset $\big\{(X_t, A_t, Y_t)\big\}^{T}_{t=1}$. The DGP is described as follows:
\begin{align*}
(X_t, A_t, Y_t(A_t)) \sim P_t = p(x)p_t(a\mid x)p(y_a\mid x),
\end{align*}
where $X_t$ is generated from $p(x)$, $A_t$ is generated from $p_t(a\mid x)$ at the period $t$, and $Y_t(a)$ is generated from $p(y_a\mid x)$. While $p(x)$ and $p(y_a\mid x)$ are invariant across periods, $p_t(a\mid x)$ can be different across periods based on past observations. In this case, the samples $\big\{(X_t, A_t, Y_t)\big\}^{T}_{t=1}$ are correlated over time; that is, the samples are not i.i.d. Let $\Omega_{t-1}=\{X_{t-1}, A_{t-1}, Y_{t-1}, \dots, X_{1}, A_1, Y_{1}\}$ be the history with the space $\mathcal{M}_{t-1}$. The probability $p_t(a\mid x)$ is determined by a \emph{logging policy} $\pi_t:\mathcal{A}\times\mathcal{X}\times\mathcal{M}_{t-1}\to(0,1)$. We also assume that $\pi_t$ is conditionally independent of $Y_t(a)$ to satisfy the unconfoundedness (Remark~\ref{rem:unconfoundedness}).

\begin{remark}[Stable unit treatment value assumption]
The DGP also implies the \emph{stable unit treatment value assumption}, that is, $p(y(a)\mid x)$ is invariant for any $p_t(a\mid x)$ \citep{Rubi:86}.
\end{remark}

\begin{remark}[Unconfoundedness]
\label{rem:unconfoundedness}
In this paper, unconfoundedness refers independence between $(Y_t(1),\dots, Y_t(K))$ and $A_t$ conditioned on $X_t$ and $\Omega_{t-1}$, which is required for identification of the mean outcome. 
\end{remark}

\subsection{Parameter of Interest in Mean Outcome Estimation}
Let a function $\epol:\mathcal{A}\times \mathcal{X} \to\mathbb{R}$ be an \emph{evaluation weight}. We consider estimating the mean outcome weighted by an evaluation weight $\pi^{\mathrm{e}}(a \mid x)$ defined as 
\begin{align*}
R(\epol) := \mathbb{E}\left[\sum^K_{a=1}\epol(a \mid x)Y_t(a)\right].
\end{align*}
\citet{dudik2011doubly} regarded the weight as an policy that we want to evaluate by limiting the range into $(0,1)$ and the sum to $1$. The ATE is also a special case of the mean outcome for two actions $\mathcal{A}=\{1,2\}$, where $\epol(1\mid x) = 1$ and $\epol(2\mid x) = -1$. To identify $R(\epol)$, we assume the boundedness of the potential outcome. 

\begin{assumption}
\label{asm:outcome_boundedness}
For all $a\in\mathcal{A}$ and $t\in\{1,2,\dots,T\}$, there exists a constant $C_Y$ such that $|Y_t(a)| \leq C_Y$
\end{assumption}

\paragraph{Notations.} Let us denote $\mathbb{E}[Y_t(a)\mid x]$ and $\mathrm{Var}(Y_t(a)\mid x)$ as $f^*(a, x)$ and $v^*(a, x)$, respectively. Let $\hat{f}_{t}(a, x)$ be an estimators of $f^*(a, x)$ constructed from $\Omega_{t}$. Let $\mathcal{N}(\mu, \mathrm{var})$ be the normal distribution with the mean $\mu$ and the variance $\mathrm{var}$. For a random variable $Z$ and function $\mu$, let $\|\mu(Z)\|_2=\int |\mu(z)|^2 p(z) dz $ be the $L^{2}$-norm.

\section{Preliminaries of Mean Outcome Estimation}

\subsection{Mean Outcome Estimation from Independent Samples}
We introduce well-known estimators for a standard mean outcome estimation problem with i.i.d. samples where $\pi_1(a\mid x, \Omega_{0})=\pi_2(a\mid x, \Omega_{1})=\cdots=\pi(a\mid x)$. One of the standard estimators is an inverse probability weighting (IPW) estimator 
\begin{align*}
\widehat{R}^{\mathrm{IPW}}_T(\epol)=\frac{1}{T}\sum^T_{t=1}\sum^K_{a=1}\frac{\epol(a\mid X_t)\mathbbm{1}[A_t=a]Y_t}{\pi_{t}(a\mid X_t)},
\end{align*}
which are also called importance sampling \citep{Horvitz1952}. If $\hat{f}$ is a consistent estimator of $f^*$, the direct method (DM) estimator defined as $\sum^K_{a=1}\epol(a \mid x)\hat{f}(a\mid X_t)$ is known to be consistent to the policy value $R(\epol)$. By extending an IPW, \citet{robins94}, \citet{Scharfstein1999}, and \citet{Robins1999} proposed an Augmented IPW (AIPW) estimator defined as
\begin{align*}
\widehat{R}^{\mathrm{AIPW}}_T(\epol) = \frac{1}{T}\sum^T_{t=1}\sum^K_{a=1}\Bigg\{\frac{\epol(a\mid X_t)\mathbbm{1}[A_t=a]\left(Y_t - \hat{f}(a, X_t)\right) }{\pi(a\mid X_t)} + \epol(a\mid X_t)\hat{f}(a, X_t)\Bigg\},
\end{align*}
where $\hat{f}$ is a consistent estimator of $f^*$. In addition, a doubly robust (DR) estimator is also a standard choice \citep{Scharfstein1999,Bang2005}, which is defined as 
\begin{align*}
\widehat{R}^{\mathrm{DR}}_T(\epol) = \frac{1}{T}\sum^T_{t=1}\sum^K_{a=1}\Bigg\{\frac{\epol(a\mid X_t)\mathbbm{1}[A_t=a]\left(Y_t - \hat{f}(a, X_t)\right) }{\hat{g}(a\mid X_t)}
 + \epol(a\mid X_t)\hat{f}(a, X_t)\Bigg\},
 \end{align*}
 where $\hat{g}$ is a consistent estimator of $\pi$.

\paragraph{Semiparametric efficiency bound.} In many cases, we are interested in the asymptotic efficiency of the estimators. The lower bound of the asymptotic variance is defined for an estimator under some posited models of the DGP. If this posited model is a parametric model, then the lower bound is equal to the \Cramer-Rao lower bound. When this posited model is a non- or semiparametric model, the corresponding lower bound can still be defined \citep{bickel98}. For OPV estimation setting, \citet{narita2019counterfactual} shows that the semiparametric lower bound of the DGP under $p_1(a\mid x)=\cdots=p_T(a\mid x)=p(a\mid x)$ is 
\begin{align*}
\Psi(\epol) = \mathbb{E}\Bigg[\sum^{K}_{a=1}\frac{\big(\epol(a\mid X_t)\big)^2v^*(a, X_t)}{p(a\mid X_t)}+ \left(\sum^{K}_{a=1}\epol(a\mid X_t)f^*(a, X_t) - R(\epol)\right)^2\Bigg].
\end{align*}
The asymptotic variance of the asymptotic distribution is also known as the asymptotic mean squared error (MSE). By constructing a mean outcome estimator achieving the semiparametric lower bound, we can also minimize the MSE between the estimator and the true value $R(\epol)$, not only obtain a tight confidence interval.

\subsection{Mean Outcome Estimation from Dependent Samples}
There are mainly three approaches for deriving the asymptotic normality of a mean outcome estimator from dependent samples: (i) assuming the convergence of the logging policy $\pi_t(a\mid x, \Omega_{t-1})$ to a time-invariant probability \citep{Laan2014onlinetml,hadad2019,Kato2020adaptive}; (ii) assuming the presence of batched samples \citep{Hahn2011,Laan2014onlinetml,kelly2020batched}; (iii) standardizing the score functions. Under the first approach, \citet{Laan2008TheCA} and \citet{Kato2020adaptive} put the following assumption.
\begin{assumption}\label{asm:old_assumption}
For all $a\in \mathcal{A}$ and $x\in \mathcal{X}$, $\hat{f}_{t-1}(a, x) \xrightarrow{\mathrm{p}} f^*(a, x)$ and $\pi_t(a\mid x, \Omega_{t-1})\xrightarrow{\mathrm{p}} \alpha(a\mid x)$, where $\alpha: \mathcal{A}\times\mathcal{X}\to(0,1)$ is a time-invariant function such that $\sum^K_{a'=1}\alpha(a'\mid x)=1$ and there exists a constant $C_\pi$ satisfying $\left| \frac{\epol(a\mid x)}{\pi_t(a\mid x, \Omega_{t-1})}\right| \leq C_{\pi}$ for all $a\in \mathcal{A}$ and $x\in \mathcal{X}$, and $\Omega_{t-1}\in\mathcal{M}_{t-1}$.
\end{assumption}
Then, \citet{Laan2008TheCA} proposed the adaptive version of an IPW (AdaIPW) estimator defined as $R^{\mathrm{AdaIPW}}_T(\epol)=\frac{1}{T}\sum^T_{t=1}\frac{\epol(A_t\mid X_t)\mathbbm{1}[A_t=a]Y_t }{\pi_{t}(A_t\mid X_t, \Omega_{t-1})}$ and \citet{Laan2008TheCA} and \citet{Kato2020adaptive} proposed estimators based on an Adaptive AIPW (A2IPW) estimator $\widehat{R}^{\mathrm{A2IPW}}_T(\epol)$ defined as
\begin{align*}
\frac{1}{T}\sum^T_{t=1}\sum^K_{a=1}\Bigg\{\frac{\epol(a\mid X_t)\mathbbm{1}[A_t=a]\left(Y_t - \hat{f}_{t-1}(a, X_t)\right) }{\pi_{t}(a\mid X_t, \Omega_{t-1})}+ \epol(a\mid X_t)\hat{f}_{t-1}(a, X_t)\Bigg\},
\end{align*}
where $\hat{f}_t$ is a consistent estimator of $f^*$ constructed only using $\Omega_{t-1}$. Under Assumption~\ref{asm:old_assumption}.  \citet{Kato2020adaptive} showed the asymptotic normality of an A2IPW estimator.
\begin{proposition}[Asymptotic distribution of an A2IPW estimator]
\label{prp:asymp_dist_a2ipw}
Under Assumptions~\ref{asm:outcome_boundedness} and \ref{asm:old_assumption}, $\sqrt{T}\left(\widehat{R}^{\mathrm{A2IPW}}_T(\epol)-R(\epol)\right)\xrightarrow{d}\mathcal{N}\left(0, \Psi(\alpha)\right)$.
\end{proposition}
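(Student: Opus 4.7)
The plan is to rewrite the centered estimator as a time-average of a martingale difference sequence and then apply a martingale central limit theorem. Set $\mathcal{F}_{t-1}=\sigma(\Omega_{t-1})$ and
\[
\psi_t = \sum_{a=1}^K\!\left\{\frac{\epol(a\mid X_t)\mathbbm{1}[A_t=a]\bigl(Y_t-\hat f_{t-1}(a,X_t)\bigr)}{\pi_t(a\mid X_t,\Omega_{t-1})}+\epol(a\mid X_t)\hat f_{t-1}(a,X_t)\right\}-R(\epol),
\]
so that $\sqrt{T}\bigl(\widehat R^{\mathrm{A2IPW}}_T(\epol)-R(\epol)\bigr)=T^{-1/2}\sum_{t=1}^{T}\psi_t$. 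Verifying that $\mathbb{E}[\psi_t\mid\mathcal{F}_{t-1}]=0$ is a short computation: $X_t\sim p(x)$ is independent of $\Omega_{t-1}$, and unconfoundedness (Remark~\ref{rem:unconfoundedness}) yields $\mathbb{E}[\mathbbm{1}[A_t=a](Y_t-\hat f_{t-1}(a,X_t))\mid X_t,\Omega_{t-1}]=\pi_t(a\mid X_t,\Omega_{t-1})(f^*(a,X_t)-\hat f_{t-1}(a,X_t))$; the IPW correction cancels the augmentation and leaves $\mathbb{E}[\sum_a\epol(a\mid X_t)f^*(a,X_t)]-R(\epol)=0$.

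I would then invoke a martingale CLT (e.g.\ Hall--Heyde, Thm.~3.2), for which it suffices to check (i)~$T^{-1}\sum_t\mathbb{E}[\psi_t^2\mid\mathcal{F}_{t-1}]\xrightarrow{p}\Psi(\alpha)$ and (ii)~a conditional Lindeberg condition. Assumption~\ref{asm:outcome_boundedness} gives $|Y_t(a)|\le C_Y$ and Assumption~\ref{asm:old_assumption} gives $|\epol(a\mid x)/\pi_t(a\mid x,\Omega_{t-1})|\le C_\pi$; hence $|\psi_t|\le C$ almost surely for some constant $C$, and the Lindeberg indicator $\mathbbm{1}\{|\psi_t|>\sqrt{T}\epsilon\}$ vanishes for $T$ large, settling (ii). For (i), using unconfoundedness and the disjointness of the indicators $\mathbbm{1}[A_t=a]$,
\[
\mathbb{E}[\psi_t^2\mid\mathcal{F}_{t-1}]=\mathbb{E}\!\left[\sum_{a=1}^K\frac{\epol(a\mid X_t)^2\bigl(v^*(a,X_t)+(f^*(a,X_t)-\hat f_{t-1}(a,X_t))^2\bigr)}{\pi_t(a\mid X_t,\Omega_{t-1})}+\bigl(m(X_t)-R(\epol)\bigr)^2\,\bigg|\,\mathcal{F}_{t-1}\right],
\]
where $m(x):=\sum_a\epol(a\mid x)f^*(a,x)$ and the cross term vanishes by the tower property. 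Under Assumption~\ref{asm:old_assumption}, $\hat f_{t-1}\to f^*$ and $\pi_t\to\alpha$ in probability, so continuous mapping and bounded convergence give pointwise-in-$t$ convergence in probability to $\sum_a\epol(a\mid X_t)^2 v^*(a,X_t)/\alpha(a\mid X_t)+(m(X_t)-R(\epol))^2$, whose expectation equals $\Psi(\alpha)$.

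The main obstacle is exactly step (i): the nuisances $\hat f_{t-1}$ and $\pi_t$ converge only in probability, and only pointwise in $(a,x)$, whereas one needs convergence of the $p$-integrated conditional second moment \emph{averaged} over $t$. Lifting pointwise convergence to $L^2(p)$ is legitimate by bounded convergence, given the uniform bounds on $\hat f_{t-1}$, $f^*$, and $\epol/\pi_t$; a Cesàro-type lemma for bounded sequences converging in probability then upgrades pointwise-in-$t$ convergence to convergence of the time average, closing the argument and yielding the stated $\mathcal{N}(0,\Psi(\alpha))$ limit.
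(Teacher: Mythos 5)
Your proposal follows essentially the same route as the paper: the paper imports this result from Kato (2020) and proves the directly analogous statement (Lemma~\ref{LEM:2}, Appendix~\ref{appdx:proof:target2}) by writing the centered estimator as a martingale difference sequence with conditional mean zero and invoking the martingale CLT of Proposition~\ref{prp:marclt}, checking exactly the boundedness and conditional-variance conditions you check (the paper verifies condition~(c) via the martingale weak law plus $L^r$-convergence and dominated convergence, which is the same content as your Ces\`aro/bounded-convergence step). One small slip: with your decomposition the cross term in $\mathbb{E}[\psi_t^2\mid\mathcal{F}_{t-1}]$ does not vanish exactly by the tower property — the correct expression carries an additional $-\bigl(\sum_{a}\epol(a\mid X_t)(f^*(a,X_t)-\hat f_{t-1}(a,X_t))\bigr)^2$ — but since this term is $\op(1)$ by consistency of $\hat f_{t-1}$ and all quantities are bounded, the stated limit $\Psi(\alpha)$ is unaffected.
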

Besides, By replacing the true logging policy $\pi_t$ with its estimator $\hat{g}_{t-1}$, \citet{kato2020theoreticalcomparison} proposed an ADR estimator $\widehat{R}^{\mathrm{ADR}}_T(\epol)$ defined as 
\begin{align*}
\frac{1}{T}\sum^T_{t=1}\sum^K_{a=1}\Bigg\{\frac{\epol(a\mid X_t)\mathbbm{1}[A_t=a]\left(Y_t - \hat{f}_{t-1}(a, X_t)\right) }{\hat{g}_{t-1}(a\mid X_t)}  + \epol(a\mid X_t)\hat{f}_{t-1}(a, X_t)\Bigg\}.
\end{align*}
\citet{kato2020theoreticalcomparison} showed Proposition~\ref{prp:asymp_dist_adr} using sample-fitting, which is also used in \citet{Laan2014onlinetml}. First, we assume boundedness and consistencies of nuisance estimators $\hat{f}_{t-1}$ and $\hat{g}_{t-1}$.
\begin{assumption}
\label{asm:boundedness_est}
For all $a\in \mathcal{A}$ and $x\in\mathcal{X}$, there exist constants $C_f, C_g > 0$ such that $|\hat{f}_{t-1}(a, x)| \leq C_f$ and $\left|\frac{\epol(a\mid x)}{\hat{g}_{t-1}(a\mid x)}\right| \leq C_g$ for all $a\in\mathcal{A}$, $x\in\mathcal{X}$, and $t\in\{1,2,\dots,T\}$.
\end{assumption}
\begin{assumption}
\label{asm:consistency}
For all $a\in \mathcal{A}$, $\|\hat{g}_{t-1}(a\mid X_t) - \alpha(a\mid X_t)\|_{2}=\op(1)$, $\|\hat{f}_{t-1}(a,X_t)-f^*(a,X_t)\|_2=\op(1)$.
\end{assumption}
In addition, we put the following assumption on the convergence rate.
\begin{assumption}
\label{asm:conv_rate}
For all $a\in \mathcal{A}$, $\|\hat{g}_{t-1}(a\mid X_t) - \alpha(a\mid X_t)\|_{2}\|\hat{f}_{t-1}(a,X_t)-f^*(a,X_t)\|_2=\op(t^{-1/2})$.
\end{assumption}
Then, \citet{kato2020theoreticalcomparison} proved the following proposition.
\begin{proposition}[Asymptotic normality of an ADR estimator]
\label{prp:asymp_dist_adr}
Then under Assumptions~\ref{asm:outcome_boundedness}--\ref{asm:conv_rate}, for the ADR estimator, 
\begin{align*}
\sqrt{T}\left(\widehat{R}^{\mathrm{ADR}}_T(\epol)-R(\epol)\right)\xrightarrow{d}\mathcal{N}\left(0, \Psi(\alpha)\right).
\end{align*}
\end{proposition}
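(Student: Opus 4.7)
The plan is to reduce Proposition~\ref{prp:asymp_dist_adr} to the already-known Proposition~\ref{prp:asymp_dist_a2ipw} by adding and subtracting the A2IPW estimator, writing
\begin{align*}
\sqrt{T}\prns{\widehat{R}^{\mathrm{ADR}}_T(\epol)-R(\epol)} = \sqrt{T}\prns{\widehat{R}^{\mathrm{A2IPW}}_T(\epol)-R(\epol)} + \frac{1}{\sqrt{T}}\sum_{t=1}^{T} U_t,
\end{align*}
where
\begin{align*}
U_t = \sum_{a=1}^{K} \epol(a\mid X_t)\mathbbm{1}[A_t=a]\prns{Y_t - \hat{f}_{t-1}(a, X_t)}\bracks{\frac{1}{\hat{g}_{t-1}(a\mid X_t)} - \frac{1}{\pi_t(a\mid X_t, \Omega_{t-1})}}.
\end{align*}
Proposition~\ref{prp:asymp_dist_a2ipw} supplies $\mathcal{N}(0, \Psi(\alpha))$ for the first summand, so the task reduces to showing that the remainder is $\op(1)$ and invoking Slutsky's theorem.

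I would split $U_t = U_t^{(1)} + U_t^{(2)}$ by inserting $\pm f^*(a, X_t)$ into the residual $Y_t - \hat{f}_{t-1}(a, X_t)$. The term $U_t^{(1)}$, carrying the fresh residual $Y_t - f^*(A_t, X_t)$, is a martingale difference sequence with respect to the filtration generated by $\Omega_{t-1}$: both $\hat{g}_{t-1}$ and $\hat{f}_{t-1}$ are $\Omega_{t-1}$-measurable, and unconfoundedness combined with the stable-unit-treatment-value assumption gives $\E[Y_t - f^*(A_t, X_t)\mid A_t, X_t, \Omega_{t-1}]=0$. A direct second-moment calculation, invoking the uniform boundedness of $\epol/\hat{g}_{t-1}$ and $\epol/\pi_t$ from Assumptions~\ref{asm:old_assumption} and \ref{asm:boundedness_est} together with the $L^2$-convergences $\hat{g}_{t-1}\to\alpha$ (Assumption~\ref{asm:consistency}) and $\pi_t\to\alpha$ (Assumption~\ref{asm:old_assumption}, upgraded from in-probability to $L^2$ by bounded convergence), yields $\E[(U_t^{(1)})^2]=o(1)$. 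Cesaro summation then gives $\mathrm{Var}\prns{\tfrac{1}{\sqrt{T}}\sum_t U_t^{(1)}} = \tfrac{1}{T}\sum_t \E[(U_t^{(1)})^2] = o(1)$, so Chebyshev delivers the desired $\op(1)$ bound.

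For $U_t^{(2)}$, which carries the nuisance error $f^*(a, X_t)-\hat{f}_{t-1}(a, X_t)$, Cauchy--Schwarz gives
\begin{align*}
\E|U_t^{(2)}| \leq C \sum_{a=1}^K \|\hat{f}_{t-1}(a, X_t) - f^*(a, X_t)\|_2 \cdot \|\hat{g}_{t-1}(a\mid X_t) - \pi_t(a\mid X_t, \Omega_{t-1})\|_2,
\end{align*}
and the triangle inequality $\|\hat{g}_{t-1} - \pi_t\|_2 \leq \|\hat{g}_{t-1} - \alpha\|_2 + \|\pi_t - \alpha\|_2$ splits the bound in two. The $\|\hat{g}_{t-1}-\alpha\|_2\cdot\|\hat{f}_{t-1}-f^*\|_2$ half is $\op(t^{-1/2})$ by Assumption~\ref{asm:conv_rate}, and Markov's inequality turns this into an $\op(1)$ contribution to the remainder. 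I expect the main obstacle to lie in the other half: Assumption~\ref{asm:old_assumption} only supplies pointwise convergence in probability of $\pi_t\to\alpha$ with no rate matched to $\|\hat{f}_{t-1}-f^*\|_2$. Resolving this will require either imposing a rate on $\pi_t\to\alpha$ (natural in adaptive designs, where $\pi_t$ is constructed from $\Omega_{t-1}$), or coupling the two through a sample-splitting / cross-fitting device as in \citet{Laan2014onlinetml} so that $\hat{g}_{t-1}$ effectively targets $\pi_t$ rather than $\alpha$. Once that piece is controlled, Slutsky's theorem finishes the argument.
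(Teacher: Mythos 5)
Your route is genuinely different from the paper's: you reduce the claim to Proposition~\ref{prp:asymp_dist_a2ipw} by adding and subtracting the A2IPW estimator and then controlling the difference term $T^{-1/2}\sum_t U_t$, whereas the paper (here for Theorem~\ref{thm:asymp_normal}, and in the cited source for this proposition) centers on the oracle quantity $\ddot{R}_T(\epol)$ built from $\alpha$ and $f^*$, proves its asymptotic normality from scratch via the martingale CLT (Lemma~\ref{LEM:2}), and shows $\sqrt{T}(\widehat{R}^{\mathrm{ADR}}_T-\ddot{R}_T)=\op(1)$ by splitting into a centered martingale part and a conditional-bias part (Lemma~\ref{LEM:1}). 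Your treatment of $U^{(1)}_t$ is correct and mirrors the paper's Step~1: it is an MDS by unconfoundedness, its cross terms vanish, and the Ces\`aro argument with boundedness plus the $L^r$ convergence theorem gives the $\op(1)$ conclusion. The reduction to Proposition~\ref{prp:asymp_dist_a2ipw} buys you the entire martingale CLT verification for free, which is economical.

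The gap is the one you flag yourself, and it is real: after Cauchy--Schwarz and the triangle inequality, the piece
\begin{align*}
\frac{1}{\sqrt{T}}\sum^T_{t=1}\big\|\pi_t(a\mid X,\Omega_{t-1})-\alpha(a\mid X)\big\|_2\,\big\|f^*(a,X)-\hat{f}_{t-1}(a,X)\big\|_2
\end{align*}
is not controlled by Assumptions~\ref{asm:old_assumption}--\ref{asm:conv_rate}: Assumption~\ref{asm:old_assumption} gives only convergence in probability of $\pi_t$ with no rate, and Assumption~\ref{asm:conv_rate} pairs the rate condition with $\hat{g}_{t-1}-\alpha$, not $\pi_t-\alpha$. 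There is no martingale structure to exploit here either, since conditional on $\Omega_{t-1}$ the whole expression is a deterministic integral over $X$, so you cannot center it away as you did for $U^{(1)}_t$. Note that this is exactly the term the paper itself cannot dispose of for free: in the proof of Lemma~\ref{LEM:1} the corresponding cross term $\frac{1}{T}\sum_t\mathbb{E}[\epol(\pi_t/\hat{g}_{t-1}-1)(f^*-\hat{f}_{t-1})\mid\Omega_{t-1}]$ is handled by directly \emph{assuming} it is $\op(T^{-1/2})$ (Assumption~\ref{asm:stationarity}), and the MADR construction with Assumption~\ref{asm:mean_stat2} exists precisely to let $\hat{f}$ be frozen so that the sum over $t$ can be moved inside. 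So you have identified the genuine technical crux, but as a proof of Proposition~\ref{prp:asymp_dist_adr} from the stated hypotheses the argument is incomplete; closing it requires either a rate on $\pi_t\to\alpha$ matched to $\|\hat{f}_{t-1}-f^*\|_2$ (i.e., strengthening Assumption~\ref{asm:conv_rate} to involve $\pi_t$), or an additional stationarity-type condition of the kind the paper imposes in Assumption~\ref{asm:stationarity}.
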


\paragraph{Sample splitting and Donsker's condition.}
When estimating $f^*$ and $\frac{1}{t}\sum^t_{s=1}\pi_s$, we only use $\Omega_{t-1}$. Owing to this construction, we can derive the asymptotic normality of the semiparametric estimator without Donsker's condition. This technique is a variant of sample-splitting \citep{klaassen1987,ZhengWenjing2011CTME,ChernozhukovVictor2018Dmlf}. See \citet{Laan2014onlinetml} and \citet{kato2020theoreticalcomparison} for more details. 

\subsection{Conditions for Asymptotic Normality}
We consider a class of OPV estimators $\hat{R}_T$ such that there exists a function $\phi:\mathcal{X}\times\mathcal{A}\times \mathbb{R}\to \mathbb{R}$ satisfying 
\begin{align*}
\sqrt{T}\left(\hat{R}_T(\epol)\right) = \frac{1}{\sqrt{T}}\sum^T\phi(X_t, A_t, Y_t) + \op(1).
\end{align*}
Such an estimator $\hat{R}_t$ and function $\phi$ are called asymptotically linear estimator and influence function, respectively. If samples are i.i.d., an asymptotically linear estimator has an asymptotic normality as $\sqrt{T}\left(\hat{R}_T(\epol)\right)\to\mathcal{N}(0, \mathbb{E}\left[\phi(X_t, A_t, Y_t)\phi(X_t, A_t, Y_t)\right])$. However, when samples are dependent, we need to carefully consider the condition for asymptotic normality of $\frac{1}{\sqrt{T}}\sum^T\phi(X_t, A_t, Y_t)$. A standard strategy is to apply martingale CLT to $\frac{1}{\sqrt{T}}\sum^T\phi(X_t, A_t, Y_t)$. For a martingale difference sequence (MDS), the martingale CLT is provided as follows.
\begin{proposition}
\label{prp:marclt}[CLT for a martingale difference sequence, \citet{GVK126800421}, Proposition~7.9, p.~194] Let $\{R_t\}^\infty_{t=1}$ be a scalar martingale difference sequence with $\overline{R}_T=\frac{1}{T}\sum^T_{t=1}R_t$. Suppose that 
\begin{description}
\item[(a)] $\mathbb{E}[R^2_t] = \sigma^2_t$, a positive value with $(1/T)\sum^T_{t=1}\sigma^2_t\to\sigma^2$, a positive value; 
\item[(b)] $\mathbb{E}[|R_t|^r] < \infty$ for some $r>2$;
\item[(c)] $(1/T)\sum^{T}_{t=1}R^2_t\xrightarrow{p}\sigma^2$. 
\end{description}
Then $\sqrt{T}\overline{R}_T\xrightarrow{d}\mathcal{N}(\bm{0}, \sigma^2)$.
\end{proposition}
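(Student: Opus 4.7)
The plan is to cast $\sqrt{T}\overline{R}_T=\sum_{t=1}^T R_t/\sqrt{T}$ as the terminal value of a martingale-difference triangular array and invoke a standard triangular-array martingale CLT. Setting $X_{T,t}:=R_t/\sqrt{T}$ with natural filtration $\mathcal{F}_t:=\sigma(R_1,\dots,R_t)$, the MDS hypothesis gives $\E[X_{T,t}\mid\mathcal{F}_{t-1}]=0$, so $\{X_{T,t},\mathcal{F}_t\}$ is a martingale-difference array and $\sum_{t=1}^T X_{T,t}=\sqrt{T}\overline{R}_T$. It therefore suffices to verify the two usual hypotheses: convergence of the quadratic variation to a deterministic positive limit and a Lindeberg-type negligibility condition.

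First, the quadratic variation $\sum_{t=1}^T X_{T,t}^2=(1/T)\sum_{t=1}^T R_t^2\xrightarrow{p}\sigma^2$ is delivered immediately by assumption (c). Assumption (a) plays a supporting role: it identifies $\sigma^2$ as the Ces\`aro average $\lim_T(1/T)\sum_t\sigma_t^2$ and guarantees that this limit is strictly positive. For the Lindeberg condition $\sum_{t=1}^T\E\bracks{X_{T,t}^2\mathbbm{1}\{|X_{T,t}|>\epsilon\}}\to 0$, I would invoke the moment condition (b) together with a Markov bound: for any $\epsilon>0$ and any $r>2$ for which (b) holds,
\begin{align*}
\E\bracks{X_{T,t}^2\mathbbm{1}\{|X_{T,t}|>\epsilon\}}\;\leq\;\epsilon^{2-r}\,\E\bracks{|X_{T,t}|^r}\;=\;\epsilon^{2-r}\,T^{-r/2}\,\E\bracks{|R_t|^r},
\end{align*}
so setting $M:=\sup_t\E[|R_t|^r]$ one obtains the total bound $\epsilon^{2-r} M\, T^{1-r/2}$, which vanishes because $r>2$. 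An unconditional Lindeberg condition implies its conditional counterpart, so both hypotheses of the martingale CLT are in place and $\sqrt{T}\overline{R}_T\xrightarrow{d}\mathcal{N}(0,\sigma^2)$ follows.

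The main obstacle is the implicit uniformity hidden in assumption (b): as literally stated, (b) only asserts pointwise finiteness of $\E[|R_t|^r]$, yet the Markov argument above requires a uniform bound $\sup_t\E[|R_t|^r]<\infty$. In textbook treatments this is standardly strengthened to a uniform moment bound, and I would adopt that reading; alternatively, a truncation of $R_t$ at a level growing suitably with $T$, combined with (a) and (c), can be used to recover the Lindeberg condition. Aside from this bookkeeping point, the remaining steps are routine applications of the cited martingale CLT.
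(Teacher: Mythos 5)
The paper offers no proof of this proposition: it is imported verbatim from the cited textbook (Proposition~7.9, p.~194 of \citet{GVK126800421}, which in turn restates White's martingale CLT), so there is no internal argument to compare yours against. On its own merits, your sketch follows the standard route --- recast $\sqrt{T}\overline{R}_T$ as the row sum of the martingale-difference array $X_{T,t}=R_t/\sqrt{T}$, use (c) for the convergence of $\sum_t X_{T,t}^2$, and derive the Lindeberg condition from the higher-moment assumption via the bound $\E[X^2\mathbbm{1}\{|X|>\epsilon\}]\leq \epsilon^{2-r}\E[|X|^r]$ --- and the steps you give are correct, including the observation that a nonnegative sum with vanishing expectation converges to zero in probability, so the unconditional Lindeberg condition yields the conditional one. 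The caveat you raise is genuine and well spotted: as literally stated, (b) asserts only pointwise finiteness of $\E[|R_t|^r]$, which does not suffice for your Markov bound; the original source requires $\E[|R_t|^r]\leq \Delta<\infty$ uniformly in $t$, and the textbook restatement reproduced here silently drops that uniformity. Your proposed fix (read (b) as a uniform bound, or truncate) is the standard repair, and indeed the paper only ever applies the proposition to uniformly bounded summands, for which the issue is moot.
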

Here, the martingale CLT requires a mean outcome estimator a asymptotically constant variance, and there are several directions to construct estimators satisfying the conditions. For instance, convergence of the logging policy is an instance. In this paper, because the assumption is too restrictive, we consider more practical assumptions. 

\subsection{Related Work}
Compared with studies on mean outcome estimation for i.i.d. samples \citep{Horvitz1952,HahnJinyong1998OtRo,hirano2003efficient,Bang2005,dudik2011doubly,narita2019counterfactual,Bibaut2019moreffficient}. there are fewer studies on mean outcome estimation for not i.i.d. samples. When the logging policy converges, \citet{Laan2008TheCA}, \citet{Laan2014onlinetml}, and \citet{Luedtke2016} proposed IPW, AIPW, and DR type estimators mainly for ATE estimation. \citet{Laan2008TheCA} proposed an A2IPW estimator. \citet{Laan2014onlinetml} only implied a possibility of an ADR estimator, and \citet{kato2020theoreticalcomparison} showed it. Without the convergence assumption, the asymptotic normality still can be derived based on batched samples \citep{Hahn2011,Laan2014onlinetml} and standardization \citep{Luedtke2016}. Because an A2IPW estimator is unstable, \citet{hadad2019} proposed a stabilization method for an A2IPW estimator, and \citet{kato2020theoreticalcomparison} empirically showed that an ADR estimator is more stable than an A2IPW estimator, which has the same asymptotic distribution.

A semiparametric estimator usually requires Donsker's condition for its $\sqrt{N}$-consistency, where $N$ is a sample size \citep{bickel98}. For semiparametric inference without Donsker's condition, sample-splitting is a typical approach \citep{klaassen1987,ZhengWenjing2011CTME,ChernozhukovVictor2018Dmlf}, which is also referred to as \emph{cross-fitting}. As a variant of the sample-splitting for time-series, \citet{Laan2014onlinetml} and \citet{kato2020theoreticalcomparison} proposed \emph{adaptive-fitting}. \citet{Kallus2019IntrinsicallyES} also proposed mixingale-based sample-splitting. 

Finally, we introduce existing studies in other related topics. Adaptive importance sampling is a sample selection framework for efficient Monte Carlo simulation, similar to adaptive experiments \citep{Kloek1978,NAYLOR1988103,evans1988,oh1992,Cappe2008,Portier2018ais}. In causal inference, the conditional mean outcome is also a standard target, and \citet{Laan2008TheCA} and \citet{kelly2020batched} proposed methods for estimating it. The method of \citet{kelly2020batched} is a variant of the generalized method of moments for martingales \citep{hayashi2000}, which is also applied in \citet{Laan2014onlinetml}. \citep{Li2010,Li2011} proposed off-policy policy evaluation of (adaptive) MAB algorithm using i.i.d. samples generated from a random policy. 

\section{ADR Estimator when Average logging policy Converges}
\label{sec:adr}
Let us consider a situation where $\frac{1}{t}\sum^t_{s=1}\pi_s(a\mid x, \Omega_{s-1})$ converges to $\alpha(a\mid x)$  for all $x\in\mathcal{X}$ and $\{\Omega_{s-1}\}^t_{s=1}$ in probability. For this case, we consider an adaptive DR (ADR) estimator $\widehat{R}^{\mathrm{ADR}}_T(\epol)$ defined as
\begin{align*}
\frac{1}{T}\sum^T_{t=1}\sum^K_{a=1}\Bigg\{\frac{\epol(a\mid X_t)\mathbbm{1}[A_t=a]\left(Y_t - \hat{f}_{t-1}(a, X_t)\right) }{\hat{g}_{t-1}(a\mid X_t)} + \epol(a\mid X_t)\hat{f}_{t-1}(a, X_t)\Bigg\},
\end{align*}
where $\hat{g}_{t-1}(a\mid x)$ is an estimator of $\frac{1}{t}\sum^t_{s=1}\pi_s(a\mid x, \Omega_{s-1})$ or $\alpha(a\mid x)$, which is constructed only from $\Omega_{t-1}$. For instance, when minimizing the risk with the logistic loss, we can show that the solution is given as $\frac{1}{t}\sum^t_{s=1}\pi_s$. Let us consider the following risk of binary classification problem:
\begin{align*}
\int \frac{1}{t}\sum^t_{s=1}\Big(p(a=1)\log(h(x)) p_s(x\mid a=1) + p(a=2)\log(1-h(x)) p_s(x\mid a=2)\Big) dx,
\end{align*} 
where $p_s(x\mid a=1)$ is the conditional density of $x$ at the period $t$. By taking the derivative and first order condition, the minimizer is given as $h^*(x) = \frac{\frac{1}{t}\sum^t_{s=1}p_s(x\mid a=1)}{p(x)}=\frac{1}{t}\pi_s(a\mid x, \Omega_{s-1})$. By the law of large numbers for martingales, the risk can be approximated by 
\begin{align*}
\frac{1}{t}\sum^t_{s=1}\Big(\mathbbm{1}[A_s = 1]\log(h(X_s)) + \mathbbm{1}[A_s = 2]\log(1-h(X_s))\Big).
\end{align*} 
Therefore, by naively applying the logistic regression for $\{(X_s, A_s)\}^t_{s=1}$, we can obtain the consistent estimator $\hat{g}$. This estimator

\subsection{Convergence of the Average logging policy}
In this paper, we show that even though the assumption does not hold, we can derive the asymptotic normality of a mean outcome estimator under the following alternative assumption that the average logging policy converges in probability. 

\begin{assumption}\label{asm:mean_stat}
For all $a\in \mathcal{A}$, as $ t \to \infty$,
\begin{align*}
&\Big\|\frac{1}{t}\sum^t_{s=1}\pi_s(a\mid X, \Omega_{s-1})- \bar{\alpha}(a\mid X)\Big\|_2 \Big\|f^*(a,X)-\hat{f}_{t-1}(a,X)\Big\|_2=\op(t^{-1/2}),\\
&\|\hat{g}_{t-1}(a\mid X_t) - \bar{\alpha}(a\mid X_t)\|_{2}\|f^*(a,X_t)-\hat{f}_{t-1}(a,X_t)\|_2=\op(t^{-1/2}),
\end{align*}
where $\bar{\alpha}: \mathcal{A}\times\mathcal{X}\to(0,1)$ is a time-invariant function such that $\sum^K_{a'=1}\bar{\alpha}(a'\mid x)=1$ for all $x\in\mathcal{X}$ and there exists a constant $C_\alpha$ satisfying $\left| \frac{\epol(a\mid x)}{\alpha(a\mid x)} \right| < C_\alpha$, and the expectation of the norm is defined over $X_t$.
\end{assumption}

\begin{assumption}\label{asm:stationarity}
For all $a\in \mathcal{A}$, as $ T \to \infty$,
\begin{align*}
\Bigg|\frac{1}{T}\sum^T_{t=1} \mathbb{E}\Bigg[\epol(a\mid X)\left(\frac{\pi_{t}(a\mid X, \Omega_{t-1})}{\frac{1}{t}\sum^t_{s=1}\pi_s(a\mid X, \Omega_{s-1})} - 1\right)\left(f^*(a, X) - \hat{f}_{t-1}(a, X)\right)  \mid \Omega_{t-1}\Bigg]\Bigg| = \op(T^{-1/2}).
\end{align*}
\end{assumption}

Assumption~\ref{asm:mean_stat} is weaker than Assumption~\ref{asm:old_assumption} because Assumption~\ref{asm:mean_stat} holds under Assumption~\ref{asm:old_assumption}. Note that under Assumption~\ref{asm:mean_stat}, the logging policy $\pi_t$ can be deficient; that is $\pi_t$ can be $0$ at a period $t$. 

\subsection{Asymptotic Normality of an ADR Estimator}
\label{sec:adap_doubly_robust}
\citet{kato2020theoreticalcomparison} derived the asymptotic normality of the ADR estimator under Assumption~\ref{asm:old_assumption}. In this section, we show the asymptotic normality under Assumption~\ref{asm:mean_stat}. 
\begin{theorem}[Asymptotic normality of an ADR estimator when average logging policy converges]
\label{thm:asymp_normal}
Under Assumptions~\ref{asm:outcome_boundedness}, \ref{asm:boundedness_est}--\ref{asm:consistency} and ~\ref{asm:mean_stat}--\ref{asm:stationarity}, for the ADR estimator, $\sqrt{T}\left(\widehat{R}^{\mathrm{ADR}}_T(\epol)-R(\epol)\right)\xrightarrow{d}\mathcal{N}\left(0, \Psi(\bar{\alpha})\right)$.
\end{theorem}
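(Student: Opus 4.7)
I would write $\sqrt{T}\bigl(\widehat{R}^{\mathrm{ADR}}_T(\epol)-R(\epol)\bigr)=\tfrac{1}{\sqrt{T}}\sum_{t=1}^T(\xi_t-R(\epol))$, where $\xi_t$ denotes the $t$-th summand in the ADR estimator, and exploit the $\Omega_{t-1}$-measurability of $\hat{f}_{t-1}$ and $\hat{g}_{t-1}$ to decompose
\begin{align*}
\xi_t - R(\epol) = M_t + B_t,\qquad M_t := \xi_t-\mathbb{E}[\xi_t\mid\Omega_{t-1}],\qquad B_t := \mathbb{E}[\xi_t\mid\Omega_{t-1}]-R(\epol),
\end{align*}
so that $\{M_t\}$ is a martingale difference sequence with respect to $\{\Omega_{t-1}\}$. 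The proof then rests on three claims: (i) $\tfrac{1}{\sqrt{T}}\sum_t B_t = \op(1)$; (ii) $\tfrac{1}{T}\sum_t\mathbb{E}[M_t^2\mid\Omega_{t-1}]\xrightarrow{\mathrm{p}}\Psi(\bar\alpha)$, and $\tfrac{1}{T}\sum_t M_t^2$ shares this limit; (iii) an $L^{r}$ moment bound for some $r>2$. Proposition~\ref{prp:marclt} then delivers the stated asymptotic normality.

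\textbf{Controlling the bias.} A direct conditional calculation using $\mathbb{E}[\mathbbm{1}[A_t=a]\mid X_t,\Omega_{t-1}]=\pi_t(a\mid X_t,\Omega_{t-1})$ and $\mathbb{E}[Y_t\mid X_t,A_t=a]=f^*(a,X_t)$ collapses $B_t$ to
\begin{align*}
B_t=\mathbb{E}\!\left[\sum_{a=1}^K\epol(a\mid X_t)\bigl(f^*(a,X_t)-\hat{f}_{t-1}(a,X_t)\bigr)\!\left(\frac{\pi_t(a\mid X_t,\Omega_{t-1})}{\hat{g}_{t-1}(a\mid X_t)}-1\right)\Bigm|\Omega_{t-1}\right].
\end{align*}
Setting $\bar\pi_t(a\mid x):=\tfrac{1}{t}\sum_{s=1}^t\pi_s(a\mid x,\Omega_{s-1})$ and splitting $\pi_t/\hat{g}_{t-1}-1=\bigl(\pi_t/\hat{g}_{t-1}-\pi_t/\bar\pi_t\bigr)+\bigl(\pi_t/\bar\pi_t-1\bigr)$ yields two pieces. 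The first equals $\pi_t(\bar\pi_t-\hat{g}_{t-1})/(\hat{g}_{t-1}\bar\pi_t)$; Cauchy--Schwarz combined with the uniform bounds of Assumptions~\ref{asm:outcome_boundedness}--\ref{asm:boundedness_est} and the triangle inequality $\|\bar\pi_t-\hat{g}_{t-1}\|_2\le\|\bar\pi_t-\bar\alpha\|_2+\|\bar\alpha-\hat{g}_{t-1}\|_2$ reduce it to a product norm of order $\op(t^{-1/2})$ by Assumption~\ref{asm:mean_stat}, so its contribution to $\tfrac{1}{\sqrt{T}}\sum_t$ is $\op(1)$. The second piece is precisely the quantity averaged at rate $\op(T^{-1/2})$ by Assumption~\ref{asm:stationarity}, and multiplying by $\sqrt{T}$ also gives $\op(1)$.

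\textbf{Predictable variance and MDS CLT.} Split $\xi_t=U_t+S_t$, with $S_t:=\sum_a\epol(a\mid X_t)\hat{f}_{t-1}(a,X_t)$ and $U_t$ the weighted-residual part. Since the indicators $\{\mathbbm{1}[A_t=a]\}_a$ are mutually exclusive,
\begin{align*}
\mathbb{E}[U_t^2\mid X_t,\Omega_{t-1}]=\sum_{a=1}^K\frac{\epol(a\mid X_t)^2\pi_t(a\mid X_t,\Omega_{t-1})\bigl[v^*(a,X_t)+(f^*(a,X_t)-\hat{f}_{t-1}(a,X_t))^2\bigr]}{\hat{g}_{t-1}(a\mid X_t)^2}.
\end{align*}
Taking $X_t$-expectations, replacing $\hat{g}_{t-1}$ by $\bar\alpha$ and $\hat{f}_{t-1}$ by $f^*$ via Assumption~\ref{asm:consistency} and bounded convergence, and then averaging over $t$ while invoking $\tfrac{1}{T}\sum_t\pi_t(a\mid x,\Omega_{t-1})\to\bar\alpha(a\mid x)$ from Assumption~\ref{asm:mean_stat}, the leading term collapses to $\mathbb{E}\bigl[\sum_a\epol(a\mid X)^2 v^*(a,X)/\bar\alpha(a\mid X)\bigr]$. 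The cross-covariance $\mathrm{Cov}(U_t,S_t\mid\Omega_{t-1})$ carries a factor $f^*-\hat{f}_{t-1}\to 0$ and is negligible, whereas $\mathrm{Var}(S_t\mid\Omega_{t-1})\to\mathrm{Var}\bigl(\sum_a\epol(a\mid X)f^*(a,X)\bigr)$ supplies the second summand of $\Psi(\bar\alpha)$. The uniform bounds $C_Y,C_f,C_g$ make $|M_t|$ uniformly bounded, giving the $L^r$ clause of Proposition~\ref{prp:marclt} and, through a standard second-moment argument, upgrading the predictable variance convergence to $\tfrac{1}{T}\sum_t M_t^2\xrightarrow{\mathrm{p}}\Psi(\bar\alpha)$.

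\textbf{Main obstacle.} The most delicate step is the substitution $\hat{g}_{t-1}\mapsto\bar\alpha$ inside the predictable-variance sum: one must show $\tfrac{1}{T}\sum_t\mathbb{E}\bigl[\pi_t\bigl(\hat{g}_{t-1}^{-2}-\bar\alpha^{-2}\bigr)\mid\Omega_{t-1}\bigr]\xrightarrow{\mathrm{p}}0$, which is an averaged rather than pointwise statement and must simultaneously exploit the $L^2$-consistency in Assumption~\ref{asm:consistency} and the uniform lower bound on $\hat{g}_{t-1}$ in Assumption~\ref{asm:boundedness_est}; because $\pi_t$ itself need not converge, continuity of $g\mapsto g^{-2}$ cannot be used pathwise, so the averaging argument has to be threaded through Assumption~\ref{asm:mean_stat} with care to absorb the fluctuations of $\pi_t$ into the empirical average.
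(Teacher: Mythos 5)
Your proposal is correct in substance, but it takes a genuinely different route from the paper's. The paper first introduces the oracle statistic $\ddot{R}_T(\epol)$ obtained by replacing $\hat{g}_{t-1}$ with $\bar{\alpha}$ and $\hat{f}_{t-1}$ with $f^*$, proves $\sqrt{T}\big(\widehat{R}^{\mathrm{ADR}}_T(\epol)-\ddot{R}_T(\epol)\big)=\op(1)$ in Lemma~\ref{LEM:1} (a Chebyshev bound on the conditionally centered difference, plus exactly your bias calculation for the conditional-mean difference, split via $\pi_t/\hat{g}_{t-1}-1=(\pi_t/\hat{g}_{t-1}-\pi_t/\bar{\pi}_t)+(\pi_t/\bar{\pi}_t-1)$ and closed with Assumptions~\ref{asm:mean_stat} and \ref{asm:stationarity}), and only then applies the martingale CLT to the oracle score in Lemma~\ref{LEM:2}, whose conditional variance involves $\pi_t$ but no nuisance estimators. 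You instead center the feasible score at its own conditional mean and apply the MDS CLT to it directly, which forces the nuisance estimators into the predictable-variance identification --- precisely the ``main obstacle'' you flag. The ingredients are the same (your bias control is the paper's Step~2 of Lemma~\ref{LEM:1} verbatim; your substitution $\hat{g}_{t-1}\mapsto\bar{\alpha}$, $\hat{f}_{t-1}\mapsto f^*$ inside the variance combines the content of the paper's Lemma~\ref{LEM:1} Step~1 with Lemma~\ref{LEM:2} Step~1), so the trade-off is organizational: the paper's factorization keeps the CLT application clean at the cost of a separate asymptotic-equivalence lemma, while yours is more direct but makes the variance limit messier to extract. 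One reassurance on the obstacle you single out: since $\pi_t\leq 1$ and $\hat{g}_{t-1}$, $\bar{\alpha}$ are uniformly bounded below by Assumptions~\ref{asm:boundedness_est} and \ref{asm:mean_stat}, one has $\big|\pi_t\big(\hat{g}_{t-1}^{-2}-\bar{\alpha}^{-2}\big)\big|\leq C\,|\hat{g}_{t-1}-\bar{\alpha}|$ pathwise, so the substitution error is absorbed by $\|\hat{g}_{t-1}-\bar{\alpha}\|_2=\op(1)$ (upgraded to convergence of expectations by boundedness and the $L^r$ convergence theorem, as the paper does) without any reference to the fluctuations of $\pi_t$; the average convergence of $\pi_t$ is needed only afterwards, to identify the limit of $\frac{1}{T}\sum_{t=1}^T\E\big[\sum_a \epol(a\mid X)^2\pi_t(a\mid X,\Omega_{t-1})v^*(a,X)/\bar{\alpha}^2(a\mid X)\big]$, exactly as in the paper's Lemma~\ref{LEM:2}.
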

To show Theorem~\ref{thm:asymp_normal}, we decompose $\sqrt{T}\left(\widehat{R}^{\mathrm{ADR}}_T(\epol) - R(\epol)\right)$ as 
\begin{align*}
\sqrt{T}\Big(\widehat{R}^{\mathrm{ADR}}_T(\epol) - \ddot{R}_T(\epol) + \ddot{R}_T(\epol) - R(\epol)\Big),
\end{align*}
where $\ddot{R}_T(\epol)$ is defined as
\begin{align*}
\frac{1}{T}\sum^T_{t=1}\sum^K_{a=1}\Bigg\{\frac{\epol(a\mid X_t)\mathbbm{1}[A_t=a]\left(Y_t - f^*(a, X_t)\right) }{\bar{\alpha}(a\mid X_t)} + \epol(a\mid X_t)f^*(a, X_t)\Bigg\}.
\end{align*}
The remaining problems are to show that 
\begin{align}
\label{asymp:first}
\sqrt{T}\Big(\widehat{R}^{\mathrm{ADR}}_T(\epol) - \ddot{R}_T(\epol)\Big)=\mathrm{o}_p(1)
\end{align}
and 
\begin{align}
\label{asymp:second}
\sqrt{T}\Big(\ddot{R}_T(\epol) - R(\epol)\Big)\xrightarrow{\mathrm{d}}\mathcal{N}\left(0, \bar{\sigma}^2\right).
\end{align}
We separately show \eqref{asymp:first} and \eqref{asymp:second} in Lemma~\ref{LEM:1} and \ref{LEM:2}, respectively. First, we show Lemma~\ref{LEM:1}.
\begin{lemma}
\label{LEM:1}
Under Assumptions~\ref{asm:outcome_boundedness}, \ref{asm:boundedness_est}--\ref{asm:consistency} and ~\ref{asm:mean_stat}--\ref{asm:stationarity},
$\sqrt{T}\Big(\widehat{R}^{\mathrm{ADR}}_T(\epol) - \ddot{R}_T(\epol)\Big)=\mathrm{o}_p(1)$
\end{lemma}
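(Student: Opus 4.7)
The plan is to algebraically match $\widehat{R}^{\mathrm{ADR}}_T(\epol)$ to $\ddot{R}_T(\epol)$ by adding and subtracting the oracle quantities $f^*(a,X_t)$ and $\bar{\alpha}(a\mid X_t)$ inside the ADR summand. This rewrites each term as
\begin{align*}
\underbrace{\epol(a\mid X_t)\,\mathbbm{1}[A_t=a]\,(Y_t-\hat{f}_{t-1})\left(\tfrac{1}{\hat{g}_{t-1}}-\tfrac{1}{\bar{\alpha}}\right)}_{=: U_t(a)} \;+\; \underbrace{\epol(a\mid X_t)\,(f^*-\hat{f}_{t-1})\left(\tfrac{\mathbbm{1}[A_t=a]}{\bar{\alpha}}-1\right)}_{=: V_t(a)},
\end{align*}
with $U_t$ isolating the $\hat{g}_{t-1}$ error and $V_t$ isolating the $\hat{f}_{t-1}$ error. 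I then split $U_t = (U_t-\E[U_t\mid\Omega_{t-1}]) + \E[U_t\mid\Omega_{t-1}]$ and analogously for $V_t$, and handle the martingale-difference and conditional-mean halves in turn.

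\textbf{Martingale-difference pieces.} Using Assumptions~\ref{asm:outcome_boundedness} and \ref{asm:boundedness_est} to dominate the bounded factors, the conditional variances satisfy $\E[(U_t - \E[U_t\mid\Omega_{t-1}])^2\mid\Omega_{t-1}]\le C\|\hat{g}_{t-1}-\bar{\alpha}\|_2^2$ and $\E[(V_t - \E[V_t\mid\Omega_{t-1}])^2\mid\Omega_{t-1}]\le C\|\hat{f}_{t-1}-f^*\|_2^2$, both $\op(1)$ by Assumption~\ref{asm:consistency}. Therefore $\tfrac{1}{T}\sum_t$ of each conditional variance tends to zero in probability, and a conditional Chebyshev/martingale weak law (a WLLN counterpart to Proposition~\ref{prp:marclt}) yields $\tfrac{1}{\sqrt{T}}\sum_t(U_t-\E[U_t\mid\Omega_{t-1}])=\op(1)$ and the analogous statement for $V_t$.

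\textbf{Conditional-mean pieces.} For $U_t$, iterated expectations followed by Cauchy--Schwarz give $|\E[U_t(a)\mid\Omega_{t-1}]|\le C\,\|\hat{g}_{t-1}-\bar{\alpha}\|_2\,\|\hat{f}_{t-1}-f^*\|_2 = \op(t^{-1/2})$ by Assumption~\ref{asm:mean_stat}, and since $\sum_{t=1}^T t^{-1/2} = O(\sqrt{T})$ one obtains $\tfrac{1}{\sqrt{T}}\sum_t\E[U_t(a)\mid\Omega_{t-1}]=\op(1)$. For $V_t$, the conditional mean is an integral whose integrand contains $(\pi_t/\bar{\alpha}-1)(f^*-\hat{f}_{t-1})$, so I pivot through $\bar{\pi}_t(a\mid x):=\tfrac{1}{t}\sum_{s=1}^t\pi_s(a\mid x,\Omega_{s-1})$ via the identity
\begin{align*}
\frac{\pi_t}{\bar{\alpha}} - 1 \;=\; \left(\frac{\pi_t}{\bar{\pi}_t}-1\right)\frac{\bar{\pi}_t}{\bar{\alpha}} \;+\; \left(\frac{\bar{\pi}_t}{\bar{\alpha}}-1\right).
\end{align*}
The pure $(\pi_t/\bar{\pi}_t - 1)(f^*-\hat{f}_{t-1})$ contribution is exactly the object appearing in Assumption~\ref{asm:stationarity} and therefore averages to $\op(T^{-1/2})$. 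The two remainder pieces, both carrying a factor $(\bar{\pi}_t/\bar{\alpha}-1)$, are controlled by Cauchy--Schwarz together with uniform boundedness of $\pi_t/\bar{\pi}_t$, yielding $\op(t^{-1/2})$ via the product-rate clause of Assumption~\ref{asm:mean_stat}, and then $\op(1)$ after $\tfrac{1}{\sqrt{T}}\sum_t$.

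\textbf{Main obstacle.} The delicate step is the treatment of $\E[V_t\mid\Omega_{t-1}]$: when $\pi_t$ does not converge pointwise one cannot simply appeal to $\pi_t\to\alpha$ to make $(\pi_t/\bar{\alpha}-1)(f^*-\hat{f}_{t-1})$ vanish in each term, as the ADR argument under Assumption~\ref{asm:old_assumption} would do. The $\bar{\pi}_t$-pivot is what reroutes this into a quantity whose \emph{average} across $t$ is controlled by Assumption~\ref{asm:stationarity}, and uniform boundedness of $\pi_t/\bar{\pi}_t$ (inherited from $\bar{\alpha}$ being bounded below via Assumption~\ref{asm:boundedness_est} together with Assumption~\ref{asm:mean_stat}) is essential to tame the cross residual. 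Once this algebraic rearrangement is in place, the rest of the proof is essentially bookkeeping with the martingale laws and the product-rate condition.
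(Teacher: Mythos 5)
Your proof is correct and follows essentially the same route as the paper's: a martingale-difference part killed by showing the conditional variances are $\op(1)$ via Assumption~\ref{asm:consistency} and applying Chebyshev, plus a drift part bounded by Cauchy--Schwarz using the product-rate clauses of Assumption~\ref{asm:mean_stat} together with the pivot through $\frac{1}{t}\sum_{s=1}^{t}\pi_s$ that reduces the remaining non-product piece to Assumption~\ref{asm:stationarity}. Your $U_t/V_t$ split is only a reorganization of the paper's $\phi_1,\phi_2$ grouping (the conditional means sum to the same drift term), and the uniform boundedness of $\pi_t/(\frac{1}{t}\sum_{s}\pi_s)$ you invoke is the same implicit bound the paper uses when it controls its analogous ratio by a constant $C$.
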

Lemma~\ref{LEM:1} is proved by a technique based on sample-splitting, as \citet{Laan2014onlinetml} and \citet{kato2020theoreticalcomparison}. Here, we show the sketch of proof. The full proof is shown in Appendix~\ref{appdx:proof:target1} of the supplementary material.
\begin{proof}[Sketch of proof]
Let us define 
\begin{align*}
&\phi_1(X_t, A_t, Y_t; g, f)=\sum^K_{a=1}\frac{\epol(a\mid X_t)\mathbbm{1}[A_t=a]\left(Y_t - f(a, X_t)\right) }{g(a\mid X_t)},\ \ \ \phi_2(X_t; f)=\sum^K_{a=1}\epol(a\mid X_t)f(a, X_t).
\end{align*}
We decompose $\sqrt{T}\Big(\widehat{R}^{\mathrm{ADR}}_T(\epol) - \ddot{R}_T(\epol)\Big)$ as 
\begin{align*}
&\widehat{R}^{\mathrm{ADR}}_T(\epol) - \ddot{R}(\epol)=\frac{1}{T}\sum^T_{t=1}\Bigg\{\phi_1(X_t, A_t, Y_t; \hat{g}_{t-1}, \hat{f}_{t-1}) - \phi_1(X_t, A_t, Y_t; \bar{\alpha}, f^*)\\
&\ \ \ -\mathbb{E}\left[\phi_1(X_t, A_t, Y_t; \hat{g}_{t-1}, \hat{f}_{t-1}) - \phi_1(X_t, A_t, Y_t; \bar{\alpha}, f^*)\mid \Omega_{t-1}\right]\\
&\ \ \ + \phi_2(X_t; \hat{f}_{t-1}) - \phi_2(X_t; f^*) -\mathbb{E}\left[\phi_2(X_t; \hat{f}_{t-1}) - \phi_2(X_t; f^*)\mid \Omega_{t-1}\right]\Bigg\}\\
&\ \ \ + \frac{1}{T}\sum^T_{t=1}\mathbb{E}\left[\phi_1(X_t, A_t, Y_t; \hat{g}_{t-1}, \hat{f}_{t-1})\mid \Omega_{t-1}\right] + \frac{1}{T}\sum^T_{t=1}\mathbb{E}\left[\phi_2(X_t; \hat{f}_{t-1})\mid \Omega_{t-1}\right]\\
&\ \ \ - \frac{1}{T}\sum^T_{t=1}\mathbb{E}\left[\phi_1(X_t, A_t, Y_t; \bar{\alpha}, f^*)\mid \Omega_{t-1}\right] - \frac{1}{T}\sum^T_{t=1}\mathbb{E}\left[\phi_2(X_t; f^*)\mid \Omega_{t-1}\right].
\end{align*}
In the following parts, we separately show that
\begin{align}
\label{eq:part1_main}
&\sqrt{T}\frac{1}{T}\sum^T_{t=1}\Bigg\{\phi_1(X_t, A_t, Y_t; \hat{g}_{t-1}, \hat{f}_{t-1}) - \phi_1(X_t, A_t, Y_t; \bar{\alpha}, f^*)\\
&\ \ \ -\mathbb{E}\left[\phi_1(X_t, A_t, Y_t; \hat{g}_{t-1}, \hat{f}_{t-1}) - \phi_1(X_t, A_t, Y_t; \bar{\alpha}, f^*)\mid \Omega_{t-1}\right]\nonumber\\
&\ \ \ + \phi_2(X_t; \hat{f}_{t-1}) - \phi_2(X_t; f^*) -\mathbb{E}\left[\phi_2(X_t; \hat{f}_{t-1}) - \phi_2(X_t; f^*)\mid \Omega_{t-1}\right]\Bigg\}= \mathrm{o}_p(1);\nonumber
\end{align}
and 
\begin{align}
\label{eq:part2_main}
&\frac{1}{T}\sum^T_{t=1}\mathbb{E}\left[\phi_1(X_t, A_t, Y_t; \hat{g}_{t-1}, \hat{f}_{t-1})\mid \Omega_{t-1}\right] + \frac{1}{T}\sum^T_{t=1}\mathbb{E}\left[\phi_2(X_t; \hat{f}_{t-1})\mid \Omega_{t-1}\right]\\
&- \frac{1}{T}\sum^T_{t=1}\mathbb{E}\left[\phi_1(X_t, A_t, Y_t; \bar{\alpha}, f^*)\mid \Omega_{t-1}\right] - \frac{1}{T}\sum^T_{t=1}\mathbb{E}\left[\phi_2(X_t; f^*)\mid \Omega_{t-1}\right] = \mathrm{o}_p(1/\sqrt{T}).\nonumber
\end{align}
To show \eqref{eq:part1_main}, we show the mean of the LHS of \eqref{eq:part1_main} is $0$ and its variance converges to $0$ in probability. To show \eqref{eq:part2_main}, we use Assumption~\ref{asm:mean_stat}.
\end{proof}

Next, Lemma~\ref{LEM:2} provides the asymptotic normality of $\ddot{R}_T(\epol)$.
\begin{lemma}
\label{LEM:2}
Under Assumptions \ref{asm:outcome_boundedness} and \ref{asm:mean_stat},
$\sqrt{T}\Big(\ddot{R}_T(\epol) - R(\epol)\Big)\xrightarrow{\mathrm{d}}\mathcal{N}\left(0, \Psi(\bar{\alpha}\right)$
\end{lemma}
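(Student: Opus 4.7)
The plan is to recast $\sqrt{T}\bigl(\ddot{R}_T(\epol) - R(\epol)\bigr)$ as $\frac{1}{\sqrt{T}}\sum_{t=1}^T R_t$ where
\begin{align*}
R_t := \sum_{a=1}^K \frac{\epol(a\mid X_t)\mathbbm{1}[A_t=a]\bigl(Y_t-f^*(a,X_t)\bigr)}{\bar{\alpha}(a\mid X_t)} + \sum_{a=1}^K \epol(a\mid X_t) f^*(a,X_t) - R(\epol),
\end{align*}
and then invoke the martingale CLT, Proposition~\ref{prp:marclt}. First I would verify that $\{R_t\}$ is a martingale difference sequence with respect to the filtration generated by $\Omega_{t-1}$: conditioning on $(X_t,\Omega_{t-1})$ and using unconfoundedness $\mathbb{E}[Y_t\mid A_t=a, X_t, \Omega_{t-1}] = f^*(a,X_t)$, the weighted residual term has zero conditional mean, while the regression piece $\sum_a \epol(a\mid X_t) f^*(a,X_t)$ depends only on the i.i.d.\ covariate $X_t$ and has unconditional mean exactly $R(\epol)$, hence $\mathbb{E}[R_t\mid\Omega_{t-1}] = 0$.

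Conditions (a) and (b) of Proposition~\ref{prp:marclt} are then handled as follows. For (b), Assumption~\ref{asm:outcome_boundedness} combined with the boundedness $|\epol/\bar{\alpha}|\le C_\alpha$ built into Assumption~\ref{asm:mean_stat} makes $R_t$ uniformly bounded, so $\mathbb{E}[|R_t|^r]<\infty$ for every $r>2$. For (a), the cross term between the IPW residual and the regression part vanishes after conditioning on $X_t$, and one obtains
\begin{align*}
\mathbb{E}[R_t^2\mid\Omega_{t-1}] = \mathbb{E}_X\!\Bigg[\sum_{a=1}^K \frac{\epol^2(a\mid X)\,\pi_t(a\mid X,\Omega_{t-1})\,v^*(a,X)}{\bar{\alpha}^2(a\mid X)}\Bigg] + \mathbb{E}_X\!\Bigg[\bigg(\sum_{a=1}^K \epol(a\mid X) f^*(a,X) - R(\epol)\bigg)^2\Bigg].
\end{align*}
Taking expectations and averaging over $t$, the $L^2$-convergence of the running average $(1/T)\sum_{t=1}^T \pi_t(a\mid X,\Omega_{t-1})$ to $\bar{\alpha}(a\mid X)$ built into Assumption~\ref{asm:mean_stat}, combined with bounded convergence, yields $(1/T)\sum_t \sigma_t^2 \to \Psi(\bar{\alpha})$.

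The main obstacle is condition (c), the convergence $(1/T)\sum_t R_t^2 \xrightarrow{p} \Psi(\bar{\alpha})$. My plan is the decomposition
\begin{align*}
\frac{1}{T}\sum_{t=1}^T R_t^2 = \underbrace{\frac{1}{T}\sum_{t=1}^T \bigl(R_t^2 - \mathbb{E}[R_t^2\mid\Omega_{t-1}]\bigr)}_{(\mathrm{I})} + \underbrace{\frac{1}{T}\sum_{t=1}^T \mathbb{E}[R_t^2\mid\Omega_{t-1}]}_{(\mathrm{II})}.
\end{align*}
Term $(\mathrm{I})$ is an average of a bounded MDS (because $R_t^2$ is bounded), whose second moment is $\bigO(1/T)$ by orthogonality of MDS increments, so $(\mathrm{I})=\op(1)$. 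For term $(\mathrm{II})$, substituting the explicit conditional-variance formula from (a) reduces matters to showing $\mathbb{E}_X\bigl[h(X)\,(1/T)\sum_{t=1}^T \pi_t(a\mid X,\Omega_{t-1})\bigr] \xrightarrow{p} \mathbb{E}_X[h(X)\bar{\alpha}(a\mid X)]$ for the bounded weight $h(X) = \epol^2(a\mid X)v^*(a,X)/\bar{\alpha}^2(a\mid X)$, which follows from the $L^2$ average convergence of the logging policy via Cauchy--Schwarz. This gives $(\mathrm{II})\xrightarrow{p}\Psi(\bar{\alpha})$, and the conclusion $\sqrt{T}\bigl(\ddot{R}_T(\epol)-R(\epol)\bigr)\xrightarrow{d}\mathcal{N}(0,\Psi(\bar{\alpha}))$ follows from Proposition~\ref{prp:marclt}.
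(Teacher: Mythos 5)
Your proposal is correct and follows essentially the same route as the paper: it rewrites $\sqrt{T}\big(\ddot{R}_T(\epol)-R(\epol)\big)$ as a normalized sum of the bounded martingale difference sequence $R_t$, verifies conditions (a)--(c) of Proposition~\ref{prp:marclt} with the cross term killed by conditioning on $X_t$, and reduces the convergence of the (averaged, conditional) variances to the convergence of $\frac{1}{T}\sum_t\pi_t(a\mid X,\Omega_{t-1})$ to $\bar{\alpha}(a\mid X)$, exactly as the paper does. The only cosmetic differences are that you control the centered term $(\mathrm{I})$ by MDS orthogonality plus Chebyshev where the paper invokes its weak law of large numbers for martingales, and that you extract the average-policy convergence from Assumption~\ref{asm:mean_stat} via an $L^2$/Cauchy--Schwarz argument where the paper appeals to pointwise convergence together with uniform integrability and dominated convergence --- both readings lean on Assumption~\ref{asm:mean_stat} in the same way the paper itself does.
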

Here, we show the sketch of proof. The full proof is shown in Appendix~\ref{appdx:proof:target2} of the supplementary material.
\begin{proof}[Sketch of proof]
The proof procedure follows \citet{Kato2020adaptive}. Let $\Gamma_t(a)$ be
\begin{align*}
\Gamma_t(a; \epol) = \frac{\epol(a\mid X_t)\mathbbm{1}[A_t = a](Y_t(a) - f^*(a\mid X_t))}{\bar{\alpha}(a\mid X_t)} - \epol(a\mid X_t)f^*(a\mid X_t).
\end{align*}
Note that $\ddot{R}_T(\epol) = \frac{1}{T}\sum^T_{t=1}\sum^K_{a=1}\Gamma_t(a; \epol)$. Then, for $Z_t = \sum^K_{a=1}\Gamma_t(a; \epol) - R(\epol)$, we want to show that
\begin{align*}
\sqrt{T}\left(\ddot{R}_T(\epol) - R(\epol)\right) = \sqrt{T}\left(\frac{1}{T}\sum^T_{t=1}Z_t\right) \xrightarrow{\mathrm{d}} \mathcal{N}\left(0, \sigma^2\right).
\end{align*}

Then, the sequence $\{Z_t\}^T_{t=1}$ is an MDS; that is, 
\begin{align*}
&\mathbb{E}\big[Z_t\mid \Omega_{t-1}\big]= \mathbb{E}\left[\sum^K_{a=1}\Gamma_t(a; \epol) - R(\epol)\mid \Omega_{t-1}\right]
\\
&= \mathbb{E}\left[\sum^K_{a=1}\epol(a\mid X_t)f^*(a, X_t) - R(\epol_t)\mid \Omega_{t-1}\right]\\
&\ \ \  + \mathbb{E}\left[\sum^K_{a=1}\frac{\epol(a\mid X_t)\mathbbm{1}[A_t = a](Y_t(a) - f^*(a, X_t))}{\bar{\alpha}(A_t\mid X_t)}\mid \Omega_{t-1}\right]
\\
&= 0 + \mathbb{E}\left[\mathbb{E}\left[\sum^K_{a=1}\frac{\epol(a \mid X_t)\pi(a\mid X_t, \Omega_{t-1})(f^*(a, X_t) - f^*(a, X_t))}{\bar{\alpha}(a\mid X_t)}\mid X_t, \Omega_{t-1}\right]\mid \Omega_{t-1}\right] = 0.
\end{align*}

Therefore, to derive the asymptotic distribution, we consider applying the CLT for a MDS introduced in Proposition~\ref{prp:marclt}. There are  following three conditions in the statement.

\begin{description}
\item[(a)] $\mathbb{E}\big[Z^2_t\big] = \nu^2_t > 0$ with $\big(1/T\big) \sum^T_{t=1}\nu^2_t\to \nu^2 > 0$;
\item[(b)] $\mathbb{E}\big[|Z_t|^r\big] < \infty$ for some $r>2$;
\item[(c)] $\big(1/T\big)\sum^T_{t=1}Z^2_t\xrightarrow{\mathrm{p}} \nu^2$. 
\end{description}
Because we assumed the boundedness of $z_t$ by assuming the boundedness of $Y_t$, $f^*$, and $\epol/\bar{\alpha}$, the condition~(b) holds. Therefore, the remaining task is to show the conditions~(a) and (c) hold. Here, the convergence of the average logging policy has an important role by making th variance time-invariance asymptotically.
\end{proof}

Then, from Lemma~\ref{LEM:1} and \ref{LEM:2}, we can show Theorem~\ref{thm:asymp_normal}.

\subsection{Asymptotic Normality of a Modified ADR Estimator}
To show the asymptotic normality of an ADR estimator, we need to check Assumption~\ref{asm:stationarity}, but it is not easy in practice. In this section, we modify the ADR estimator to guarantee the asymptotic normality more easily. We define a Modified ADR (MADR) $\widetilde{R}^{\mathrm{MADR}}_T(\epol)$ as
\begin{align*}
\frac{1}{T}\sum^T_{t=1}\sum^K_{a=1}\Bigg\{\frac{\epol(a\mid X_t)\mathbbm{1}[A_t=a]\left(Y_t - \tilde{f}_{t-1, u(T)}(a, X_t)\right) }{\hat{g}_{t-1}(a\mid X_t)} + \epol(a\mid X_t)\tilde{f}_{t-1, u(T)}(a, X_t)\Bigg\},&
\end{align*}
where $u(T)$ is a function of $T$ and
\begin{align*}
\tilde{f}_{t-1, u(T)}(a, X_t) = \begin{cases}
\hat{f}_{t-1}(a, X_t) & \mathrm{if}\ t\leq u(T)\\
\hat{f}_{u(T)}(a, X_t) & \mathrm{otherwise}.
\end{cases}
\end{align*}
Then, we put the following assumption.
\begin{assumption}\label{asm:mean_stat2}
There exists a function $u(T) > 0$ such that for all $a\in \mathcal{A}$, $\frac{u(T)}{\sqrt{T}}\to 0$ as $T\to\infty$, and  for $T > u(T)$, 
\begin{align}
\label{eq:con1}
&\Bigg\|\frac{1}{T-u(T)-1}\sum^{T}_{t=u(T)+1}\frac{\pi_t(a\mid X, \Omega_{t-1})}{\hat{g}_{t-1}(a\mid X)} - 1\Bigg\|_2\Bigg\|f^*(a,X_t)-\hat{f}_{u(T)}(a,X)\Bigg\|_2=\op(T^{-1/2}),\\ 
\label{eq:con2}
&\frac{1}{T}\sum^T_{u(T)+1}\left\|\frac{1}{t}\sum^{t}_{s=1}\pi_s(a\mid X, \Omega_{s-1}) - \hat{g}_{t-1}(a\mid X_t)\right\|_2\left\|f^*(a, X) - \hat{f}_{u(T)}(a, X)\right\|_2 = \op(T^{-1/2}),
\end{align}
where $\bar{\alpha}: \mathcal{A}\times\mathcal{X}\to(0,1)$ is a time-invariant function such that $\sum^K_{a'=1}\bar{\alpha}(a'\mid x)=1$ and there exists a constant $C_\alpha$ satisfying $\left| \frac{\epol(a\mid x)}{\alpha(a\mid x)}\right| < C_\alpha$, and the expectation of the norm is over $X_t$.
\end{assumption}
Then, we show Lemma~\ref{LEM:3} on the asymptotic bias of the MADR estimator.
\begin{lemma}
\label{LEM:3}
Under Assumptions~\ref{asm:outcome_boundedness}, \ref{asm:boundedness_est}--\ref{asm:consistency} and \ref{asm:mean_stat2}, 
$\sqrt{T}\Big(\widetilde{R}^{\mathrm{ADR}}_T(\epol) - \ddot{R}_T(\epol)\Big)=\mathrm{o}_p(1)$
\end{lemma}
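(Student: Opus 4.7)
The plan is to adapt the proof of Lemma~\ref{LEM:1} to the modified estimator, replacing the per-period stationarity requirement in Assumption~\ref{asm:stationarity} by the frozen-fit device provided by the cutoff $u(T)$. Writing $\phi_t$ for the per-period difference between the summand of $\widetilde{R}^{\mathrm{MADR}}_T(\epol)$ at time $t$ and that of $\ddot{R}_T(\epol)$ at the same $t$, I would first split the sum at $u(T)$:
\begin{equation*}
\sqrt{T}\bigl(\widetilde{R}^{\mathrm{MADR}}_T(\epol)-\ddot{R}_T(\epol)\bigr) \;=\; \frac{1}{\sqrt{T}}\sum_{t=1}^{u(T)}\phi_t \;+\; \frac{1}{\sqrt{T}}\sum_{t=u(T)+1}^{T}\phi_t.
\end{equation*}
Because $|\phi_t|$ is bounded by a finite constant under Assumptions~\ref{asm:outcome_boundedness} and~\ref{asm:boundedness_est}, the first (burn-in) sum is $\mathrm{O}(u(T)/\sqrt{T})=\mathrm{o}(1)$ by the growth condition $u(T)/\sqrt{T}\to 0$ imposed in Assumption~\ref{asm:mean_stat2}.

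For the tail sum over $t>u(T)$ I would use the Doob decomposition $\phi_t=(\phi_t-\mathbb{E}[\phi_t\mid\Omega_{t-1}])+\mathbb{E}[\phi_t\mid\Omega_{t-1}]$, which separates a martingale-difference (MDS) piece from a predictable (bias) piece. The MDS piece is handled exactly as in the argument for \eqref{eq:part1_main} inside the proof of Lemma~\ref{LEM:1}: its mean is zero, and by orthogonality of MDS increments its second moment equals $T^{-1}\sum_{t=u(T)+1}^{T}\mathbb{E}[\mathrm{Var}(\phi_t\mid\Omega_{t-1})]$, which vanishes because $\phi_t\to 0$ in $L^2$ once $\hat{g}_{t-1}\to\bar{\alpha}$ and $\hat{f}_{u(T)}\to f^*$ in $L^2$; both follow from Assumption~\ref{asm:consistency} combined with the uniform boundedness of the nuisance estimators in Assumption~\ref{asm:boundedness_est}. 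Chebyshev then gives $\mathrm{o}_p(1)$.

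The key step, and the one that motivates the modification, is the bias piece. Unconfoundedness together with the crucial fact that $\hat{f}_{u(T)}$ is $\Omega_{u(T)}$-measurable (hence $\Omega_{t-1}$-measurable for every $t>u(T)$) yields
\begin{equation*}
\mathbb{E}[\phi_t\mid\Omega_{t-1}] \;=\; \sum_{a=1}^{K}\mathbb{E}_{X_t}\!\Bigl[\epol(a\mid X_t)\bigl(f^*(a,X_t)-\hat{f}_{u(T)}(a,X_t)\bigr)\Bigl(\tfrac{\pi_t(a\mid X_t,\Omega_{t-1})}{\hat{g}_{t-1}(a\mid X_t)}-1\Bigr)\Bigr].
\end{equation*}
Because $\hat{f}_{u(T)}$ does not depend on $t$, I can interchange the outer expectation with the sum over $t$, so the time-average of $\pi_t/\hat{g}_{t-1}$ appears as a single factor. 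Cauchy--Schwarz in $X$ then bounds the tail sum by
\begin{equation*}
C\,\tfrac{T-u(T)}{\sqrt{T}}\sum_{a}\bigl\|f^*(a,X)-\hat{f}_{u(T)}(a,X)\bigr\|_2 \cdot \Bigl\|\tfrac{1}{T-u(T)}\sum_{t=u(T)+1}^{T}\tfrac{\pi_t(a\mid X,\Omega_{t-1})}{\hat{g}_{t-1}(a\mid X)}-1\Bigr\|_2,
\end{equation*}
which is $\mathrm{o}_p(1)$ by condition~\eqref{eq:con1} of Assumption~\ref{asm:mean_stat2}.

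I expect the main obstacle to be the MDS variance estimate. Although $\phi_t$ converges pointwise to the zero random variable once the nuisances converge, verifying $T^{-1}\sum\mathbb{E}[\mathrm{Var}(\phi_t\mid\Omega_{t-1})]=\mathrm{o}(1)$ requires transferring the $L^2$-consistency of $\hat{f}_{u(T)}$ uniformly through the tail indices $t>u(T)$ and simultaneously controlling $\hat{g}_{t-1}-\bar{\alpha}$ in $L^2$; condition~\eqref{eq:con2} of Assumption~\ref{asm:mean_stat2}, which relates $\hat{g}_{t-1}$ to the running average $\tfrac{1}{t}\sum_{s\leq t}\pi_s$, is what supplies this control via a triangle inequality against the converging average policy. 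Conceptually, the novelty relative to Lemma~\ref{LEM:1} lies entirely in the bias step: freezing the outcome model at $u(T)$ lets the outer expectation absorb the time-sum over $t$ in one pass, reducing the bias to a Cauchy--Schwarz product against the averaged policy-ratio factor controlled by~\eqref{eq:con1}, and thereby sidesteps the per-$t$ condition Assumption~\ref{asm:stationarity} that was required in Theorem~\ref{thm:asymp_normal}.
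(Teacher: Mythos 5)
Your proposal is correct and rests on the same key device as the paper's proof: because $\hat{f}_{u(T)}$ is frozen and $\Omega_{t-1}$-measurable for all $t>u(T)$, the outer expectation over $X$ can be pulled through the time sum so that the averaged ratio $\frac{1}{T-u(T)}\sum_{t>u(T)}\bigl(\pi_t/\hat{g}_{t-1}-1\bigr)$ appears as a single $L^2$ factor, after which Cauchy--Schwarz and condition \eqref{eq:con1} of Assumption~\ref{asm:mean_stat2} finish the bias term; this is exactly how the paper handles \eqref{eq:bound:term1}. Where you differ is in the surrounding bookkeeping, and your version is cleaner: you split the sum at $u(T)$ once at the outset and kill the burn-in by crude boundedness ($Cu(T)/\sqrt{T}\to 0$), and you bound the tail bias directly from $\E[\phi_t\mid\Omega_{t-1}]=\sum_a\E_X\bigl[\epol(f^*-\hat{f}_{u(T)})(\pi_t/\hat{g}_{t-1}-1)\bigr]$ without the paper's intermediate decomposition through the running average $\frac{1}{t}\sum^t_{s=1}\pi_s$ (the paper's terms \eqref{sketch:proof:2} and \eqref{eq:bound:term2}, which are what consume condition \eqref{eq:con2}). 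As a consequence your argument establishes the lemma using only \eqref{eq:con1} together with the growth condition on $u(T)$ and the boundedness/consistency assumptions, which is a mild strengthening of the result. One small correction: you attribute the control of the martingale-difference variance to \eqref{eq:con2}, but no rate is needed there --- as in the proof of Lemma~\ref{LEM:1}, the conditional variances vanish from the plain $L^2$-consistency of Assumption~\ref{asm:consistency} (noting $\|\hat{f}_{u(T)}-f^*\|_2=\op(1)$ since $u(T)\to\infty$) combined with boundedness and the $L^r$ convergence theorem; \eqref{eq:con2} plays no role in your streamlined argument.
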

\begin{proof}[Sketch of proof]
By using $\phi_1(X_t, A_t, Y_t; g, f)$ and $\phi_2(X_t; f)=\sum^K_{a=1}\epol(a\mid X_t)f(a, X_t)$ defined in the proof of Lemma~\ref{LEM:1}, we decompose $\sqrt{T}\Big(\widehat{R}^{\mathrm{ADR}}_T(\epol) - \ddot{R}_T(\epol)\Big)$ as 
\begin{align*}
&\widehat{R}^{\mathrm{ADR}}_T(\epol) - \ddot{R}(\epol)=\frac{1}{T}\sum^T_{t=1}\Bigg\{\phi_1(X_t, A_t, Y_t; \hat{g}_{t-1}, \tilde{f}_{t-1,u(T)}) - \phi_1(X_t, A_t, Y_t; \bar{\alpha}, f^*)\\
&\ \ \ -\mathbb{E}\left[\phi_1(X_t, A_t, Y_t; \hat{g}_{t-1}, \tilde{f}_{t-1,u(T)}) - \phi_1(X_t, A_t, Y_t; \bar{\alpha}, f^*)\mid \Omega_{t-1}\right]\\
&\ \ \ + \phi_2(X_t; \tilde{f}_{t-1,u(T)}) - \phi_2(X_t; f^*)\\
&\ \ \  -\mathbb{E}\left[\phi_2(X_t; \tilde{f}_{t-1,u(T)}) - \phi_2(X_t; f^*)\mid \Omega_{t-1}\right]\Bigg\}\\
&\ \ \ + \frac{1}{T}\sum^T_{t=1}\mathbb{E}\left[\phi_1(X_t, A_t, Y_t; \hat{g}_{t-1}, \tilde{f}_{t-1,u(T)})\mid \Omega_{t-1}\right]\\
&\ \ \  + \frac{1}{T}\sum^T_{t=1}\mathbb{E}\left[\phi_2(X_t; \tilde{f}_{t-1,u(T)})\mid \Omega_{t-1}\right]\\
&\ \ \ - \frac{1}{T}\sum^T_{t=1}\mathbb{E}\left[\phi_1(X_t, A_t, Y_t; \bar{\alpha}, f^*)\mid \Omega_{t-1}\right]\\
&\ \ \  - \frac{1}{T}\sum^T_{t=1}\mathbb{E}\left[\phi_2(X_t; f^*)\mid \Omega_{t-1}\right].
\end{align*}
Following the almost same process as the proof of Lemma~\ref{LEM:1}, we can show that
\begin{align*}
&\sqrt{T}\frac{1}{T}\sum^T_{t=1}\Bigg\{\phi_1(X_t, A_t, Y_t; \hat{g}_{t-1}, \tilde{f}_{t-1,u(T)}) - \phi_1(X_t, A_t, Y_t; \bar{\alpha}, f^*)\\
&\ \ \ -\mathbb{E}\left[\phi_1(X_t, A_t, Y_t; \hat{g}_{t-1}, \tilde{f}_{t-1,u(T)}) - \phi_1(X_t, A_t, Y_t; \bar{\alpha}, f^*)\mid \Omega_{t-1}\right]\nonumber\\
&\ \ \ + \phi_2(X_t; \tilde{f}_{t-1,u(T)}) - \phi_2(X_t; f^*)\\
&\ \ \  -\mathbb{E}\left[\phi_2(X_t; \tilde{f}_{t-1,u(T)}) - \phi_2(X_t; f^*)\mid \Omega_{t-1}\right]\Bigg\}= \mathrm{o}_p(1);\nonumber
\end{align*}
Therefore, we consider showing
\begin{align}
\label{sketch:proof:1}
&\frac{1}{T}\sum^T_{t=1}\mathbb{E}\left[\phi_1(X_t, A_t, Y_t; \hat{g}_{t-1}, \tilde{f}_{t-1,u(T)})\mid \Omega_{t-1}\right]\\
&\ \ \  + \frac{1}{T}\sum^T_{t=1}\mathbb{E}\left[\phi_2(X_t; \tilde{f}_{t-1,u(T)})\mid \Omega_{t-1}\right]\nonumber\\
&\ \ \ - \frac{1}{T}\sum^T_{t=1}\mathbb{E}\left[\phi_1(X_t, A_t, Y_t; \bar{\alpha}, f^*)\mid \Omega_{t-1}\right]\\
&\ \ \  - \frac{1}{T}\sum^T_{t=1}\mathbb{E}\left[\phi_2(X_t; f^*)\mid \Omega_{t-1}\right] = \mathrm{o}_p(1/\sqrt{T}).
\end{align}
If \eqref{sketch:proof:1} holds, then we can prove the statement.

First, as shown in the proof of Lemma~\ref{LEM:1}, we bound the LHS of \eqref{sketch:proof:1} as follows:
\begin{align}
&\frac{1}{T}\sum^T_{t=1}\mathbb{E}\left[\phi_1(X_t, A_t, Y_t; \hat{g}_{t-1}, \tilde{f}_{t-1,u(T)})\mid \Omega_{t-1}\right] + \frac{1}{T}\sum^T_{t=1}\mathbb{E}\left[\phi_2(X_t; \tilde{f}_{t-1,u(T)})\mid \Omega_{t-1}\right]\nonumber\\
&\ \ \ - \frac{1}{T}\sum^T_{t=1}\mathbb{E}\left[\phi_1(X_t, A_t, Y_t; \bar{\alpha}, f^*)\mid \Omega_{t-1}\right] - \frac{1}{T}\sum^T_{t=1}\mathbb{E}\left[\phi_2(X_t; f^*)\mid \Omega_{t-1}\right] = \mathrm{o}_p(1/\sqrt{T})\nonumber\\
&\leq \Bigg|\frac{1}{T}\sum^T_{t=1} \mathbb{E}\Bigg[\frac{\epol(a\mid X)\Big(\pi_{t}(a\mid X, \Omega_{t-1}) - \hat{g}_{t-1}(a\mid X)\Big)\left(f^*(a, X) - \tilde{f}_{t-1,u(T)}(a, X)\right) }{\hat{g}_{t-1}(a\mid X)}  \mid \Omega_{t-1}\Bigg]\Bigg|\nonumber\\
&\leq \Bigg|\frac{1}{T}\sum^T_{t=1} \mathbb{E}\Bigg[\frac{\epol(a\mid X)\Big(\frac{1}{t}\sum^t_{s=1}\pi_{s}(a\mid X, \Omega_{s-1}) - \hat{g}_{t-1}(a\mid X)\Big)\left(f^*(a, X) - \tilde{f}_{t-1,u(T)}(a, X)\right) }{\hat{g}_{t-1}(a\mid X)}  \mid \Omega_{t-1}\Bigg]\Bigg|\nonumber\\
&\  + \Bigg|\frac{1}{T}\sum^T_{t=1} \mathbb{E}\Bigg[\frac{\epol(a\mid X)\Big(\pi_{t}(a\mid X, \Omega_{t-1}) - \frac{1}{t}\sum^t_{s=1}\pi_{s}(a\mid X, \Omega_{s-1})\Big)\left(f^*(a, X) - \tilde{f}_{t-1,u(T)}(a, X)\right) }{\hat{g}_{t-1}(a\mid X)}  \mid \Omega_{t-1}\Bigg]\Bigg|\nonumber\\
\label{sketch:proof:2}
&\leq \frac{C}{T}\sum^T_{t=1} \Bigg|\mathbb{E}\Bigg[ \left(\frac{1}{t}\sum^t_{s=1}\pi_{s}(a\mid X, \Omega_{s-1}) - \hat{g}_{t-1}(a\mid X)\right)\left(f^*(a, X) - \tilde{f}_{t-1,u(T)}(a, X)\right)\mid \Omega_{t-1}\Bigg]\Bigg|\\
&\ \ \ + \Bigg|\frac{1}{T}\sum^T_{t=1} \mathbb{E}\Bigg[\epol(a\mid X)\left(\frac{\pi_{t}(a\mid X, \Omega_{t-1})}{\hat{g}_{t-1}(a\mid X)} - \frac{\frac{1}{t}\sum^t_{s=1}\pi_{s}(a\mid X, \Omega_{s-1})}{\hat{g}_{t-1}(a\mid X)}\right)\nonumber\\
\label{sketch:proof:3}
&\ \ \ \ \ \ \ \ \ \ \ \ \ \ \ \ \ \ \ \ \ \ \ \ \ \ \ \ \ \ \ \ \ \ \ \ \ \ \ \ \ \ \ \ \ \ \ \ \ \ \ \ \ \ \ \ \ \ \ \ \ \ \ \ \ \ \ \ \ \ \ \ \times\left(f^*(a, X) - \tilde{f}_{t-1,u(T)}(a, X)\right) \mid \Omega_{t-1}\Bigg]\Bigg|,
\end{align}
where $C > 0$ is a constant. We can show that the first term \eqref{sketch:proof:2} converges with $\op(T^{-1/2})$. By using the \Holder's inequality $\|\mu\nu \|_1 \leq  \|\mu \|_2  \|\nu \|_2$,
\begin{align*}
& \frac{1}{T}\sum^T_{t=1} \Bigg|\mathbb{E}\Bigg[ \left(\frac{1}{t}\sum^t_{s=1}\pi_{s}(a\mid X, \Omega_{s-1}) - \hat{g}_{t-1}(a\mid X)\right)\left(f^*(a, X) - \tilde{f}_{t-1,u(T)}(a, X)\right)\mid \Omega_{t-1}\Bigg]\Bigg|\nonumber\\
& \leq \frac{1}{T}\sum^T_{t=1}\left\|\frac{1}{t}\sum^t_{s=1}\pi_{s}(a\mid X, \Omega_{s-1}) - \hat{g}_{t-1}(a\mid X)\right\|_2\left\|f^*(a, X) - \tilde{f}_{t-1,u(T)}(a, X)\right\|_2\nonumber\\
&\leq \frac{1}{T}\sum^T_{t=1}\left(\left\|\frac{1}{t}\sum^t_{s=1}\pi_{s}(a\mid X, \Omega_{s-1}) - \bar{\alpha}(a\mid X)\right\|_2 + \left\|\bar{\alpha}(a\mid X) - \hat{g}_{t-1}(a\mid X)\right\|_2\right)\nonumber\\
&\ \ \ \ \ \ \ \ \ \ \ \ \ \ \ \ \ \ \ \ \ \ \ \ \ \ \ \ \ \ \ \ \ \ \ \ \ \ \ \ \ \ \ \ \ \ \times \left\|f^*(a, X) - \tilde{f}_{t-1,u(T)}(a, X)\right\|_2\\
&= \frac{1}{T}\sum^T_{t=1}\Bigg(\left\|\frac{1}{t}\sum^t_{s=1}\pi_{s}(a\mid X, \Omega_{s-1}) - \bar{\alpha}(a\mid X)\right\|_2\left\|f^*(a, X) - \tilde{f}_{t-1,u(T)}(a, X)\right\|_2\nonumber\\
&\ \ \ \ \ \ \ \ \ \ \ \ \ \ \ \ \ \ \ \ \ \ \ \ \ \ \ \ \ \ \ \ \ \ \ \ \ \ \ \ \ \ \ \ \ \ + \left\|\bar{\alpha}(a\mid X) - \hat{g}_{t-1}(a\mid X)\right\|_2\left\|f^*(a, X) - \tilde{f}_{t-1,u(T)}(a, X)\right\|_2\Bigg).
\end{align*}
From the definition of $u(T)$, we can bound it as
\begin{align*}
&\frac{1}{T}\sum^{u(T)}_{t=1}\Bigg(\left\|\frac{1}{t}\sum^t_{s=1}\pi_{s}(a\mid X, \Omega_{s-1}) - \bar{\alpha}(a\mid X)\right\|_2\left\|f^*(a, X) - \hat{f}_{t-1}(a, X)\right\|_2\\
&\ \ \ \ \ \ \ \ \ \ \ \ \ \ \ \ \ \ \ \ \ \ \ \ \ \ \ \ \ \ \ \ \ \ \ \  + \left\|\bar{\alpha}(a\mid X) - \hat{g}_{t-1}(a\mid X)\right\|_2\left\|f^*(a, X) - \hat{f}_{t-1}(a, X)\right\|_2\Bigg)\\
&\ \ \ + \frac{1}{T}\sum^T_{t=u(T)+1}\Bigg(\left\|\frac{1}{t}\sum^t_{s=1}\pi_{s}(a\mid X, \Omega_{s-1}) - \bar{\alpha}(a\mid X)\right\|_2\left\|f^*(a, X) - \hat{f}_{u}(a, X)\right\|_2\\
&\ \ \ \ \ \ \ \ \ \ \ \ \ \ \ \ \ \ \ \ \ \ \ \ \ \ \ \ \ \ \ \ \ \ \ \  + \left\|\bar{\alpha}(a\mid X) - \hat{g}_{t-1}(a\mid X)\right\|_2\left\|f^*(a, X) - \hat{f}_{u}(a, X)\right\|_2\Bigg)\\
& \leq C \frac{u(T)}{T} + \frac{1}{T}\sum^{T}_{t=u(T)+1}\op(T^{-1/2}) +\frac{1}{T}\sum^{T}_{t=u(T)+1}\op(T^{-1/2})\\
& = \op(T^{-1/2}) + \op(T^{-1/2}) = \op(T^{-1/2}),
\end{align*}
where $C > 0$ is a constant. 

Next, we show that the second term \eqref{sketch:proof:3} converges with $\op(T^{-1/2})$. By using $u(T)$ of the statement, we have
\begin{align}
&\Bigg|\frac{1}{T}\sum^T_{t=1} \mathbb{E}\Bigg[\epol(a\mid X)\left(\frac{\pi_{t}(a\mid X, \Omega_{t-1})}{\hat{g}_{t-1}(a\mid X)} - \frac{\frac{1}{t}\sum^t_{s=1}\pi_{s}(a\mid X, \Omega_{s-1})}{\hat{g}_{t-1}(a\mid X)}\right)\nonumber\\
&\ \ \ \ \ \ \ \ \ \ \ \ \ \ \ \ \ \ \ \ \ \ \ \ \ \ \ \ \ \ \ \ \ \ \ \ \ \ \ \ \ \ \ \ \ \ \ \ \ \ \ \ \ \ \ \ \ \times\left(f^*(a, X) - \tilde{f}_{t-1,u(T)}(a, X)\right) \mid \Omega_{t-1}\Bigg]\Bigg|\nonumber\\
&\leq \Bigg|\frac{1}{T}\sum^{u(T)}_{t=1} \mathbb{E}\Bigg[\epol(a\mid X)\left(\frac{\pi_{t}(a\mid X, \Omega_{t-1})}{\hat{g}_{t-1}(a\mid X)} - \frac{\frac{1}{t}\sum^t_{s=1}\pi_{s}(a\mid X, \Omega_{s-1})}{\hat{g}_{t-1}(a\mid X)}\right)\nonumber\\
\label{sketch:proof:4}
&\ \ \ \ \ \ \ \ \ \ \ \ \ \ \ \ \ \ \ \ \ \ \ \ \ \ \ \ \ \ \ \ \ \ \ \ \ \ \ \ \ \ \ \ \ \ \ \ \ \ \ \ \ \ \ \ \ \times \left(f^*(a, X) - \tilde{f}_{t-1,u(T)}(a, X)\right) \mid \Omega_{t-1}\Bigg]\Bigg|\\
&\ \ \ + \Bigg|\frac{1}{T}\sum^T_{t=u(T)+1} \mathbb{E}\Bigg[\epol(a\mid X)\left(\frac{\pi_{t}(a\mid X, \Omega_{t-1})}{\hat{g}_{t-1}(a\mid X)} - \frac{\frac{1}{t}\sum^t_{s=1}\pi_{s}(a\mid X, \Omega_{s-1})}{\hat{g}_{t-1}(a\mid X)}\right)\nonumber\\
\label{sketch:proof:5}
&\ \ \ \ \ \ \ \ \ \ \ \ \ \ \ \ \ \ \ \ \ \ \ \ \ \ \ \ \ \ \ \ \ \ \ \ \ \ \ \ \ \ \ \ \ \ \ \ \ \ \ \ \ \ \ \ \ \times \left(f^*(a, X) - \tilde{f}_{t-1,u(T)}(a, X)\right) \mid \Omega_{t-1}\Bigg]\Bigg|.
\end{align}
Because all variables are bounded, for a constant $C > 0$, the first term \eqref{sketch:proof:4} is bounded as
\begin{align*}
&\Bigg|\frac{1}{T}\sum^{u(T)}_{t=1} \mathbb{E}\Bigg[\epol(a\mid X)\left(\frac{\pi_{t}(a\mid X, \Omega_{t-1})}{\hat{g}_{t-1}(a\mid X)} - \frac{\frac{1}{t}\sum^t_{s=1}\pi_{s}(a\mid X, \Omega_{s-1})}{\hat{g}_{t-1}(a\mid X)}\right)\\
&\ \ \ \ \ \ \ \ \ \ \ \ \ \ \ \ \ \ \ \ \ \ \ \ \ \ \ \ \ \ \ \ \ \ \ \ \ \ \ \ \ \times\left(f^*(a, X) - \tilde{f}_{t-1,u(T)}(a, X)\right) \mid \Omega_{t-1}\Bigg]\Bigg|\\
&\leq C\frac{u(T)}{T}.
\end{align*}
Here, from the definition of $u(T)$, $\frac{u(T)}{\sqrt{T}}\to 0$ as $T\to \infty$. Then, we consider bounding the second term \eqref{sketch:proof:5}. First, we bound it as
\begin{align}
&\Bigg|\frac{1}{T}\sum^T_{t=u(T)+1} \mathbb{E}\Bigg[\epol(a\mid X)\left(\frac{\pi_{t}(a\mid X, \Omega_{t-1})}{\hat{g}_{t-1}(a\mid X)} - \frac{\frac{1}{t}\sum^t_{s=1}\pi_{s}(a\mid X, \Omega_{s-1})}{\hat{g}_{t-1}(a\mid X)}\right)\nonumber\\
&\ \ \ \ \ \ \ \ \ \ \ \ \ \ \ \ \ \ \ \ \ \ \ \ \ \ \ \ \ \ \ \ \ \ \ \ \ \ \ \ \ \ \ \ \ \ \ \ \ \ \ \ \ \times\left(f^*(a, X) - \tilde{f}_{t-1,u(T)}(a, X)\right) \mid \Omega_{t-1}\Bigg]\Bigg|\nonumber\\
&=\Bigg|\frac{1}{T}\sum^T_{t=u(T)+1} \mathbb{E}\Bigg[\epol(a\mid X)\left(\frac{\pi_{t}(a\mid X, \Omega_{t-1})}{\hat{g}_{t-1}(a\mid X)} - \frac{\frac{1}{t}\sum^t_{s=1}\pi_{s}(a\mid X, \Omega_{s-1})}{\hat{g}_{t-1}(a\mid X)}\right)\nonumber\\
&\ \ \ \ \ \ \ \ \ \ \ \ \ \ \ \ \ \ \ \ \ \ \ \ \ \ \ \ \ \ \ \ \ \ \ \ \ \ \ \ \ \ \ \ \ \ \ \ \ \ \ \ \ \times\left(f^*(a, X) - \tilde{f}_{u(T)}(a, X)\right) \mid \Omega_{t-1}\Bigg]\Bigg|\nonumber\\
&=\Bigg|\frac{1}{T}\sum^T_{t=u(T)+1} \mathbb{E}\Bigg[\epol(a\mid X)\left(\frac{\pi_{t}(a\mid X, \Omega_{t-1})}{\hat{g}_{t-1}(a\mid X)} - 1\right)\left(f^*(a, X) - \tilde{f}_{u(T)}(a, X)\right) \mid \Omega_{t-1}\Bigg]\Bigg|\nonumber\\
&\ \ \ + \Bigg|\frac{1}{T}\sum^T_{t=u(T)+1} \mathbb{E}\Bigg[\epol(a\mid X)\left(1 - \frac{\frac{1}{t}\sum^t_{s=1}\pi_{s}(a\mid X, \Omega_{s-1})}{\hat{g}_{t-1}(a\mid X)}\right)\nonumber\\
&\ \ \ \ \ \ \ \ \ \ \ \ \ \ \ \ \ \ \ \ \ \ \ \ \ \ \ \ \ \ \ \ \ \ \ \ \ \ \ \ \ \ \ \ \ \ \ \ \ \ \ \ \ \times\left(f^*(a, X) - \tilde{f}_{u(T)}(a, X)\right) \mid \Omega_{t-1}\Bigg]\Bigg|\nonumber\\
\label{eq:bound:term1}
&\leq \Bigg|\frac{1}{T}\sum^T_{t=u(T)+1} \mathbb{E}\Bigg[\epol(a\mid X)\left(\frac{\pi_{t}(a\mid X, \Omega_{t-1})}{\hat{g}_{t-1}(a\mid X)} - 1\right)\nonumber\\
&\ \ \ \ \ \ \ \ \ \ \ \ \ \ \ \ \ \ \ \ \ \ \ \ \ \ \ \ \ \ \ \ \ \ \ \ \ \ \ \ \ \ \ \ \ \ \ \ \ \ \ \ \ \times\left(f^*(a, X) - \tilde{f}_{u(T)}(a, X)\right) \mid \Omega_{t-1}\Bigg]\Bigg|\\
\label{eq:bound:term2}
&\ \ \ + \frac{1}{T}\sum^T_{t=u(T)+1}  \Bigg|\mathbb{E}\Bigg[\epol(a\mid X)\left(1 - \frac{\frac{1}{t}\sum^t_{s=1}\pi_{s}(a\mid X, \Omega_{s-1})}{\hat{g}_{t-1}(a\mid X)}\right)\nonumber\\
&\ \ \ \ \ \ \ \ \ \ \ \ \ \ \ \ \ \ \ \ \ \ \ \ \ \ \ \ \ \ \ \ \ \ \ \ \ \ \ \ \ \ \ \ \ \ \ \ \ \ \ \ \ \times\left(f^*(a, X) - \tilde{f}_{u(T)}(a, X)\right) \mid \Omega_{t-1}\Bigg]\Bigg|.
\end{align}

We separately bound \eqref{eq:bound:term1} and \eqref{eq:bound:term2}. First, we bound \eqref{eq:bound:term1}. By using the \Holder's inequality $\|\mu\nu\|_1 \leq  \|\mu \|_2  \|\nu \|_2$,
\begin{align*}
&\Bigg|\frac{1}{T}\sum^T_{t=u(T)+1} \mathbb{E}\Bigg[\epol(a\mid X)\left(\frac{\pi_{t}(a\mid X, \Omega_{t-1})}{\hat{g}_{t-1}(a\mid X)} - 1\right)\left(f^*(a, X) - \tilde{f}_{u(T)}(a, X)\right) \mid \Omega_{t-1}\Bigg]\Bigg|\\
&=\Bigg|\frac{1}{T}\sum^T_{t=u(T)+1} \mathbb{E}_X\Bigg[\epol(a\mid X)\left(\frac{\pi_{t}(a\mid X, \Omega_{t-1})}{\hat{g}_{t-1}(a\mid X)} - 1\right)\left(f^*(a, X) - \tilde{f}_{u(T)}(a, X)\right)\Bigg]\Bigg|\\
&=\Bigg|\mathbb{E}_X\Bigg[\epol(a\mid X)\left(f^*(a, X) - \tilde{f}_{u(T)}(a, X)\right)\frac{1}{T}\sum^T_{t=u(T)+1}\left(\frac{\pi_{t}(a\mid X, \Omega_{t-1})}{\hat{g}_{t-1}(a\mid X)} - 1\right)\Bigg]\Bigg|\\
&\leq \frac{T-u(T)-1}{T}\Bigg\|\epol(a\mid X)\left(f^*(a, X) - \tilde{f}_{u(T)}(a, X)\right)\Bigg\|_2\\
&\ \ \ \ \ \ \ \ \ \ \ \ \ \ \ \ \ \ \ \ \ \ \ \ \ \ \ \ \ \ \ \ \ \ \ \ \ \ \ \ \ \times \Bigg\|\frac{1}{T-u(T)-1}\sum^T_{t=u(T)+1}\left(\frac{\pi_{t}(a\mid X, \Omega_{t-1})}{\hat{g}_{t-1}(a\mid X)} - 1\right)\Bigg\|_2\\
&\leq C\Bigg\|f^*(a, X) - \tilde{f}_{u(T)}(a, X)\Bigg\|_2\Bigg\|\frac{1}{T-u(T)-1}\sum^T_{t=u(T)+1}\frac{\pi_{t}(a\mid X, \Omega_{t-1})}{\hat{g}_{t-1}(a\mid X)} - 1\Bigg\|_2,
\end{align*}
where $\mathbb{E}_X$ denotes the expectation over $X$ and $C > 0$ is a constant. Then, we have $\eqref{eq:bound:term1}=\op(T^{-1/2})$ from Assumption~\ref{asm:mean_stat2}.

Second, by using Assumption~\ref{asm:mean_stat2}, we show that \eqref{eq:bound:term2} is $\op(T^{-1/2})$ as
\begin{align*}
&\frac{1}{T}\sum^T_{t=u(T)+1}  \Bigg|\mathbb{E}\Bigg[\epol(a\mid X)\left(1 - \frac{\frac{1}{t}\sum^t_{s=1}\pi_{s}(a\mid X, \Omega_{s-1})}{\hat{g}_{t-1}(a\mid X)}\right)\left(f^*(a, X) - \tilde{f}_{u(T)}(a, X)\right) \mid \Omega_{t-1}\Bigg]\Bigg|\\
&\leq \frac{C}{T}\sum^T_{t=u(T)+1}  \Bigg|\mathbb{E}\Bigg[\left(\hat{g}_{t-1}(a\mid X) - \frac{1}{t}\sum^t_{s=1}\pi_{s}(a\mid X, \Omega_{s-1})\right)\left(f^*(a, X) - \tilde{f}_{u(T)}(a, X)\right) \mid \Omega_{t-1}\Bigg]\Bigg|\\
&\leq \frac{C}{T}\sum^T_{t=u(T)+1}  \Bigg\|\hat{g}_{t-1}(a\mid X) - \frac{1}{t}\sum^t_{s=1}\pi_{s}(a\mid X, \Omega_{s-1})\Bigg\|_2\Bigg\|f^*(a, X) - \tilde{f}_{u(T)}(a, X)\Bigg\|_2.
\end{align*}
Then, the term is $\op(T^{-1/2})$ from Assumption~\ref{asm:mean_stat2}.
\end{proof}

By using Lemma~\ref{LEM:2} and \ref{LEM:3}, we can show the following theorem.
\begin{theorem}[Asymptotic normality of $\widetilde{R}^{\mathrm{MADR}}_T(\epol)$]
\label{thm:asymp_normal2}
For $u(t) > 0$ such that $\frac{u(T)}{\sqrt{T}}\to 0$, under Assumptions~\ref{asm:outcome_boundedness}, \ref{asm:boundedness_est}--\ref{asm:consistency} and \ref{asm:mean_stat2}, MADR estimator has the asymptotic normality as 
\begin{align*}
\sqrt{T}\left(\widetilde{R}^{\mathrm{MADR}}_T(\epol)-R(\epol)\right)\xrightarrow{d}\mathcal{N}\left(0, \Psi(\bar{\alpha}\right).
\end{align*}
\end{theorem}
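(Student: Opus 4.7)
The plan is to mirror the proof of Theorem~\ref{thm:asymp_normal} by decomposing
\[
\sqrt{T}\prns{\widetilde{R}^{\mathrm{MADR}}_T(\epol) - R(\epol)} = \sqrt{T}\prns{\widetilde{R}^{\mathrm{MADR}}_T(\epol) - \ddot{R}_T(\epol)} + \sqrt{T}\prns{\ddot{R}_T(\epol) - R(\epol)},
\]
where the oracle pseudo-estimator $\ddot{R}_T(\epol)$ defined in Section~\ref{sec:adap_doubly_robust} uses the true nuisances $\bar{\alpha}$ and $f^*$. The task then splits cleanly into (i) showing the first term is $\op(1)$ and (ii) identifying the asymptotic distribution of the second term; Slutsky's lemma glues the two conclusions together.

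Part~(i) is exactly Lemma~\ref{LEM:3}, whose hypotheses (Assumptions~\ref{asm:outcome_boundedness}, \ref{asm:boundedness_est}--\ref{asm:consistency}, and \ref{asm:mean_stat2}) coincide with those of the theorem. The truncation at $u(T)$ is what makes the lemma applicable: for $t > u(T)$ one may freeze the outcome model at $\hat{f}_{u(T)}$, turning the product-rate conditions \eqref{eq:con1}--\eqref{eq:con2} into usable bounds on the conditional-bias integral, while the pre-$u(T)$ block contributes only $\oO(u(T)/T) = \op(T^{-1/2})$ in light of $u(T)/\sqrt{T}\to 0$.

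Part~(ii) is delivered by Lemma~\ref{LEM:2}, which furnishes the martingale CLT from Proposition~\ref{prp:marclt} applied to the MDS $Z_t = \sum_a \Gamma_t(a;\epol) - R(\epol)$: the conditional-mean-zero property follows from unconfoundedness and an algebraic cancellation, condition~(b) from the uniform boundedness in Assumptions~\ref{asm:outcome_boundedness} and~\ref{asm:boundedness_est}, and conditions~(a) and~(c) from the average logging policy converging to $\bar{\alpha}$. The one bookkeeping wrinkle is that Lemma~\ref{LEM:2} is formally stated under Assumption~\ref{asm:mean_stat}, whereas the theorem assumes only Assumption~\ref{asm:mean_stat2}; the required convergence $\frac{1}{t}\sum^t_{s=1}\pi_s(a\mid x,\Omega_{s-1}) \xrightarrow{p} \bar{\alpha}(a\mid x)$ is recovered by the triangle split $\frac{1}{t}\sum_s \pi_s - \bar{\alpha} = (\frac{1}{t}\sum_s \pi_s - \hat{g}_{t-1}) + (\hat{g}_{t-1} - \bar{\alpha})$, with the first summand controlled by \eqref{eq:con2} and the second by Assumption~\ref{asm:consistency}.

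The heavy machinery has already been packaged in Lemmas~\ref{LEM:2} and~\ref{LEM:3}, so the only genuinely new step here is the compatibility check between Assumptions~\ref{asm:mean_stat} and~\ref{asm:mean_stat2}; once that is discharged, combining parts~(i) and~(ii) via Slutsky's lemma immediately yields $\sqrt{T}(\widetilde{R}^{\mathrm{MADR}}_T(\epol)-R(\epol)) \xrightarrow{d} \mathcal{N}(0,\Psi(\bar{\alpha}))$.
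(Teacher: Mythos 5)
Your decomposition and the way you assemble the result are exactly what the paper does: its entire proof of Theorem~\ref{thm:asymp_normal2} is the one-line observation that Lemma~\ref{LEM:3} kills the bias term $\sqrt{T}\bigl(\widetilde{R}^{\mathrm{MADR}}_T(\epol) - \ddot{R}_T(\epol)\bigr)$ and Lemma~\ref{LEM:2} supplies the limiting normal law for $\sqrt{T}\bigl(\ddot{R}_T(\epol) - R(\epol)\bigr)$, after which Slutsky finishes. Your descriptions of how each lemma operates (the $u(T)$ truncation for Lemma~\ref{LEM:3}, the martingale CLT for Lemma~\ref{LEM:2}) match the paper's arguments.

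The one place where you go beyond the paper is the ``compatibility check,'' and it is worth flagging that your patch does not quite close the gap you correctly identified. You are right that Lemma~\ref{LEM:2} is stated under Assumption~\ref{asm:mean_stat} while the theorem lists only Assumption~\ref{asm:mean_stat2}, and that what Lemma~\ref{LEM:2}'s proof actually consumes is the convergence $\frac{1}{t}\sum_{s=1}^t\pi_s(a\mid x,\Omega_{s-1})\to\bar{\alpha}(a\mid x)$ (used to verify conditions~(a) and~(c) of the martingale CLT). However, \eqref{eq:con2} is a \emph{product} rate condition: it bounds $\bigl\|\frac{1}{t}\sum_s\pi_s - \hat{g}_{t-1}\bigr\|_2\bigl\|f^*-\hat{f}_{u(T)}\bigr\|_2$, and such a bound can hold with the first factor bounded away from zero (e.g., if $\hat{f}_{u(T)}$ happens to converge fast, or equals $f^*$). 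So the triangle split $\frac{1}{t}\sum_s\pi_s-\bar{\alpha} = (\frac{1}{t}\sum_s\pi_s-\hat{g}_{t-1})+(\hat{g}_{t-1}-\bar{\alpha})$ controls the second summand via Assumption~\ref{asm:consistency} but leaves the first uncontrolled. To make the argument airtight one must either add the convergence of the average logging policy as an explicit hypothesis (which is how the paper implicitly treats it — its proof of Lemma~\ref{LEM:2} openly says ``we assumed the point-wise convergence of $\frac{1}{T}\sum_t\pi_t$'') or impose separate, non-product rate conditions on the two factors, as the paper itself does in the discussion following the theorem (the conditions \eqref{asm:term:1} with exponents $p,q,r$). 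The paper is equally silent on this mismatch, so your instinct to address it is sound; only the specific derivation from \eqref{eq:con2} fails.
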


Unlike an ADR estimator, a MADR estimator does not require Assumption~\ref{asm:stationarity}. This property is an advantage from the theoretical viewpoint. However, as shown in experiments, an ADR estimator shows better empirical performance. The remaining problem is to check that standard estimators satisfy Assumption~\ref{asm:mean_stat2}. Here, we show an example where nuisance estimators satisfy the requirement. For instance, we consider $u(T) = T^{1/3}$, which satisfies $\frac{u(T)}{\sqrt{T}}\to 0$ as $T\to\infty$. Under some conditions, sample averages and regression estimators have $\Op(T^{-1/2})$ convergence rate. Therefore, we can assume that there exist  $p, q, r<1/2$ such that 
\begin{align}
\label{asm:term:1}
&\Bigg\|\frac{1}{T-u(T)-1}\sum^{T}_{t=u(T)+1}\frac{\pi_t(a\mid X, \Omega_{t-1})}{\hat{g}_{t-1}(a\mid X)} - 1\Bigg\|_2 =\op(T^{-p}),\\
&\|f^*(a,X_t)-\hat{f}_{t-1}(a,X_t)\|_2=\op(t^{-q}),\ \ \ \left\|\bar{\alpha}(a\mid X) - \hat{g}_{t-1}(a\mid X)\right\|_2=\op(t^{-r}).\nonumber
\end{align} 
Here, note that $\Bigg\|\frac{1}{T-u(T)-1}\sum^{T}_{t=u(T)+1}\frac{\pi_t(a\mid X, \Omega_{t-1})}{\hat{g}_{t-1}(a\mid X)} - 1\Bigg\|_2$ is bounded by 
\begin{align*}
&\Bigg\|\frac{1}{\bar{\alpha}(a\mid X)}\left(\frac{1}{T-u(T)-1}\sum^{T}_{t=u(T)+1}\pi_t(a\mid X, \Omega_{t-1}) - \bar{\alpha}(a\mid X)\right)\Bigg\|_2\\
&\ \ \ + \Bigg\|\frac{1}{T-u(T)-1}\sum^{T}_{t=u(T)+1}\frac{\pi_t(a\mid X, \Omega_{t-1})\big\{\bar{\alpha}(a\mid X) - \hat{g}_{t-1}(a\mid X)\big\}}{\hat{g}_{t-1}(a\mid X)\bar{\alpha}(a\mid X)}\Bigg\|_2.
\end{align*} 
Therefore, we can assume \eqref{asm:term:1} by assuming $\Bigg\|\frac{1}{T-u(T)-1}\sum^{T}_{t=u(T)+1}\pi_t(a\mid X, \Omega_{t-1}) - \bar{\alpha}(a\mid X)\Bigg\|_2=\op(T^{-p})$ and $\Big\|\bar{\alpha}(a\mid X) - \hat{g}_{t-1}(a\mid X)\Big\|_2=\op(T^{-p})$. Then, we have $\eqref{eq:con1} = \op(T^{-p})\op(u(T)^{-q}) = \op(T^{-p})\op((T^{1/3})^{-q})=\op(T^{-p}(T^{1/3})^{-q})=\op(T^{-(p+q/3)})$. In this case, for instance, if $p=7/18 (< 1/2)$ and $q=1/3$, we obtain $\eqref{eq:con1} = \op(T^{-1/2})$. Similarly, we have $\eqref{eq:con2} =  \op(u(T)^{-q})\op(t^{-r}) = \op(T^{-q/3})\frac{1}{T}\sum^T_{t=1}\op(t^{-r})$. By a property of Riemann Zeta function, $\frac{1}{T}\sum^T_{t=1}\op(t^{-r}) = \op(T^{-r})$ Hence, if $r=7/18$ and $q=1/3$, we can show that $\eqref{eq:con2} = \op(T^{-r-q/3}) = \op(T^{-1/2})$.

\section{Monte Carlo Experiments}
\label{sec:exp}
To investigate the empirical properties of ADR and Modified ADR (MADR) estimators, we simulate two situations based on whether the logging policy converges, where the average logging policy converges in all experiments. We compare ADR and MADR estimators with an IPW estimator with the true logging policy (IPW), IPW estimator with an estimated logging policy (EIPW), AIPW estimator without cross fitting (AIPW), DM estimator (DM), DR estimator without cross fitting (DR), and A2IPW estimator with the true logging policy (A2IPW). We also consider estimators with the following form:
\begin{align*}
\frac{1}{T}\sum^T_{t=1}\sum^K_{a=1}\Bigg\{\frac{\epol(a\mid X_t)\mathbbm{1}[A_t=a]\left(Y_t - \hat{f}(a, X_t)\right) }{\frac{1}{t}\sum^{t}_{s=1}\pi_{s}(a\mid X_t, \Omega_{s-1})} + \epol(a\mid X_t)\hat{f}(a, X_t)\Bigg\}.
\end{align*}
When using $\hat{f} = \hat{f}_{t-1}$, we call it Average A2IPW (A3IPW) estimator; when using $\hat{f} = \tilde{f}_{t-1, u(T)}$, we call it an Modified A3IPW (MA3IPW) estimator. These estimators are special cases of ADR and MADR estimators where using $\frac{1}{t}\sum^{t}_{s=1}\pi_{s}(a\mid X_t, \Omega_{s-1})$ for $\hat{g}_{t-1}$. Note that only the ADR, MADR, EIPW, and DM estimators are applicable even when the true logging policy is unknown among these estimators. In addition, to the best of our knowledge, the EIPW, AIPW, and DR estimators are not shown to be asymptotically normal. When the logging policy does not converge to a time-invariant function, the A2IPW estimator also does not have the asymptotic normality. When the average logging policy converges and convergence rate conditions hold, the MADR and MA3IPW estimators have asymptotic normality. For the asymptotic normalities of ADR and MADR estimator, we need Assumption~\ref{asm:mean_stat}, which is not easy to be confirmed.

\begin{figure}[t]
\begin{center}
 \includegraphics[width=135mm]{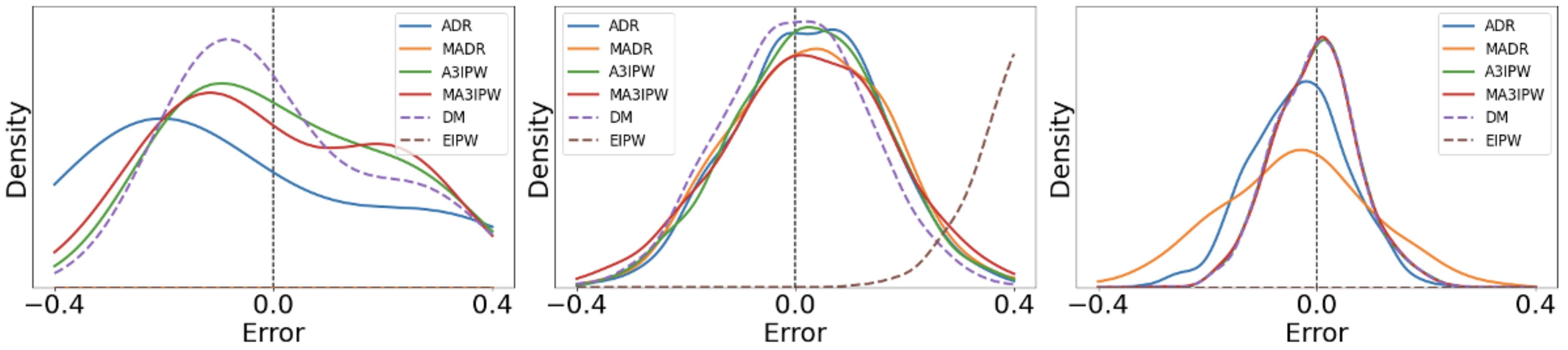}
\end{center}
\caption{This figure illustrates the error distributions of estimators in Section~\ref{sec:exp_adaptive}. We smoothed the error distributions using kernel density estimation.}
\label{fig:synthetic}
\end{figure}

\subsection{Experiments with a Fluctuating logging policy}
\label{sec:exp_adaptive}
\begin{table}[t]
\caption{Experimental results for dependent samples in adaptive efficient ATE estimation. The upper table shows the RMSEs and SDs and the lower table shows the coverage ratios of the confidence interval. We show the RMSEs, SDs, and CRs.} 
\label{tbl:exp_table_moe}
\begin{center}
\scalebox{0.9}[0.9]{
\begin{tabular}{|cl|r|r|r|}
\hline
\multicolumn{2}{|c|}{Sample size $T$} & 250 &  500 &750 \\
\hline
\multirow{3}{*}{DM} & RMSE &  0.164 &  0.132 &  0.076 \\
{} & SD  &  0.030 &  0.025 &  0.008 \\
{} & CR  &  0.000 &  0.000 &  0.000 \\
\hline
\multirow{3}{*}{EIPQ} & RMSE &  1.332 &  0.514 &  1.334 \\
{} & SD  &  0.147 &  0.120 &  0.074 \\
{} & CR  &  0.000 &  0.039 &  0.000 \\
\hline
\multirow{3}{*}{ADR} & RMSE &  0.245 &  0.138 &  0.095 \\
{} & SD  &  0.037 &  0.027 &  0.013 \\
{} & CR  &  0.125 &  0.926 &  0.452 \\
\hline
\multirow{3}{*}{MADR} & RMSE &  1.215 &  0.148 &  0.146 \\
{} & SD &  0.151 &  0.031 &  0.030 \\
{} & CR &  0.000 &  0.908 &  0.332 \\
\hline
\multirow{3}{*}{A3IPW} & RMSE &  0.173 &  0.141 &  0.077 \\
{} & SD &  0.028 &  0.028 &  0.009 \\
{} & CR &  1.000 &  0.973 &  0.952 \\
\hline
\multirow{3}{*}{MA3IPW}  & RMSE &  0.180 &  0.158 &  0.077 \\
{} & SD &  0.022 &  0.036 &  0.009 \\
{} & CR &  1.000 &  0.949 &  0.952 \\
\hline
\end{tabular}
} 
\end{center}
\vspace{-0.5cm}
\end{table}

In this section, we conduct adaptive experiments for efficient ATE estimation following \citet{Laan2008TheCA} and \citet{Hahn2011}. For brevity, we consider a situation where there are two actions and no covariates; that is, there is no sample selection bias based on $p_t(a\mid x)$. We generate a pair of potential outcomes $(Y_t(1), Y_t(2),)$, where $Y_t(a)$ is generated from the normal distribution $\mathcal{N}(a, a)$. Let us define an ATE by defining an evaluation weight as $\epol(1) = -1$ and $\epol(2) = 1$. \citet{Laan2008TheCA} and \citet{Hahn2011} showed that we can achieve the minimum asymptotic variance when choosing an action $1$ following a probability $\pi^*(1) = \frac{\sqrt{\mathrm{Var}(Y_t(1))}}{\sqrt{\mathrm{Var}(Y_t(1))} + \sqrt{\mathrm{Var}(Y_t(2))}}$ and the other action following $\pi^*(2) = 1-\pi^*(1)$. However, because we do not know $\mathrm{Var}(Y_t(a))$, we need to consider obtaining an estimator with the same asymptotic distribution as the one obtained under an optimal logging policy $\pi^*$. In this paper, we select an action with probability $1$ so that the ratio of $\sum^T_{t=1}\pi_1(1) = \sum^T_{t=1}\mathbbm{1}[A_t = 1]$ and  $\sum^T_{t=1}\pi_1(2) = \sum^T_{t=1}\mathbbm{1}[A_t = 2]$ is $\frac{\sqrt{\mathrm{Var}(Y_t(1))}}{\sqrt{\mathrm{Var}(Y_t(1))} + \sqrt{\mathrm{Var}(Y_t(2))}} : \frac{\sqrt{\mathrm{Var}(Y_t(2))}}{\sqrt{\mathrm{Var}(Y_t(1))} + \sqrt{\mathrm{Var}(Y_t(2))}}$. If the average logging policy converges to $\tilde{\alpha}(a) = \pi^*(a)$, the asymptotic distribution of and ADR estimator is the same as that of an estimator obtained when choosing an action $a$ with a probability $\pi^*(a)$. 
To keep the desirable ratio, at each period $t$, we estimate the standard deviation $\mathrm{Var}(Y_t(a))$ using $\Omega_{t-1}$. Next, we construct an estimator $\hat{\pi}^*(a)$ of $\pi^*(a)$. If $\sum^T_{t=1}\mathbbm{1}[A_t = 1]\leq \hat{\pi}^*(a)$, we choose $A_t = 1$; otherwise, $A_t = 2$. We conduct this procedure for three cases with different sample sizes $T=250, 500, 750$. We conduct $100$ trials and calculate the root MSEs (RMSEs), the standard deviations of MSEs (SDs), and the coverage ratios (CRs) of the $95\%$ confidence interval; that is, a percentage that the confidence interval covers the true value. The results are shown in Table~\ref{tbl:exp_table_moe} and Figure~\ref{fig:synthetic}. These results imply that the proposed estimators successfully estimate the mean outcome, although the EIPW estimator shows significantly bad performance. The DM estimator seems to estimate it well, but the confidence interval does not work, as the coverage ratio shows. In addition, in this experiment, we assume that there is no covariate for brevity. However, as the experiment of Section~\ref{sec:conv_logg_prob}, if there is a sample selection bias owing to a covariate-dependent logging policy, the DM estimator's performance relatively decreases. The ADR, MADR, A3IPW, and MA3IPW estimators have the same asymptotic  distribution under appropriate conditions. Although the performances of A2IPW and MA3IPW are superior to ADR and MADR estimators, the estimators are applicable only when the true logging policy is known. 

\subsection{Experiments with a Converging logging policy}
\label{sec:conv_logg_prob}
Next, we investigate the performances of the estimators for logging policies that converge to a time-invariant function. We generate an artificial pair of covariate and potential outcome $(X_t, Y_t(1), Y_t(2), Y_t(3))$. The covariate $X_t$ is a $10$ dimensional vector generated from the standard normal distribution. For $a\in\{1,2,3\}$, the potential outcome $Y_t(a)$ is $1$ if $a$ is chosen by following a probability defined as $p(a\mid x) = \frac{\exp(g(a, x))}{\sum^3_{a'}\exp(g(a', x))}$, where $g(1, x) = \sum^{10}_{d=1} X_{t,d}$, $g(2, x) = \sum^{10}_{d=1} W_dX^2_{t,d}$, and $g(3, x) = \sum^{10}_{d=1} W_d|X_{t,d}|$, where $W_d$ is uniform randomly chosen from $\{-1, 1\}$. Let us generate three datasets, $\mathcal{S}^{(1)}_{T_{(1)}}$, $\mathcal{S}^{(2)}_{T_{(2)}}$, and $\mathcal{S}^{(3)}_{T_{(3)}}$, where $\mathcal{S}^{(m)}_{T_{(m)}} = \{(X^{(m)}_t, Y^{(m)}_t(1), Y^{(m)}_t(2), Y^{(m)}_t(3))\}^{T_{(m)}}_{t=1}$. Firstly, we train an logging policy $\epol$ by solving a prediction problem between $X^{(1)}_t$ and $Y^{(1)}_t(1), Y^{(1)}_t(2), Y^{(1)}_t(3)$ using the dataset $\mathcal{S}^{(1)}_{T_{(1)}}$. Then, we apply the evaluation policy $\epol$ on the independent dataset $\mathcal{S}^{(2)}_{T_{(2)}}$, and artificially construct bandit data $\{(X'_t, A'_t, Y'_t)\}^{T_{(2)}}_{t=1}$, where $A'_t$ is a chosen action from the evaluation policy and $Y^{(m)}_t = \sum^3_{a=1}\mathbbm{1}[A^{(m)}_t = a]Y^{(m)}_t(a)$. Then, we set the true policy value $R(\epol)$ as $\frac{1}{T_{(2)}}\sum^{T_{(2)}}_{t=1}Y^{(m)}_t$. Next, using the datasets $\mathcal{S}^{(3)}_{T_{(3)}}$ and a MAB algorithm, we generate a bandit dataset as $\mathcal{S}=\{(X_t, A_t, Y_t)\}^{T_{(3)}}_{t=1}$. For the dataset $\mathcal{S}$, we apply the IPW estimator with the true logging policy, IPW estimator with estimated logging policy, AIPW estimator with cross fitting, DM estimator, DR estimator with cross fitting, A2IPW estimator, and ADR estimator. For estimating $\hat{f}$ and $\hat{g}$, we use the kernelized Ridge least squares and kernelized Ridge logistic regression, respectively. We use the Gaussian kernel for the kernel, and the hyper-parameters of the regularization and the kernel are chosen from $\{0.01, 0.1, 1\}$. Let us define an estimation error as $R(\epol) - \widehat{R}(\epol)$. We conduct six experiments by changing the sample size and the MAB algorithms. For the sample size $T_{(3)}$, we use $250$, $500$, and $750$. For each sample size, we apply the LinUCB and LinTS algorithms. For the sample size $T_{(1)}$ and $T_{(2)}$, we use $1,000$ and $100,000$, respectively. For $100$ trials, we show the average root MSEs (RMSEs), the standard deviations of MSEs (SDs), and the coverage ratios (CRs) of the $95\%$ confidence interval. The results are shown in Table~\ref{tbl:exp_table1} and Figure~\ref{fig:UCBTS}. Among the estimators, the DR and ADR estimators achieve lower MSEs well, but only the ADR estimator is shown to be asymptotically normal when samples are dependent. Although the IPW estimator returns confidence intervals with a coverage ratio near $95\%$, the MSE is larger than DR-type estimators. Unlike the experiments with a fluctuating logging policy, the DM estimator does not perform owing to the estimation error of $f^*$. Note that the previous experiment, there is no covariate and no sample selection bias caused by $p_t(a\mid x)$. Hence, in this case, it is easy to estimate $f^*$. Here note that the A3IPW and MA3PIPW estimators require the true logging policy, unlike the ADR and MADR estimators.

\begin{figure}[t]
\begin{center}
 \includegraphics[width=138mm]{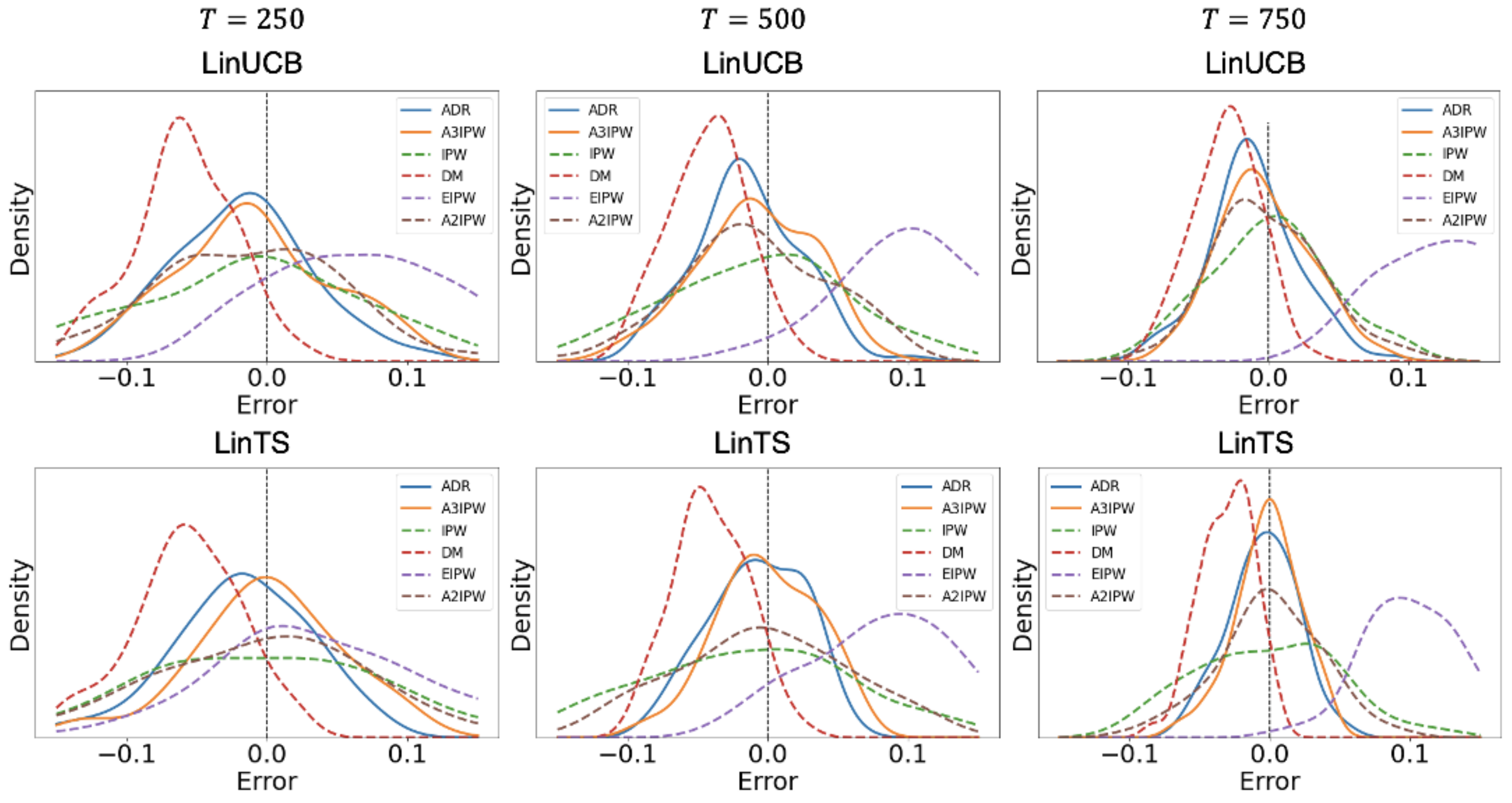}
\end{center}
\vspace{-0.3cm}
\caption{This figure illustrates the error distributions of estimators from dependent samples generated from LinUCB and LinTS policies with the sample size $250$ (left graphs), $500$  (center graphs), and $750$  (right graphs). The upper graphs show the error distributions of the LinUCB policy; the lower graphs show  the error distributions of the LinTS policy. We smoothed the error distributions using kernel density estimation.}
\label{fig:UCBTS}
\vspace{-0.3cm}
\end{figure} 

\begin{table}[t]
\caption{Experimental results of mean outcome estimation from dependent sample generated from LinUCB and LinTS policies with the sample size $500$. We show the RMSEs, SDs, and CRs.} 
\label{tbl:exp_table1}
\begin{center}
\scalebox{0.9}[0.9]{
\begin{tabular}{|cl|rr|rr|rr|}
\hline
\multicolumn{2}{|c|}{Sample size $T$} & \multicolumn{2}{c|}{250} &  \multicolumn{2}{c|}{500} & \multicolumn{2}{c|}{750} \\
\multicolumn{2}{|c|}{MAB algorithm} & LinUCB & LinTS & LinUCB & LinTS & LinUCB & LinTS\\
\hline
\multirow{3}{*}{IPW} & RMSE  &  0.080 &  0.095 &  0.064 &  0.073 &  0.043 &  0.050 \\
{} & SD  &  0.006 &  0.009 &  0.004 &  0.005 &  0.002 &  0.003 \\
{} & CR  &  0.920 &  0.910 &  0.880 &  0.860 &  0.980 &  0.950 \\
\hline
\multirow{3}{*}{DM} & RMSE  &  0.068 &  0.069 &  0.051 &  0.048 &  0.038 &  0.036 \\
{} & SD  &  0.005 &  0.005 &  0.003 &  0.002 &  0.002 &  0.001 \\
{} & CR  &  0.160 &  0.200 &  0.120 &  0.190 &  0.220 &  0.170 \\
\hline
\multirow{3}{*}{AIPW} & RMSE  &  0.056 &  0.067 &  0.048 &  0.054 &  0.036 &  0.033 \\
{} & SD  &  0.004 &  0.006 &  0.003 &  0.004 &  0.002 &  0.002 \\
{} & CR  &  0.940 &  0.910 &  0.930 &  0.870 &  0.910 &  0.980 \\
\hline
\multirow{3}{*}{A2IPW} & RMSE  &  0.066 &  0.082 &  0.052 &  0.061 &  0.038 &  0.037 \\
{} & SD &  0.006 &  0.010 &  0.004 &  0.004 &  0.002 &  0.002 \\
{} & CR &  0.920 &  0.930 &  0.930 &  0.880 &  0.960 &  0.930 \\
\hline
\multirow{3}{*}{EIPW} & RMSE &  0.093 &  0.082 &  0.118 &  0.098 &  0.131 &  0.107 \\
{} & SD &  0.010 &  0.010 &  0.013 &  0.009 &  0.012 &  0.007 \\
{} & CR &  0.770 &  0.860 &  0.330 &  0.530 &  0.130 &  0.170 \\
\hline
\multirow{3}{*}{DM} & RMSE &  0.046 &  0.045 &  0.038 &  0.033 &  0.031 &  0.023 \\
{} & SD &  0.003 &  0.004 &  0.002 &  0.001 &  0.001 &  0.001 \\
{} & CR &  0.930 &  0.910 &  0.890 &  0.960 &  0.920 &  0.960 \\
\hline
\multirow{3}{*}{ADR} & RMSE &  0.052 &  0.052 &  0.039 &  0.034 &  0.033 &  0.025 \\
{} & SD &  0.004 &  0.004 &  0.002 &  0.001 &  0.002 &  0.001 \\
{} & CR &  0.980 &  0.950 &  0.930 &  0.970 &  0.940 &  0.990 \\
\hline
\multirow{3}{*}{MADR} & RMSE &  0.100 &  0.080 &  0.117 &  0.114 &  0.119 &  0.119 \\
{} & SD &  0.016 &  0.009 &  0.015 &  0.018 &  0.015 &  0.014 \\
{} & CR &  0.800 &  0.890 &  0.470 &  0.540 &  0.360 &  0.290 \\
\hline
\multirow{3}{*}{A3IPW} & RMSE &  0.055 &  0.053 &  0.042 &  0.035 &  0.032 &  0.023 \\
{} & SD &  0.004 &  0.004 &  0.002 &  0.002 &  0.001 &  0.001 \\
{} & CR &  0.980 &  0.940 &  0.930 &  0.970 &  0.940 &  0.970 \\
\hline
\multirow{3}{*}{MA3IPW} & RMSE &  0.059 &  0.061 &  0.048 &  0.054 &  0.039 &  0.044 \\
{} & SD &  0.005 &  0.005 &  0.004 &  0.005 &  0.002 &  0.002 \\
{} & CR &  0.970 &  0.940 &  0.930 &  0.870 &  0.930 &  0.850 \\
\hline
\end{tabular}
} 
\end{center}
\vspace{-0.5cm}
\end{table}

\section{Discussion}
\label{sec:discuss}
We discuss the remaining problems. First, we consider a paradox of using an estimated logging policy. \citet{hadad2019} pointed out an A2IPW estimator's unstable behavior using samples obtained from a MAB algorithm. On the other hand, \citet{kato2020theoreticalcomparison} pointed out that the ADR estimator experimentally shows better performance than the A2IPW estimator even though their asymptotic properties are the same. This paper points out that estimating the logging policy is equivalent to estimating the average logging policy. In our experiment, directly using the average logging policy for an A2IPW estimator also improves the performance. Therefore, we conjecture that an ADR estimator's stabilization effect comes from the stability of the average logging policy.

Unlike an A2IPW estimator with the true logging policy, an ADR estimator does not suffer the deficient support problem \citep{Sachdeva2020}. In many cases of MAB algorithms, the logging policy $\pi_t$ often becomes $0$. However, even if $\pi_t$ becomes $0$, we can show the asymptotic normality under Assumption~\ref{asm:mean_stat}.

Next, in addition to the double robustness on the consistency, we explain the importance of the DR-type estimators' form. We showed the asymptotic normality only for the DR-type estimator. Readers may feel that we can show the other type estimators' asymptotic normality, such as an IPW-type estimator. However, it is not obvious how we use this paper's inference strategy to such estimators. As \citet{ChernozhukovVictor2018Dmlf} discussed, the DR-type estimators relax the condition for asymptotic normality when using sample-splitting. The asymptotic normalities of our proposed estimators are also based on this property. Therefore, the form of the DR-type estimators is also essential. 

Finally, \citet{Luedtke2016} pointed out that it is difficult to show the asymptotic normality when using a non-unique optimal treatment strategy; that is, $\pi_t$ fluctuates and does not converge. This problem is also partially solved in our proposed method if Assumption~\ref{asm:mean_stat} holds. For instance, in BAI, \citet{kaufman2016complexity} proposed an algorithm that they deterministically choose an arm with a probability $1$ to keep some optimal selection ratio of arms. In this case, in addition to the deficient support problem, there is no unique treatment strategy. However, because the algorithm attempts to keep some desirable ratio, we can apply our method under Assumption~\ref{asm:mean_stat}. We derived the asymptotically normal mean outcome estimator for dependent samples based on a new assumption that the average logging policy converges to a time-invariant function in probability. We can regard the average logging policy as a propensity score of inverse probability weighting under this setting. On the other hand, existing studies need to assume that the logging policy itself converges and use the logging policy as a propensity score. We also experimentally confirmed that the inverse weighting using the average logging policy is more stable than the A2IPW estimator, which uses the true logging policy itself.

\section{Conclusion}
We derived the asymptotically normal mean outcome estimator for dependent samples based on a new assumption that the average logging policy converges to a time-invariant function in probability. We can regard the average logging policy as a propensity score of inverse probability weighting under this setting. On the other hand, existing studies need to assume that the logging policy itself converges and use the logging policy as a propensity score. We also experimentally confirmed that the inverse weighting using the average logging policy is more stable than using the logging policy itself. 

\appendix

\bibliographystyle{chicago}
\bibliography{MeanStatOPE}

\onecolumn
\appendix

\section{Preliminaries}
\label{appdx:prelim}

\begin{definition}\label{dfn:uniint}[Uniformly Integrable, \citet{GVK126800421}, p.~191]  A sequence $\{A_t\}$ is said to be uniformly integrable if for every $\epsilon > 0$ there exists a number $c>0$ such that 
\begin{align*}
\mathbb{E}[|A_t|\cdot I[|A_t \geq c|]] < \epsilon
\end{align*}
for all $t$.
\end{definition}

\begin{proposition}\label{prp:suff_uniint}[Sufficient Conditions for Uniformly Integrable, \citet{GVK126800421}, Proposition~7.7, p.~191]  (a) Suppose there exist $r>1$ and $M<\infty$ such that $\mathbb{E}[|A_t|^r]<M$ for all $t$. Then $\{A_t\}$ is uniformly integrable. (b) Suppose there exist $r>1$ and $M < \infty$ such that $\mathbb{E}[|b_t|^r]<M$ for all $t$. If $A_t = \sum^\infty_{j=-\infty}h_jb_{t-j}$ with $\sum^\infty_{j=-\infty}|h_j|<\infty$, then $\{A_t\}$ is uniformly integrable.
\end{proposition}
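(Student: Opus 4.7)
My plan is to prove part (a) by a direct truncation argument that converts the extra $L^r$ integrability into a tail bound uniform in $t$, and then to derive part (b) by showing that the moving-average process $\{A_t\}$ inherits a uniform $L^r$ bound from $\{b_t\}$, so that it falls within the scope of part (a). Both steps are short; the only mildly delicate point is the $L^r$ bound for an infinite convolution.

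For part (a), fix $\epsilon > 0$. On the event $\{|A_t| \geq c\}$ we have $1 \leq |A_t|^{r-1}/c^{r-1}$, so
\begin{align*}
\mathbb{E}\bigl[|A_t| \cdot I[|A_t| \geq c]\bigr] \;\leq\; \frac{1}{c^{r-1}}\,\mathbb{E}\bigl[|A_t|^{r} \cdot I[|A_t| \geq c]\bigr] \;\leq\; \frac{M}{c^{r-1}}.
\end{align*}
Because $r>1$, choosing $c > (M/\epsilon)^{1/(r-1)}$ makes the right-hand side less than $\epsilon$, and this choice is uniform in $t$, which matches Definition~\ref{dfn:uniint}.

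For part (b), let $H = \sum_{j} |h_j| < \infty$. The absolute series satisfies $|A_t| \leq \sum_{j} |h_j|\,|b_{t-j}|$, and the weights $|h_j|/H$ form a probability mass function on $\mathbb{Z}$, so Jensen's inequality applied to the convex map $x \mapsto x^{r}$ yields
\begin{align*}
|A_t|^{r} \;\leq\; H^{r}\sum_{j}\frac{|h_j|}{H}\,|b_{t-j}|^{r} \;=\; H^{r-1}\sum_{j}|h_j|\,|b_{t-j}|^{r}.
\end{align*}
Taking expectations and interchanging (all terms being nonnegative, by monotone convergence) gives $\mathbb{E}[|A_t|^{r}] \leq H^{r-1}\sum_{j}|h_j|\,\mathbb{E}[|b_{t-j}|^{r}] \leq H^{r} M$, a bound independent of $t$. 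Applying part (a) to $\{A_t\}$ with the same exponent $r$ and the new constant $H^{r}M$ finishes the proof.

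The main obstacle, such as it is, lies in part (b): one must justify the Jensen step for a countably infinite (two-sided) index set and the exchange of $\mathbb{E}$ with an infinite sum. Both are resolved by expressing the partial sums and passing to the limit by monotone convergence, since all terms are nonnegative. Once the uniform $L^r$ bound on $\{A_t\}$ is established, the uniform integrability conclusion is automatic from part (a), so no new analytic ideas are needed beyond those used in (a).
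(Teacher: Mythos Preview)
Your proof is correct. Part (a) is the standard truncation/Markov-type bound, and part (b) is the standard reduction to (a) via a Jensen inequality with weights $|h_j|/H$ to obtain a uniform $L^r$ bound on the moving-average process; your handling of the infinite sum via monotone convergence is the right justification.

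However, there is nothing to compare against: the paper does not prove this proposition. It appears in Appendix~\ref{appdx:prelim} (Preliminaries) as a cited result from \citet{GVK126800421}, Proposition~7.7, and is stated without proof, alongside the $L^r$ convergence theorem and the martingale WLLN, for later use in the proofs of Lemmas~\ref{LEM:1} and~\ref{LEM:2}. So your argument is not an alternative to the paper's proof but rather a proof the paper deliberately omits by citation.
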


\begin{proposition}[$L^r$ Convergence Theorem, \citet{loeve1977probability}]
\label{prp:lr_conv_theorem}
Let $0<r<\infty$, suppose that $\mathbb{E}\big[|a_n|^r\big] < \infty$ for all $n$ and that $a_n \xrightarrow{\mathrm{p}}a$ as $n\to \infty$. The following are equivalent: 
\begin{description}
\item{(i)} $a_n\to a$ in $L^r$ as $n\to\infty$;
\item{(ii)} $\mathbb{E}\big[|a_n|^r\big]\to \mathbb{E}\big[|a|^r\big] < \infty$ as $n\to\infty$; 
\item{(iii)} $\big\{|a_n|^r, n\geq 1\big\}$ is uniformly integrable.
\end{description}
\end{proposition}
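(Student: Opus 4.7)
The plan is to prove the equivalence by establishing the cyclic implications $(i)\Rightarrow(ii)\Rightarrow(iii)\Rightarrow(i)$. Convergence in probability of $a_n$ to $a$ is given as a standing hypothesis and will be used in the last two steps.

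For $(i)\Rightarrow(ii)$, I would use the reverse triangle inequality on the $L^r$ norm. When $r\geq 1$, Minkowski's inequality yields $\bigl|\|a_n\|_r-\|a\|_r\bigr|\leq \|a_n-a\|_r$, so the $L^r$ convergence of $a_n$ to $a$ forces $\mathbb{E}[|a_n|^r]\to \mathbb{E}[|a|^r]$, and the limit is finite since $\|a\|_r\leq \|a_n\|_r+\|a_n-a\|_r<\infty$. For $0<r<1$, I instead use the elementary subadditivity $\bigl||x|^r-|y|^r\bigr|\leq |x-y|^r$ to obtain $\bigl|\mathbb{E}[|a_n|^r]-\mathbb{E}[|a|^r]\bigr|\leq \mathbb{E}[|a_n-a|^r]\to 0$.

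For $(ii)\Rightarrow(iii)$, the idea is to combine the convergence of the $r$-th moments with convergence in probability. By continuity, $|a_n|^r\to |a|^r$ in probability, and $|a|^r$ is integrable by $(ii)$. For a truncation level $c>0$, write
\begin{equation*}
\mathbb{E}\bigl[|a_n|^r\mathbbm{1}[|a_n|^r>c]\bigr]=\mathbb{E}[|a_n|^r]-\mathbb{E}\bigl[|a_n|^r\wedge c\bigr]+\mathbb{E}\bigl[(|a_n|^r\wedge c)\mathbbm{1}[|a_n|^r>c]\bigr],
\end{equation*}
bounding the last term by $c\,\mathbb{P}(|a_n|^r>c)$ eventually handled by choosing $c$ at a continuity point of the law of $|a|^r$. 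The function $x\mapsto x\wedge c$ is bounded and continuous, so by convergence in probability and bounded convergence, $\mathbb{E}[|a_n|^r\wedge c]\to \mathbb{E}[|a|^r\wedge c]$. Combining with $(ii)$ gives $\limsup_n \mathbb{E}[|a_n|^r\mathbbm{1}[|a_n|^r>c]]\leq \mathbb{E}[|a|^r\mathbbm{1}[|a|^r\geq c]]+\varepsilon$, which tends to $0$ as $c\to\infty$ by integrability of $|a|^r$. Since each individual $|a_n|^r$ is integrable, uniform integrability follows after absorbing finitely many terms.

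For $(iii)\Rightarrow(i)$, this is Vitali's convergence theorem. First, by Fatou's lemma applied along a subsequence converging almost surely (extracted from convergence in probability) together with uniform integrability, $\mathbb{E}[|a|^r]<\infty$. Using the inequality $|a_n-a|^r\leq 2^{(r-1)\vee 0}(|a_n|^r+|a|^r)$, the family $\{|a_n-a|^r\}$ inherits uniform integrability from $(iii)$. Since $|a_n-a|^r\to 0$ in probability, splitting $\mathbb{E}[|a_n-a|^r]$ at an arbitrary level $\varepsilon$ and using uniform integrability on the tail yields $\mathbb{E}[|a_n-a|^r]\to 0$, i.e.\ $(i)$.

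The main obstacle is step $(ii)\Rightarrow(iii)$: uniform integrability is a simultaneous-in-$n$ statement, so one must carefully control the truncation threshold $c$ so that the bound on $\mathbb{E}[|a_n|^r\mathbbm{1}[|a_n|^r>c]]$ holds for all sufficiently large $n$ uniformly, not just pointwise in $n$. Choosing $c$ at a continuity point of the limiting distribution of $|a|^r$ and then invoking bounded convergence for $x\mapsto x\wedge c$ is the standard device that makes this rigorous.
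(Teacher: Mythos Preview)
Your proof outline is correct and follows the standard textbook approach to the $L^r$ convergence theorem. However, the paper does not actually prove this proposition: it is stated in Appendix~\ref{appdx:prelim} as a preliminary fact cited from \citet{loeve1977probability} and then invoked as a tool in the proofs of Lemmas~\ref{LEM:1} and~\ref{LEM:2}. There is therefore no proof in the paper to compare against; your cyclic argument $(i)\Rightarrow(ii)\Rightarrow(iii)\Rightarrow(i)$, with Vitali's theorem for the last step and the truncation identity for $(ii)\Rightarrow(iii)$, is exactly the classical route one finds in Lo\`{e}ve's or Billingsley's texts, and nothing in it is problematic.
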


\begin{proposition}
\label{prp:mrtgl_WLLN}[Weak Law of Large Numbers for Martingale, \citet{hall2014martingale}]
Let $\{S_n = \sum^{n}_{i=1} X_i, \mathcal{H}_{t}, t\geq 1\}$ be a martingale and $\{b_n\}$ a sequence of positive constants with $b_n\to\infty$ as $n\to\infty$. Then, writing $X_{ni} = X_i\mathbbm{1}[|X_i|\leq b_n]$, $1\leq i \leq n$, we have that $b^{-1}_n S_n \xrightarrow{\mathrm{p}} 0$ as $n\to \infty$ if 
\begin{description}
\item[(i)] $\sum^n_{i=1}P(|X_i| > b_n)\to 0$;
\item[(ii)] $b^{-1}_n\sum^n_{i=1}\mathbb{E}[X_{ni}\mid \mathcal{H}_{t-1}] \xrightarrow{\mathrm{p}} 0$, and;
\item[(iii)] $b^{-2}_n \sum^n_{i=1}\big\{\mathbb{E}[X^2_{ni}] - \mathbb{E}\big[\mathbb{E}\big[X_{ni}\mid \mathcal{H}_{t-1}\big]\big]^2\big\}\to 0$.
\end{description}
\end{proposition}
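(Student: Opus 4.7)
The plan is to prove this via the classical truncation argument adapted to the martingale setting. The natural decomposition is
\[
S_n \;=\; \sum_{i=1}^n X_{ni} + \sum_{i=1}^n X_i\,\mathbbm{1}[|X_i| > b_n],
\]
so it suffices to control each piece after division by $b_n$. For the tail piece, Boole's inequality gives
\[
\P\Bigl(\sum_{i=1}^n X_i\mathbbm{1}[|X_i|>b_n]\neq 0\Bigr) \;\leq\; \sum_{i=1}^n \P(|X_i|>b_n),
\]
which vanishes by condition~(i); hence that contribution is $\op(1)$ without even requiring the $b_n^{-1}$ factor.

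For the truncated sum I would center with respect to the natural filtration. Setting $Y_{ni} := X_{ni} - \mathbb{E}[X_{ni}\mid \mathcal{H}_{i-1}]$, each $Y_{ni}$ is $\mathcal{H}_i$-measurable and satisfies $\mathbb{E}[Y_{ni}\mid \mathcal{H}_{i-1}] = 0$, so $\{Y_{ni},\mathcal{H}_i\}_{i=1}^n$ is a martingale difference array. Orthogonality of MDS increments together with the tower property yields
\[
\mathbb{E}\!\left[\Bigl(\sum_{i=1}^n Y_{ni}\Bigr)^2\right] \;=\; \sum_{i=1}^n \mathbb{E}[Y_{ni}^2] \;=\; \sum_{i=1}^n \Bigl(\mathbb{E}[X_{ni}^2] \,-\, \mathbb{E}\bigl[(\mathbb{E}[X_{ni}\mid \mathcal{H}_{i-1}])^2\bigr]\Bigr),
\]
which, after dividing by $b_n^2$, is exactly the quantity assumed to vanish in condition~(iii). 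Chebyshev's inequality then delivers $b_n^{-1}\sum_i Y_{ni}\xrightarrow{\mathrm{p}} 0$. The remaining piece
\[
b_n^{-1}\sum_{i=1}^n \mathbb{E}[X_{ni}\mid \mathcal{H}_{i-1}] \;\xrightarrow{\mathrm{p}}\; 0
\]
is condition~(ii) verbatim.

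Combining the three contributions gives $b_n^{-1}S_n \xrightarrow{\mathrm{p}} 0$, as desired. The one point that requires care is the triangular-array structure: both the truncation threshold and hence the variables $Y_{ni}$ depend on the row index $n$, and one might worry that a fresh Chebyshev argument is needed for each $n$. This is harmless because row by row the $\{Y_{ni}\}_{i=1}^n$ still form a martingale difference sequence with respect to $\{\mathcal{H}_i\}$, so the orthogonality identity is exact and the Chebyshev bound applies with no uniformity issue. The main obstacle, to the extent there is one, is therefore purely notational: keeping the row index $n$ fixed throughout the centering and second-moment calculation. Since the proposition is invoked only as an external tool in the rest of the paper, the detailed verification would simply be quoted from \citet{hall2014martingale}.
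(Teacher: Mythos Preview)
Your argument is correct and is precisely the standard truncation-plus-centering proof from \citet{hall2014martingale}: split off the large increments via condition~(i), center the truncated increments with respect to the filtration to form a martingale difference array, apply Chebyshev using condition~(iii) for the centered part, and invoke condition~(ii) for the predictable drift. The paper itself does not supply a proof of this proposition; it is stated in the preliminaries as an external result quoted from Hall and Heyde and is then used as a black box in the proof of Lemma~\ref{LEM:2} (to show $\frac{1}{T}\sum_t U_t \xrightarrow{\mathrm{p}} 0$). Since there is no in-paper proof to compare against, the only relevant observation is that your sketch matches the textbook argument the paper is citing, and you have already correctly flagged the one bookkeeping subtlety (the row-dependence of the truncation threshold does not disturb the orthogonality identity because for each fixed $n$ the $\{Y_{ni}\}_{i\leq n}$ are genuine martingale differences with respect to $\{\mathcal{H}_i\}$).
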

\begin{remark} The weak law of large numbers for martingale holds when the random variable is bounded by a constant.
\end{remark}

\section{Proof of Lemma~\ref{LEM:1}}
\label{appdx:proof:target1}
We prove Lemma~\ref{LEM:1} by using a similar technique used in \citet{Laan2014onlinetml} and Theorem~1 of \citet{kato2020theoreticalcomparison}. 

\begin{proof}
Let us define 
\begin{align*}
&\phi_1(X_t, A_t, Y_t; g, f)=\sum^K_{a=1}\frac{\epol(a\mid X_t)\mathbbm{1}[A_t=a]\left(Y_t - f(a, X_t)\right) }{g(a\mid X_t)},\\
&\phi_2(X_t; f)=\sum^K_{a=1}\epol(a\mid X_t)f(a, X_t).
\end{align*}
We decompose $\sqrt{T}\Big(\widehat{R}^{\mathrm{ADR}}_T(\epol) - \ddot{R}_T(\epol)\Big)$ as 
\begin{align*}
&\widehat{R}^{\mathrm{ADR}}_T(\epol) - \ddot{R}(\epol)\\
&=\frac{1}{T}\sum^T_{t=1}\Bigg\{\phi_1(X_t, A_t, Y_t; \hat{g}_{t-1}, \hat{f}_{t-1}) - \phi_1(X_t, A_t, Y_t; \bar{\alpha}, f^*)\\
&\ \ \ -\mathbb{E}\left[\phi_1(X_t, A_t, Y_t; \hat{g}_{t-1}, \hat{f}_{t-1}) - \phi_1(X_t, A_t, Y_t; \bar{\alpha}, f^*)\mid \Omega_{t-1}\right]\\
&\ \ \ + \phi_2(X_t; \hat{f}_{t-1}) - \phi_2(X_t; f^*) -\mathbb{E}\left[\phi_2(X_t; \hat{f}_{t-1}) - \phi_2(X_t; f^*)\mid \Omega_{t-1}\right]\Bigg\}\\
&\ \ \ + \frac{1}{T}\sum^T_{t=1}\mathbb{E}\left[\phi_1(X_t, A_t, Y_t; \hat{g}_{t-1}, \hat{f}_{t-1})\mid \Omega_{t-1}\right] + \frac{1}{T}\sum^T_{t=1}\mathbb{E}\left[\phi_2(X_t; \hat{f}_{t-1})\mid \Omega_{t-1}\right]\\
&\ \ \ - \frac{1}{T}\sum^T_{t=1}\mathbb{E}\left[\phi_1(X_t, A_t, Y_t; \bar{\alpha}, f^*)\mid \Omega_{t-1}\right] - \frac{1}{T}\sum^T_{t=1}\mathbb{E}\left[\phi_2(X_t; f^*)\mid \Omega_{t-1}\right].
\end{align*}
In the following parts, we separately show that
\begin{align}
\label{eq:part1}
&\sqrt{T}\frac{1}{T}\sum^T_{t=1}\Bigg\{\phi_1(X_t, A_t, Y_t; \hat{g}_{t-1}, \hat{f}_{t-1}) - \phi_1(X_t, A_t, Y_t; \bar{\alpha}, f^*)\\
&\ \ \ -\mathbb{E}\left[\phi_1(X_t, A_t, Y_t; \hat{g}_{t-1}, \hat{f}_{t-1}) - \phi_1(X_t, A_t, Y_t; \bar{\alpha}, f^*)\mid \Omega_{t-1}\right]\nonumber\\
&\ \ \ + \phi_2(X_t; \hat{f}_{t-1}) - \phi_2(X_t; f^*) -\mathbb{E}\left[\phi_2(X_t; \hat{f}_{t-1}) - \phi_2(X_t; f^*)\mid \Omega_{t-1}\right]\Bigg\}\nonumber\\
&= \mathrm{o}_p(1);\nonumber
\end{align}
and 
\begin{align}
\label{eq:part2}
&\frac{1}{T}\sum^T_{t=1}\mathbb{E}\left[\phi_1(X_t, A_t, Y_t; \hat{g}_{t-1}, \hat{f}_{t-1})\mid \Omega_{t-1}\right] + \frac{1}{T}\sum^T_{t=1}\mathbb{E}\left[\phi_2(X_t; \hat{f}_{t-1})\mid \Omega_{t-1}\right]\\
&- \frac{1}{T}\sum^T_{t=1}\mathbb{E}\left[\phi_1(X_t, A_t, Y_t; \bar{\alpha}, f^*)\mid \Omega_{t-1}\right] - \frac{1}{T}\sum^T_{t=1}\mathbb{E}\left[\phi_2(X_t; f^*)\mid \Omega_{t-1}\right] = \mathrm{o}_p(1/\sqrt{T}).\nonumber
\end{align}

\subsection*{Step~1: Proof of \eqref{eq:part1}}

For any $\varepsilon > 0$, to show that 
\begin{align*}
&\mathbb{P}\Bigg(\Bigg|\sqrt{T}\frac{1}{T}\sum^T_{t=1}\Bigg\{\phi_1(X_t, A_t, Y_t; \hat{g}_{t-1}, \hat{f}_{t-1}) - \phi_1(X_t, A_t, Y_t; \bar{\alpha}, f^*)\\
&\ \ \ -\mathbb{E}\left[\phi_1(X_t, A_t, Y_t; \hat{g}_{t-1}, \hat{f}_{t-1}) - \phi_1(X_t, A_t, Y_t; \bar{\alpha}, f^*)\mid \Omega_{t-1}\right]\\
&\ \ \ + \phi_2(X_t; \hat{f}_{t-1}) - \phi_2(X_tt; f^*) -\mathbb{E}\left[\phi_2(X_t; \hat{f}_{t-1}) - \phi_2(X_t; f^*)\mid \Omega_{t-1}\right]\Bigg\}\Bigg| > \varepsilon \Bigg)\\
& \to 0,
\end{align*}
we show that the mean is $0$ and the variance of the component converges to $0$. Then, from the Chebyshev's inequality, this result yields the statement. The mean is 
\begin{align*}
&\sqrt{T}\frac{1}{T}\sum^T_{t=1}\mathbb{E}\Bigg[\Bigg\{\phi_1(X_t, A_t, Y_t; \hat{g}_{t-1}, \hat{f}_{t-1}) - \phi_1(X_t, A_t, Y_t; \bar{\alpha}, f^*)\\
&\ \ \ \ \ \ \ -\mathbb{E}\left[\phi_1(X_t, A_t, Y_t; \hat{g}_{t-1}, \hat{f}_{t-1}) - \phi_1(X_t, A_t, Y_t; \bar{\alpha}, f^*)\mid \Omega_{t-1}\right]\\
&\ \ \ \ \ \ \ + \phi_2(X_t; \hat{f}_{t-1}) - \phi_2(X_t; f^*) -\mathbb{E}\left[\phi_2(X_t; \hat{f}_{t-1}) - \phi_2(X_t; f^*)\mid \Omega_{t-1}\right]\Bigg\}\Bigg]\\
&=\sqrt{T}\frac{1}{T}\sum^T_{t=1}\mathbb{E}\Bigg[\mathbb{E}\Bigg[\Bigg\{\phi_1(X_t, A_t, Y_t; \hat{g}_{t-1}, \hat{f}_{t-1}) - \phi_1(X_t, A_t, Y_t; \bar{\alpha}, f^*)\\
&\ \ \ \ \ \ \ -\mathbb{E}\left[\phi_1(X_t, A_t, Y_t; \hat{g}_{t-1}, \hat{f}_{t-1}) - \phi_1(X_t, A_t, Y_t; \bar{\alpha}, f^*)\mid \Omega_{t-1}\right]\\
&\ \ \ \ \ \ \ + \phi_2(X_t; \hat{f}_{t-1}) - \phi_2(X_t; f^*) -\mathbb{E}\left[\phi_2(X_t; \hat{f}_{t-1}) - \phi_2(X_t; f^*)\mid \Omega_{t-1}\right]\Bigg\}\mid \Omega_{t-1}\Bigg]\Bigg]\\
& = 0
\end{align*}

Because the mean is $0$, the variance is
\begin{align*}
&\mathrm{Var}\Bigg(\sqrt{T}\frac{1}{T}\sum^T_{t=1}\Bigg\{\phi_1(X_t, A_t, Y_t; \hat{g}_{t-1}, \hat{f}_{t-1}) - \phi_1(X_t, A_t, Y_t; \bar{\alpha}, f^*)\\
&\ \ \ \ \ \ \ -\mathbb{E}\left[\phi_1(X_t, A_t, Y_t; \hat{g}_{t-1}, \hat{f}_{t-1}) - \phi_1(X_t, A_t, Y_t; \bar{\alpha}, f^*)\mid \Omega_{t-1}\right]\\
&\ \ \ \ \ \ \ + \phi_2(X_t; \hat{f}_{t-1}) - \phi_2(X_t; f^*) -\mathbb{E}\left[\phi_2(X_t; \hat{f}_{t-1}) - \phi_2(X_t; f^*)\mid \Omega_{t-1}\right]\Bigg\}\Bigg)\\
&=\mathbb{E}\Bigg[\Bigg(\sqrt{T}\frac{1}{T}\sum^T_{t=1}\Bigg\{\phi_1(X_t, A_t, Y_t; \hat{g}_{t-1}, \hat{f}_{t-1}) - \phi_1(X_t, A_t, Y_t; \bar{\alpha}, f^*)\\
&\ \ \ \ \ \ \ -\mathbb{E}\left[\phi_1(X_t, A_t, Y_t; \hat{g}_{t-1}, \hat{f}_{t-1}) - \phi_1(X_t, A_t, Y_t; \bar{\alpha}, f^*)\mid \Omega_{t-1}\right]\\
&\ \ \ \ \ \ \ + \phi_2(X_t; \hat{f}_{t-1}) - \phi_2(X_t; f^*) -\mathbb{E}\left[\phi_2(X_t; \hat{f}_{t-1}) - \phi_2(X_t; f^*)\mid \Omega_{t-1}\right]\Bigg\}\Bigg)^2\Bigg]\\
&=\frac{1}{T}\mathbb{E}\Bigg[\Bigg(\sum^T_{t=1}\Bigg\{\phi_1(X_t, A_t, Y_t; \hat{g}_{t-1}, \hat{f}_{t-1}) - \phi_1(X_t, A_t, Y_t; \bar{\alpha}, f^*)\\
&\ \ \ \ \ \ \ -\mathbb{E}\left[\phi_1(X_t, A_t, Y_t; \hat{g}_{t-1}, \hat{f}_{t-1}) - \phi_1(X_t, A_t, Y_t; \bar{\alpha}, f^*)\mid \Omega_{t-1}\right]\\
&\ \ \ \ \ \ \ + \phi_2(X_t; \hat{f}_{t-1}) - \phi_2(X_t; f^*) -\mathbb{E}\left[\phi_2(X_t; \hat{f}_{t-1}) - \phi_2(X_t; f^*)\mid \Omega_{t-1}\right]\Bigg\}\Bigg)^2\Bigg].
\end{align*}
Therefore, we have
\begin{align*}
&=\frac{1}{T}\sum^T_{t=1}\mathbb{E}\Bigg[\Bigg(\phi_1(X_t, A_t, Y_t; \hat{g}_{t-1}, \hat{f}_{t-1}) - \phi_1(X_t, A_t, Y_t; \bar{\alpha}, f^*)\\
&\ \ \ \ \ \ \ -\mathbb{E}\left[\phi_1(X_t, A_t, Y_t; \hat{g}_{t-1}, \hat{f}_{t-1}) - \phi_1(X_t, A_t, Y_t; \bar{\alpha}, f^*)\mid \Omega_{t-1}\right]\\
&\ \ \ \ \ \ \ + \phi_2(X_t; \hat{f}_{t-1}) - \phi_2(X_t; f^*) -\mathbb{E}\left[\phi_2(X_t; \hat{f}_{t-1}) - \phi_2(X_t; f^*)\mid \Omega_{t-1}\right]\Bigg)^2\Bigg]\\
&\ \ \ +\frac{2}{T}\sum^{T-1}_{t=1}\sum^T_{s=t+1}\mathbb{E}\Bigg[\Bigg(\phi_1(X_t, A_t, Y_t; \hat{g}_{t-1}, \hat{f}_{t-1}) - \phi_1(X_t, A_t, Y_t; \bar{\alpha}, f^*)\\
&\ \ \ \ \ \ \ -\mathbb{E}\left[\phi_1(X_t, A_t, Y_t; \hat{g}_{t-1}, \hat{f}_{t-1}) - \phi_1(X_t, A_t, Y_t; \bar{\alpha}, f^*)\mid \Omega_{t-1}\right]\\
&\ \ \ \ \ \ \ + \phi_2(X_t; \hat{f}_{t-1}) - \phi_2(X_t; f^*) -\mathbb{E}\left[\phi_2(X_t; \hat{f}_{t-1}) - \phi_2(X_t; f^*)\mid \Omega_{t-1}\right]\Bigg)\\
&\ \ \ \ \ \ \ \times \Bigg(\phi_1(X_s, A_s, Y_s; \hat{g}_{s-1}, \hat{f}_{s-1}) - \phi_1(X_s, A_s, Y_s; \bar{\alpha}, f^*)\\
&\ \ \ \ \ \ \ -\mathbb{E}\left[\phi_1(X_s, A_s, Y_s; \hat{g}_{s-1}, \hat{f}_{s-1}) - \phi_1(X_s, A_s, Y_s; \bar{\alpha}, f^*)\mid \Omega_{s-1}\right]\\
&\ \ \ \ \ \ \ + \phi_2(X_s; \hat{f}_{s-1}) - \phi_2(X_s; f^*) -\mathbb{E}\left[\phi_2(X_s; \hat{f}_{s-1}) - \phi_2(X_s; f^*)\mid \Omega_{s-1}\right]\Bigg)\Bigg].
\end{align*}

For $s > t$, 
\begin{align*}
&\mathbb{E}\Bigg[\Bigg(\phi_1(X_t, A_t, Y_t; \hat{g}_{t-1}, \hat{f}_{t-1}) - \phi_1(X_t, A_t, Y_t; \bar{\alpha}, f^*)\\
&\ \ \ \ \ \ \ -\mathbb{E}\left[\phi_1(X_t, A_t, Y_t; \hat{g}_{t-1}, \hat{f}_{t-1}) - \phi_1(X_t, A_t, Y_t; \bar{\alpha}, f^*)\mid \Omega_{t-1}\right]\\
&\ \ \ \ \ \ \ + \phi_2(X_t; \hat{f}_{t-1}) - \phi_2(X_t; f^*) -\mathbb{E}\left[\phi_2(X_t; \hat{f}_{t-1}) - \phi_2(X_t; f^*)\mid \Omega_{t-1}\right]\Bigg)\\
&\ \ \ \ \ \ \ \times \Bigg(\phi_1(X_s, A_s, Y_s; \hat{g}_{s-1}, \hat{f}_{s-1}) - \phi_1(X_s, A_s, Y_s; \bar{\alpha}, f^*)\\
&\ \ \ \ \ \ \ -\mathbb{E}\left[\phi_1(X_s, A_s, Y_s; \hat{g}_{s-1}, \hat{f}_{s-1}) - \phi_1(X_s, A_s, Y_s; \bar{\alpha}, f^*)\mid \Omega_{s-1}\right]\\
&\ \ \ \ \ \ \ + \phi_2(X_s; \hat{f}_{s-1}) - \phi_2(X_s; f^*) -\mathbb{E}\left[\phi_2(X_s; \hat{f}_{s-1}) - \phi_2(X_s; f^*)\mid \Omega_{s-1}\right]\Bigg)\Bigg]\\
&=\mathbb{E}\Bigg[U\mathbb{E}\Bigg[\Bigg(\phi_1(X_s, A_s, Y_s; \hat{g}_{s-1}, \hat{f}_{s-1}) - \phi_1(X_s, A_s, Y_s; \bar{\alpha}, f^*)\\
&\ \ \ \ \ \ \ -\mathbb{E}\left[\phi_1(X_s, A_s, Y_s; \hat{g}_{s-1}, \hat{f}_{s-1}) - \phi_1(X_s, A_s, Y_s; \bar{\alpha}, f^*)\mid \Omega_{s-1}\right]\\
&\ \ \ \ \ \ \ + \phi_2(X_s; \hat{f}_{s-1}) - \phi_2(X_s; f^*) -\mathbb{E}\left[\phi_2(X_s; \hat{f}_{s-1}) - \phi_2(X_s; f^*)\mid \Omega_{s-1}\right]\Bigg)\mid \Omega_{s-1}\Bigg]\Bigg]=0,
\end{align*}
where 
\begin{align*}
&U=\Bigg(\phi_1(X_t, A_t, Y_t; \hat{g}_{t-1}, \hat{f}_{t-1}) - \phi_1(X_t, A_t, Y_t; \bar{\alpha}, f^*)\\
&\ \ \ -\mathbb{E}\left[\phi_1(X_t, A_t, Y_t; \hat{g}_{t-1}, \hat{f}_{t-1}) - \phi_1(X_t, A_t, Y_t; \bar{\alpha}, f^*)\mid \Omega_{t-1}\right]\\
&\ \ \  + \phi_2(X_t; \hat{f}_{t-1}) - \phi_2(X_t; f^*) -\mathbb{E}\left[\phi_2(X_t; \hat{f}_{t-1}) - \phi_2(X_t; f^*)\mid \Omega_{t-1}\right]\Bigg).
\end{align*}
Therefore, the variance is calculated as 
\begin{align*}
&\mathrm{Var}\Bigg(\sqrt{T}\frac{1}{T}\sum^T_{t=1}\Bigg\{\phi_1(X_t, A_t, Y_t; \hat{g}_{t-1}, \hat{f}_{t-1}) - \phi_1(X_t, A_t, Y_t; \bar{\alpha}, f^*)\\
&\ \ \ \ \ \ \ -\mathbb{E}\left[\phi_1(X_t, A_t, Y_t; \hat{g}_{t-1}, \hat{f}_{t-1}) - \phi_1(X_t, A_t, Y_t; \bar{\alpha}, f^*)\mid \Omega_{t-1}\right]\\
&\ \ \ \ \ \ \ + \phi_2(X_t; \hat{f}_{t-1}) - \phi_2(X_t; f^*) -\mathbb{E}\left[\phi_2(X_t; \hat{f}_{t-1}) - \phi_2(X_t; f^*)\mid \Omega_{t-1}\right]\Bigg\}\Bigg)\\
&=\frac{1}{T}\sum^T_{t=1}\mathbb{E}\Bigg[\Bigg(\phi_1(X_t, A_t, Y_t; \hat{g}_{t-1}, \hat{f}_{t-1}) - \phi_1(X_t, A_t, Y_t; \bar{\alpha}, f^*)\\
&\ \ \ \ \ \ \ -\mathbb{E}\left[\phi_1(X_t, A_t, Y_t; \hat{g}_{t-1}, \hat{f}_{t-1}) - \phi_1(X_t, A_t, Y_t; \bar{\alpha}, f^*)\mid \Omega_{t-1}\right]\\
&\ \ \ \ \ \ \ + \phi_2(X_t; \hat{f}_{t-1}) - \phi_2(X_t; f^*) -\mathbb{E}\left[\phi_2(X_t; \hat{f}_{t-1}) - \phi_2(X_t; f^*)\mid \Omega_{t-1}\right]\Bigg)^2\Bigg]\\
&=\frac{1}{T}\sum^T_{t=1}\mathbb{E}\Bigg[\mathbb{E}\Bigg[\Bigg(\phi_1(X_t, A_t, Y_t; \hat{g}_{t-1}, \hat{f}_{t-1}) - \phi_1(X_t, A_t, Y_t; \bar{\alpha}, f^*)\\
&\ \ \ \ \ \ \ -\mathbb{E}\left[\phi_1(X_t, A_t, Y_t; \hat{g}_{t-1}, \hat{f}_{t-1}) - \phi_1(X_t, A_t, Y_t; \bar{\alpha}, f^*)\mid \Omega_{t-1}\right]\\
&\ \ \ \ \ \ \ + \phi_2(X_t; \hat{f}_{t-1}) - \phi_2(X_t; f^*) -\mathbb{E}\left[\phi_2(X_t; \hat{f}_{t-1}) - \phi_2(X_t; f^*)\mid \Omega_{t-1}\right]\Bigg)^2\mid \Omega_{t-1}\Bigg]\Bigg]\\
&=\frac{1}{T}\sum^T_{t=1}\mathbb{E}\Bigg[\mathrm{Var}\Bigg(\phi_1(X_t, A_t, Y_t; \hat{g}_{t-1}, \hat{f}_{t-1}) - \phi_1(X_t, A_t, Y_t; \bar{\alpha}, f^*) + \phi_2(X_t; \hat{f}_{t-1}) - \phi_2(X_t; f^*) \mid \Omega_{t-1}\Bigg)\Bigg]\\
&=\frac{1}{T}\sum^T_{t=1}\mathbb{E}\Bigg[\mathrm{Var}\Bigg(\phi_1(X_t, A_t, Y_t; \hat{g}_{t-1}, \hat{f}_{t-1}) - \phi_1(X_t, A_t, Y_t; \bar{\alpha}, f^*) \mid \Omega_{t-1}\Bigg)\Bigg]\\
&\ \ \ +\frac{1}{T}\sum^T_{t=1}\mathbb{E}\Bigg[\mathrm{Var}\Bigg(\phi_2(X_t; \hat{f}_{t-1}) - \phi_2(X_t; f^*) \mid \Omega_{t-1}\Bigg)\Bigg]\\
&\ \ \ +\frac{2}{T}\sum^T_{t=1}\mathbb{E}\Bigg[\mathrm{Cov}\Bigg(\phi_1(X_t, A_t, Y_t; \hat{g}_{t-1}, \hat{f}_{t-1}) - \phi_1(X_t, A_t, Y_t; \bar{\alpha}, f^*), \phi_2(X_t; \hat{f}_{t-1}) - \phi_2(X_t; f^*) \mid \Omega_{t-1}\Bigg)\Bigg].
\end{align*}

Then, we want to show that
\begin{align}
\label{eq:first_e0}
&\frac{1}{T}\sum^T_{t=1}\mathbb{E}\Bigg[\mathrm{Var}\Bigg(\phi_1(X_t, A_t, Y_t; \hat{g}_{t-1}, \hat{f}_{t-1}) - \phi_1(X_t, A_t, Y_t; \bar{\alpha}, f^*) \mid \Omega_{t-1}\Bigg)\Bigg] \to 0,\\
\label{eq:second_e0}
&\frac{1}{T}\sum^T_{t=1}\mathbb{E}\Bigg[\mathrm{Var}\Bigg(\phi_2(X_t; \hat{f}_{t-1}) - \phi_2(X_t; f^*) \mid \Omega_{t-1}\Bigg)\Bigg]  \to 0,\\
\label{eq:third_e0}
&\frac{2}{T}\sum^T_{t=1}\mathbb{E}\Bigg[\mathrm{Cov}\Bigg(\phi_1(X_t, A_t, Y_t; \hat{g}_{t-1}, \hat{f}_{t-1}) - \phi_1(X_t, A_t, Y_t; \bar{\alpha}, f^*), \phi_2(X_t; \hat{f}_{t-1}) - \phi_2(X_t; f^*) \mid \Omega_{t-1}\Bigg)\Bigg]\\
&\ \ \  \to 0
\end{align}

For showing \eqref{eq:first_e0}--\eqref{eq:third_e0}, we consider showing
\begin{align}
\label{eq:first_e}
&\mathrm{Var}\Bigg(\phi_1(X_t, A_t, Y_t; \hat{g}_{t-1}, \hat{f}_{t-1}) - \phi_1(X_t, A_t, Y_t; \bar{\alpha}, f^*) \mid \Omega_{t-1}\Bigg) = \op(1),\\
\label{eq:second_e}
&\mathrm{Var}\Bigg(\phi_2(X_t; \hat{f}_{t-1}) - \phi_2(X_t; f^*) \mid \Omega_{t-1}\Bigg) = \op(1)\\
\label{eq:third_e}
&\mathrm{Cov}\Bigg(\phi_1(X_t, A_t, Y_t; \hat{g}_{t-1}, \hat{f}_{t-1}) - \phi_1(X_t, A_t, Y_t; \bar{\alpha}, f^*), \phi_2(X_t; \hat{f}_{t-1}) - \phi_2(X_t; f^*) \mid \Omega_{t-1}\Bigg) = \op(1),
\end{align}

The first equation \eqref{eq:first_e} is shown as 
\begin{align*}
&\mathrm{Var}\Bigg(\phi_1(X_t, A_t, Y_t; \hat{g}_{t-1}, \hat{f}_{t-1}) - \phi_1(X_t, A_t, Y_t; \bar{\alpha}, f^*) \mid \Omega_{t-1}\Bigg)\\
&\leq \mathbb{E}\Bigg[\Bigg\{\sum^K_{a=1}\frac{\epol(a\mid X_t)\mathbbm{1}[A_t=a]\left(Y_t - \hat{f}_{t-1}(a, X_t)\right) }{\hat{g}_{t-1}(a\mid X_t)}\\
&\ \ \ \ \ \ \ \ - \sum^K_{a=1}\frac{\epol(a\mid X_t)\mathbbm{1}[A_t=a]\left(Y_t - f^*(a, X_t)\right) }{\bar{\alpha}(a\mid X_t)}\Bigg\}^2\mid \Omega_{t-1}\Bigg]\\
&=\mathbb{E}\Bigg[\Bigg\{\sum^K_{a=1}\frac{\epol(a\mid X_t)\mathbbm{1}[A_t=a]\left(Y_t - \hat{f}_{t-1}(a, X_t)\right) }{\hat{g}_{t-1}(a\mid X_t)}\\
&\ \ \ \ \ \ \ \  - \sum^K_{a=1}\frac{\epol(a\mid X_t)\mathbbm{1}[A_t=a]\left(Y_t - f^*(a, X_t)\right) }{\hat{g}_{t-1}(a\mid X_t)}\\
&\ \ \ \ \ \ \ \ + \sum^K_{a=1}\frac{\epol(a\mid X_t)\mathbbm{1}[A_t=a]\left(Y_t - f^*(a, X_t)\right) }{\hat{g}_{t-1}(a\mid X_t)}\\
&\ \ \ \ \ \ \ \ - \sum^K_{a=1}\frac{\epol(a\mid X_t)\mathbbm{1}[A_t=a]\left(Y_t - f^*(a, X_t)\right) }{\bar{\alpha}(a\mid X_t)}\Bigg\}^2\mid \Omega_{t-1}\Bigg]\\
\end{align*}
Here, we have used a parallelogram law from the second to the third equation. Then, we can show that
\begin{align*}
&\leq 2\mathbb{E}\Bigg[\Bigg\{\sum^K_{a=1}\frac{\epol(a\mid X_t)\mathbbm{1}[A_t=a]\left(Y_t - \hat{f}_{t-1}(a, X_t)\right) }{\hat{g}_{t-1}(a\mid X_t)}\\
&\ \ \ \ \ \ \ \ - \sum^K_{a=1}\frac{\epol(a\mid X_t)\mathbbm{1}[A_t=a]\left(Y_t - f^*(a, X_t)\right) }{\hat{g}_{t-1}(a\mid X_t)}\Bigg\}^2\mid \Omega_{t-1}\Bigg]\\
&\ \ \ + 2\mathbb{E}\Bigg[\Bigg\{\sum^K_{a=1}\frac{\epol(a\mid X_t)\mathbbm{1}[A_t=a]\left(Y_t - f^*(a, X_t)\right) }{\hat{g}_{t-1}(a\mid X_t)}\\
&\ \ \ \ \ \ \ \  - \sum^K_{a=1}\frac{\epol(a\mid X_t)\mathbbm{1}[A_t=a]\left(Y_t - f^*(a, X_t)\right) }{\bar{\alpha}(a\mid X_t)}\Bigg\}^2\mid \Omega_{t-1}\Bigg]\\
&\leq 2C\|f^* - \hat{f}_{t-1}\|^2_2 + 2\times 4C\|\hat{g}_{t-1} - \bar{\alpha}\|^2_{2} = \op(1),
\end{align*}
where $C>0$ is a constant.  We have used $|\hat{f}_{t-1}| < C_f$, and $0<\frac{\epol}{\bar{\alpha}} < C_{\bar{\alpha}}$ and Assumption~\ref{asm:consistency}, from the first line to the second inequality. Then, from the $L^r$ convergence theorem (Proposition~\ref{prp:lr_conv_theorem}) and the boundedness of the random variables, we can show that as $t\to \infty$,
\begin{align*}
&\mathbb{E}\Bigg[\mathrm{Var}\Bigg(\phi_1(X_t, A_t, Y_t; \hat{g}_{t-1}, \hat{f}_{t-1}) - \phi_1(X_t, A_t, Y_t; \bar{\alpha}, f^*) \mid \Omega_{t-1}\Bigg)\Bigg]\\
&\leq \mathbb{E}\Bigg[\left|\mathrm{Var}\Bigg(\phi_1(X_t, A_t, Y_t; \hat{g}_{t-1}, \hat{f}_{t-1}) - \phi_1(X_t, A_t, Y_t; \bar{\alpha}, f^*)\mid \Omega_{t-1}\Bigg) \right|\Bigg]\\
&\to 0.
\end{align*}
Therefore, for any $\epsilon > 0$, there exists a constant $ C > 0$ such that 
\begin{align*}
&\frac{1}{T} \sum^{T}_{t=1}\mathbb{E}\Bigg[\mathrm{Var}\Bigg(\phi_1(X_t, A_t, Y_t; \hat{g}_{t-1}, \hat{f}_{t-1}) - \phi_1(X_t, A_t, Y_t; \bar{\alpha}, f^*) \mid \Omega_{t-1}\Bigg)\Bigg]\leq  C/T + \epsilon.
\end{align*}

The second equation \eqref{eq:second_e} is derived by Jensen's inequality, and we show \eqref{eq:second_e0} as well as \eqref{eq:first_e0} by using  $L^r$ convergence theorem.

Next, we bound the LHS of \eqref{eq:third_e} as
\begin{align*}
&\mathrm{Cov}\Bigg(\phi_1(X_t, A_t, Y_t; \hat{g}_{t-1}, \hat{f}_{t-1}) - \phi_1(X_t, A_t, Y_t; \bar{\alpha}, f^*), \phi_2(X_t; \hat{f}_{t-1}) - \phi_2(X_t; f^*) \mid \Omega_{t-1}\Bigg)\\
&\leq \Bigg|\mathbb{E}\Bigg[ \Big(\phi_1(X_t, A_t, Y_t; \hat{g}_{t-1}, \hat{f}_{t-1}) - \phi_1(X_t, A_t, Y_t; \bar{\alpha}, f^*)\\
&\ \ \ \ \ \ \ \ \ \ \ \ \ \ \ \ \ \ \ \ \ \ \ \ \ \ \ \ \ \ \ \ \ \ \ \ \ \ \ - \mathbb{E}\left[\phi_1(X_t, A_t, Y_t; \hat{g}_{t-1}, \hat{f}_{t-1}) - \phi_1(X_t, A_t, Y_t; \bar{\alpha}, f^*) \mid \Omega_{t-1}\right]\Big)\\
&\ \ \ \ \ \ \ \ \ \ \ \ \ \ \ \ \ \ \ \times \left(\phi_2(X_t; \hat{f}_{t-1}) - \phi_2(X_t; f^*) - \mathbb{E}\left[\phi_2(X_t; \hat{f}_{t-1}) - \phi_2(X_t; f^*)\right]\right) \mid \Omega_{t-1}\Bigg]\Bigg|
\end{align*}
Then, by using the Jensen's inequality,
\begin{align*}
&\leq \mathbb{E}\Bigg[\Bigg|\Big(\phi_1(X_t, A_t, Y_t; \hat{g}_{t-1}, \hat{f}_{t-1}) - \phi_1(X_t, A_t, Y_t; \bar{\alpha}, f^*)\\
&\ \ \ \ \ \ \ \ \ \ \ \ \ \ \ \ \ \ \ \ \ \ \ \ \ \ \ \ \ \ \ \ \ \ \ \ \ \ \  - \mathbb{E}\left[\phi_1(X_t, A_t, Y_t; \hat{g}_{t-1}, \hat{f}_{t-1}) - \phi_1(X_t, A_t, Y_t; \bar{\alpha}, f^*) \mid \Omega_{t-1}\right]\Big)\\
&\ \ \ \ \ \ \ \ \ \ \ \ \ \ \ \ \ \ \ \times \left(\phi_2(X_t; \hat{f}_{t-1}) - \phi_2(X_t; f^*) - \mathbb{E}\left[\phi_2(X_t; \hat{f}_{t-1}) - \phi_2(X_t; f^*)\right]\right)\Bigg| \mid \Omega_{t-1}\Bigg]\\
&\leq C\mathbb{E}\Bigg[ \Bigg| \phi_1(X_t, A_t, Y_t; \hat{g}_{t-1}, \hat{f}_{t-1}) - \phi_1(X_t, A_t, Y_t; \bar{\alpha}, f^*)\\
&\ \ \ \ \ \ \ \ \ \ \ \ \ \ \ \ \ \ \ \ \ \ \ \ \ \ \ \ \ \ \ \ \ \ \ \ \ \ \  - \mathbb{E}\left[\phi_1(X_t, A_t, Y_t; \hat{g}_{t-1}, \hat{f}_{t-1}) - \phi_1(X_t, A_t, Y_t; \bar{\alpha}, f^*) \mid \Omega_{t-1}\right] \Bigg| \mid \Omega_{t-1}\Bigg]\\
&=\op(1),
\end{align*}
where $C>0$ is a constant. From the second to third inequality, we used consistencies of $\hat{f}_{t-1}$ and $\hat{g}_{t-1}$, which imply that for all $X_t\in\mathcal{X}$,
\begin{align*}
&\phi_1(X_t, A_t, Y_t; \hat{g}_{t-1}, \hat{f}_{t-1}) - \phi_1(X_t, A_t, Y_t; \bar{\alpha}, f^*)\\
&=\sum^K_{a=1}\left(\frac{\epol(a\mid X_t)\mathbbm{1}[A_t=a]\left(Y_t - \hat{f}_{t-1}(a, X_t)\right) }{\hat{g}_{t-1}(a\mid X_t)} - \frac{\epol(a\mid X_t)\mathbbm{1}[A_t=a]\left(Y_t - f^*(a, X_t)\right) }{\bar{\alpha}(a\mid X_t)}\right)\\
&\leq \sum^K_{a=1}\left|\frac{\epol(a\mid X_t)\mathbbm{1}[A_t=a]\left(Y_t - \hat{f}_{t-1}(a, X_t)\right) }{\hat{g}_{t-1}(a\mid X_t)} - \frac{\epol(a\mid X_t)\mathbbm{1}[A_t=a]\left(Y_t - f^*(a, X_t)\right) }{\bar{\alpha}(a\mid X_t)}\right|\\
&\leq C\sum^K_{a=1}\left|\bar{\alpha}(a\mid X_t)\left(Y_t - \hat{f}_{t-1}(a, X_t)\right) - \hat{g}_{t-1}(a\mid X_t)\left(Y_t - f^*(a, X_t)\right)\right|\\
&\leq C\sum^K_{a=1}\Big|\bar{\alpha}(a\mid X_t) - \hat{g}_{t-1}(a\mid X_t)\Big|\\
&\ \ \  - C\sum^K_{a=1}\Big|\bar{\alpha}(a\mid X_t)\hat{f}_{t-1}(a, X_t) - \hat{g}_{t-1}(a\mid X_t)\hat{f}_{t-1}(a, X_t)\\
&\ \ \ \ \ \ \ \ \ \ \ \  + \hat{g}_{t-1}(a\mid X_t)\hat{f}_{t-1}(a, X_t) - \hat{g}_{t-1}(a\mid X_t)f^*(a, X_t)\Big|\\
&\leq C\sum^K_{a=1}\Big|\bar{\alpha}(a\mid X_t) - \hat{g}_{t-1}(a\mid X_t)\Big| - C\sum^K_{a=1}\Big|\hat{f}_{t-1}(a, X_t) - f^*(a, X_t)\Big| = \op(1),
\end{align*}
where $C > 0$ is a constant.

Thus, from \eqref{eq:first_e0}--\eqref{eq:third_e0}, the variance of the bias term converges to $0$. Then, from Chebyshev's inequality,
\begin{align*}
&\mathbb{P}\Bigg(\Bigg|\sqrt{T}\frac{1}{T}\sum^T_{t=1}\Bigg\{\phi_1(X_t, A_t, Y_t; \hat{g}_{t-1}, \hat{f}_{t-1}) - \phi_1(X_t, A_t, Y_t; \bar{\alpha}, f^*)\\
&\ \ \ -\mathbb{E}\left[\phi_1(X_t, A_t, Y_t; \hat{g}_{t-1}, \hat{f}_{t-1}) - \phi_1(X_t, A_t, Y_t; \bar{\alpha}, f^*)\mid \Omega_{t-1}\right]\\
&\ \ \ + \phi_2(X_t; \hat{f}_{t-1}) - \phi_2(X_t; f^*) -\mathbb{E}\left[\phi_2(X_t; \hat{f}_{t-1}) - \phi_2(X_t; f^*)\mid \Omega_{t-1}\right]\Bigg\}\Bigg| > \varepsilon\Bigg)\\
&\leq \mathrm{Var}\Bigg(\sqrt{T}\frac{1}{T}\sum^T_{t=1}\Bigg\{\phi_1(X_t, A_t, Y_t; \hat{g}_{t-1}, \hat{f}_{t-1}) - \phi_1(X_t, A_t, Y_t; \bar{\alpha}, f^*)\\
&\ \ \ -\mathbb{E}\left[\phi_1(X_t, A_t, Y_t; \hat{g}_{t-1}, \hat{f}_{t-1}) - \phi_1(X_t, A_t, Y_t; \bar{\alpha}, f^*)\mid \Omega_{t-1}\right]\\
&\ \ \ + \phi_2(X_t; \hat{f}_{t-1}) - \phi_2(X_t; f^*) -\mathbb{E}\left[\phi_2(X_t; \hat{f}_{t-1}) - \phi_2(X_t; f^*)\mid \Omega_{t-1}\right]\Bigg\}\Bigg)/\varepsilon^2\\
&\to 0.
\end{align*}

\subsection*{Step~2: Proof of \eqref{eq:part2}}
We can calculate the LHS of \eqref{eq:part2} as
\begin{align}
&\frac{1}{T}\sum^T_{t=1}\mathbb{E}\left[\phi_1(X_t, A_t, Y_t; \hat{g}_{t-1}, \hat{f}_{t-1})\mid \Omega_{t-1}\right] + \frac{1}{T}\sum^T_{t=1}\mathbb{E}\left[\phi_2(X_t; \hat{f}_{t-1})\mid \Omega_{t-1}\right]\nonumber\\
&\ \ \ - \frac{1}{T}\sum^T_{t=1}\mathbb{E}\left[\phi_1(X_t, A_t, Y_t; \bar{\alpha}, f^*)\mid \Omega_{t-1}\right] - \frac{1}{T}\sum^T_{t=1}\mathbb{E}\left[\phi_2(X_t; f^*)\mid \Omega_{t-1}\right]\nonumber\\
& = \frac{1}{T}\sum^T_{t=1} \mathbb{E}\left[\sum^K_{a=1}\frac{\epol(a\mid X_t)\mathbbm{1}[A_t=a]\left(Y_t(a) - \hat{f}_{t-1}(a, X_t)\right) }{\hat{g}_{t-1}(a\mid X_t)}\mid \Omega_{t-1}\right] + \frac{1}{T}\sum^T_{t=1}\mathbb{E}\left[\sum^K_{a=1}\epol(a, X_t)\hat{f}_{t-1}(a, X_t)\mid \Omega_{t-1}\right]\nonumber\\
\label{eq:vanish1}
&\ \ \ - \frac{1}{T}\sum^T_{t=1}\mathbb{E}\left[\sum^K_{a=1}\frac{\epol(a\mid X_t)\mathbbm{1}[A_t=a]\left(Y_t(a) - f^*(a, X_t)\right) }{\bar{\alpha}(a\mid X_t)}\mid \Omega_{t-1}\right]\\
&\ \ \ - \frac{1}{T}\sum^T_{t=1}\mathbb{E}\left[\sum^K_{a=1}\epol(a, X_t)f^*(a, X_t)\mid \Omega_{t-1}\right]\nonumber.
\end{align}
Here, \eqref{eq:vanish1} is $0$ because 
\begin{align*}
&\frac{1}{T}\sum^T_{t=1}\mathbb{E}\left[\sum^K_{a=1}\frac{\epol(a\mid X_t)\mathbbm{1}[A_t=a]\left(Y_t(a) - f^*(a, X_t)\right) }{\bar{\alpha}(a\mid X_t)}\mid \Omega_{t-1}\right]\\
&=\frac{1}{T}\sum^T_{t=1}\mathbb{E}\left[\mathbb{E}\left[\sum^K_{a=1}\frac{\epol(a\mid X_t)\mathbbm{1}[A_t=a]\left(Y_t(a) - f^*(a, X_t)\right) }{\bar{\alpha}(a\mid X_t)}\mid X_t, \Omega_{t-1}\right]\mid \Omega_{t-1}\right]\\
&=\frac{1}{T}\sum^T_{t=1}\mathbb{E}\left[\sum^K_{a=1}\frac{\epol(a\mid X_t)\pi_t(a\mid X_t, \Omega_{t-1})}{\bar{\alpha}(a\mid X_t)}\mathbb{E}\left[f^*(a, X_t) - f^*(a, X_t)\mid X_t, \Omega_{t-1}\right]\mid \Omega_{t-1}\right].
\end{align*}
We used the law of iterated expectation, $\mathbb{E}[\mathbbm{1}[A_t=a]\mid X_t, \Omega_{t-1}] = \pi_{t}(a\mid X_t, \Omega_{t-1})$, and $\mathbb{E}[Y_t(a)\mid X_t, \Omega_{t-1}] = f^*(a, X_t)$. Therefore, we have
\begin{align*}
&\frac{1}{T}\sum^T_{t=1}\mathbb{E}\left[\phi_1(X_t, A_t, Y_t; \hat{g}_{t-1}, \hat{f}_{t-1})\mid \Omega_{t-1}\right]\\
&\ \ \  + \frac{1}{T}\sum^T_{t=1}\mathbb{E}\left[\phi_2(X_t; \hat{f}_{t-1})\mid \Omega_{t-1}\right]\nonumber\\
&\ \ \ - \frac{1}{T}\sum^T_{t=1}\mathbb{E}\left[\phi_1(X_t, A_t, Y_t; \bar{\alpha}, f^*)\mid \Omega_{t-1}\right]\\
&\ \ \ - \frac{1}{T}\sum^T_{t=1}\mathbb{E}\left[\phi_2(X_t; f^*)\mid \Omega_{t-1}\right]\nonumber\\
& = \frac{1}{T}\sum^T_{t=1} \sum^K_{a=1}\mathbb{E}\Bigg[\mathbb{E}\Bigg[\frac{\epol(a\mid X_t)\mathbbm{1}[A_t=a]\left(Y_t(a) - \hat{f}_{t-1}(a, X_t)\right) }{\hat{g}_{t-1}(a\mid X_t)}\\
&\ \ \ \ \ \ \ \ \ \ \ \ \ \ \ \ \ \ \ \ \ \ \ \  - \epol(a, X_t)\Big( f^*(a, X_t)- \hat{f}_{t-1}(a, X_t))\Big) \mid X_t, \Omega_{t-1}\Bigg]\mid \Omega_{t-1}\Bigg]\\
& = \sum^K_{a=1} \frac{1}{T}\sum^T_{t=1} \mathbb{E}\Bigg[\frac{\epol(a\mid X_t)\big(\pi_{t}(a\mid X_t, \Omega_{t-1}) - \hat{g}_{t-1}(a\mid X_t)\big)\left(f^*(a, X_t) - \hat{f}_{t-1}(a, X_t)\right) }{\hat{g}_{t-1}(a\mid X_t)}\mid \Omega_{t-1}\Bigg].
\end{align*}

From here, we drop the subscript $t$ because $X_t$ does not depend on the period in the expectation conditioned on $\Omega_{t-1}$. Then, the sum of the expectations is bounded as
\begin{align*}
&\frac{1}{T}\sum^T_{t=1} \mathbb{E}\Bigg[\frac{\epol(a\mid X)\Big(\pi_{t}(a\mid X, \Omega_{t-1}) - \hat{g}_{t-1}(a\mid X)\Big)\left(f^*(a, X) - \hat{f}_{t-1}(a, X)\right) }{\hat{g}_{t-1}(a\mid X)}  \mid \Omega_{t-1}\Bigg]\nonumber\\
&\leq \left|\frac{1}{T}\sum^T_{t=1} \mathbb{E}\Bigg[\frac{\epol(a\mid X)\Big(\pi_{t}(a\mid X, \Omega_{t-1}) - \hat{g}_{t-1}(a\mid X)\Big)\left(f^*(a, X) - \hat{f}_{t-1}(a, X)\right) }{\hat{g}_{t-1}(a\mid X)}  \mid \Omega_{t-1}\Bigg]\right|\nonumber\\
&\leq \left|\frac{1}{T}\sum^T_{t=1} \mathbb{E}\Bigg[\epol(a\mid X)\left(\frac{\pi_{t}(a\mid X, \Omega_{t-1})}{\hat{g}_{t-1}(a\mid X)} - 1\right)\left(f^*(a, X) - \hat{f}_{t-1}(a, X)\right)  \mid \Omega_{t-1}\Bigg]\right|\nonumber\\
\end{align*}
This is decomposed and bounded as
\begin{align*}
&\leq \Bigg|\frac{1}{T}\sum^T_{t=1} \mathbb{E}\Bigg[\epol(a\mid X)\left(\frac{\pi_{t}(a\mid X, \Omega_{t-1})}{\hat{g}_{t-1}(a\mid X)} - 1\right)\left(f^*(a, X) - \hat{f}_{t-1}(a, X)\right)  \mid \Omega_{t-1}\Bigg]\\
&\ \ \ \ \ \ \ \ \ - \frac{1}{T}\sum^T_{t=1} \mathbb{E}\Bigg[\epol(a\mid X)\left(\frac{\pi_{t}(a\mid X, \Omega_{t-1})}{\frac{1}{t}\sum^t_{s=1}\pi_s(a\mid X, \Omega_{s-1})} - 1\right)\left(f^*(a, X) - \hat{f}_{t-1}(a, X)\right)  \mid \Omega_{t-1}\Bigg]\\
&\ \ \ \ \ \ \ \ \ + \frac{1}{T}\sum^T_{t=1} \mathbb{E}\Bigg[\epol(a\mid X)\left(\frac{\pi_{t}(a\mid X, \Omega_{t-1})}{\frac{1}{t}\sum^t_{s=1}\pi_s(a\mid X, \Omega_{s-1})} - 1\right)\left(f^*(a, X) - \hat{f}_{t-1}(a, X)\right)  \mid \Omega_{t-1}\Bigg]\Bigg|\nonumber\\
&= \Bigg|\frac{1}{T}\sum^T_{t=1} \mathbb{E}\Bigg[\frac{\epol(a\mid X)\pi_{t}(a\mid X, \Omega_{s-1})}{\hat{g}_{t-1}(a\mid X)\frac{1}{t}\sum^t_{s=1}\pi_s(a\mid X, \Omega_{t-1})}\\
&\ \ \ \ \ \ \ \ \ \ \ \ \ \ \ \ \ \ \times \left(\frac{1}{t}\sum^t_{s=1}\pi_s(a\mid X, \Omega_{s-1}) - \hat{g}_{t-1}(a\mid X) \right)\left(f^*(a, X) - \hat{f}_{t-1}(a, X)\right)  \mid \Omega_{t-1}\Bigg]\Bigg|\\
&\ \ \ \ \ \ \ \ \ + \Bigg|\frac{1}{T}\sum^T_{t=1} \mathbb{E}\Bigg[\epol(a\mid X)\left(\frac{\pi_{t}(a\mid X, \Omega_{t-1})}{\frac{1}{t}\sum^t_{s=1}\pi_s(a\mid X, \Omega_{s-1})} - 1\right)\left(f^*(a, X) - \hat{f}_{t-1}(a, X)\right)  \mid \Omega_{t-1}\Bigg]\Bigg|\nonumber\\
\end{align*}
Then, we want to show that
\begin{align}
\label{lem1:target1}
&\Bigg|\frac{1}{T}\sum^T_{t=1} \mathbb{E}\Bigg[\frac{\epol(a\mid X)\pi_{t}(a\mid X, \Omega_{t-1})}{\hat{g}_{t-1}(a\mid X)\frac{1}{t}\sum^t_{s=1}\pi_s(a\mid X, \Omega_{s-1})}\nonumber\\
&\ \ \ \ \ \ \ \ \ \ \ \ \ \ \ \ \ \ \left(\frac{1}{t}\sum^t_{s=1}\pi_s(a\mid X, \Omega_{t-1}) - \hat{g}_{t-1}(a\mid X) \right)\left(f^*(a, X) - \hat{f}_{t-1}(a, X)\right)  \mid \Omega_{t-1}\Bigg]\Bigg|\nonumber\\
&\leq \frac{C}{T}\sum^T_{t=1} \Bigg|\mathbb{E}\Bigg[\left(\frac{1}{t}\sum^t_{s=1}\pi_s(a\mid X, \Omega_{s-1}) - \hat{g}_{t-1}(a\mid X) \right)\left(f^*(a, X) - \hat{f}_{t-1}(a, X)\right)  \mid \Omega_{t-1}\Bigg]\Bigg|\nonumber\\
& = \op(T^{-1/2})
\end{align}
and 
\begin{align}
\label{lem1:target2}
&\Bigg|\frac{1}{T}\sum^T_{t=1} \mathbb{E}\Bigg[\epol(a\mid X)\left(\frac{\pi_{t}(a\mid X, \Omega_{t-1})}{\frac{1}{t}\sum^t_{s=1}\pi_s(a\mid X, \Omega_{s-1})} - 1\right)\left(f^*(a, X) - \hat{f}_{t-1}(a, X)\right)  \mid \Omega_{t-1}\Bigg]\Bigg|\nonumber\\
& = \op(T^{-1/2}).
\end{align}
We show \eqref{lem1:target1} by using Assumption~\ref{asm:mean_stat} as
\begin{align*}
& \frac{1}{T}\sum^T_{t=1} \Bigg|\mathbb{E}\Bigg[\left(\frac{1}{t}\sum^t_{s=1}\pi_s(a\mid X, \Omega_{s-1}) - \hat{g}_{t-1}(a\mid X) \right)\left(f^*(a, X) - \hat{f}_{t-1}(a, X)\right)  \mid \Omega_{t-1}\Bigg]\Bigg|\\
& \leq \frac{1}{T}\sum^T_{t=1}\left\|\frac{1}{t}\sum^t_{s=1}\pi_{s}(a\mid X, \Omega_{s-1}) - \hat{g}_{t-1}(a\mid X)\right\|_2\left\|f^*(a, X) - \hat{f}_{t-1}(a, X)\right\|_2\\
&\leq \frac{1}{T}\sum^T_{t=1}\left(\left\|\frac{1}{t}\sum^t_{s=1}\pi_{s}(a\mid X, \Omega_{s-1}) - \bar{\alpha}(a\mid X)\right\|_2 + \left\|\bar{\alpha}(a\mid X) - \hat{g}_{t-1}(a\mid X)\right\|_2\right)\left\|f^*(a, X) - \hat{f}_{t-1}(a, X)\right\|_2\\
&= \frac{1}{T}\sum^T_{t=1}\left\|\frac{1}{t}\sum^t_{s=1}\pi_{s}(a\mid X, \Omega_{s-1}) - \bar{\alpha}(a\mid X)\right\|_2\left\|f^*(a, X) - \hat{f}_{t-1}(a, X)\right\|_2\\
&\ \ \  + \frac{1}{T}\sum^T_{t=1}\left\|\bar{\alpha}(a\mid X) - \hat{g}_{t-1}(a\mid X)\right\|_2\left\|f^*(a, X) - \hat{f}_{t-1}(a, X)\right\|_2\\
& = \frac{1}{T}\sum^T_{t=1}\op(t^{-1/2}) + \frac{1}{T}\sum^T_{t=1}\op(t^{-1/2})\\
\end{align*}

The equation \eqref{lem1:target2} holds from Assumption~\ref{asm:stationarity}. 

Then, by using the property of Riemann zeta function,
\begin{align*}
& \frac{1}{T}\sum^T_{t=1}\op(t^{-1/2}) + \frac{1}{T}\sum^T_{t=1}\op(t^{-1/2})\\
& \approx \op(T^{-1/2}) + \op(T^{-1/2})\\
& = \op(T^{-1/2}).
\end{align*}
Therefore,
\begin{align*}
&\frac{1}{T}\sum^T_{t=1}\mathbb{E}\left[\phi_1(X_t, A_t, Y_t; \hat{g}_{t-1}, \hat{f}_{t-1})\mid \Omega_{t-1}\right] + \frac{1}{T}\sum^T_{t=1}\mathbb{E}\left[\phi_2(X_t; \hat{f}_{t-1})\mid \Omega_{t-1}\right]\nonumber\\
&\ \ \ - \frac{1}{T}\sum^T_{t=1}\mathbb{E}\left[\phi_1(X_t, A_t, Y_t; \bar{\alpha}, f^*)\mid \Omega_{t-1}\right] - \frac{1}{T}\sum^T_{t=1}\mathbb{E}\left[\phi_2(X_t; f^*)\mid \Omega_{t-1}\right]= \op(T^{-1/2})
\end{align*}

\end{proof}

\section{Proof of Lemma~\ref{LEM:2}}
\label{appdx:proof:target2}

The proof procedure follows \citet{Kato2020adaptive}. 

\begin{proof}
Let $\Gamma_t(a)$ be
\begin{align*}
\Gamma_t(a; \epol) &= \frac{\epol(a\mid X_t)\mathbbm{1}[A_t = a](Y_t - f^*(a\mid X_t))}{\bar{\alpha}(a\mid X_t)} - \epol(a\mid X_t)f^*(a\mid X_t)\\
& = \frac{\epol(a\mid X_t)\mathbbm{1}[A_t = a](Y_t(a) - f^*(a\mid X_t))}{\bar{\alpha}(a\mid X_t)} - \epol(a\mid X_t)f^*(a\mid X_t).
\end{align*}
Here, we used $\mathbbm{1}[A_t = a]Y_t = \mathbbm{1}[A_t = a]\sum^K_{a=1}\mathbbm{1}[A_t = a]Y_t(a) = \mathbbm{1}[A_t = a]Y_t(a)$. Note that $\ddot{R}_T(\epol) = \frac{1}{T}\sum^T_{t=1}\sum^K_{a=1}\Gamma_t(a; \epol)$. Then, for $Z_t = \sum^K_{a=1}\Gamma_t(a; \epol) - R(\epol)$, we want to show that
\begin{align*}
\sqrt{T}\left(\ddot{R}_T(\epol) - R(\epol)\right) = \sqrt{T}\left(\frac{1}{T}\sum^T_{t=1}Z_t\right) \xrightarrow{\mathrm{d}} \mathcal{N}\left(0, \sigma^2\right).
\end{align*}

Then, the sequence $\{Z_t\}^T_{t=1}$ is an MDS; that is, 
\begin{align*}
&\mathbb{E}\big[Z_t\mid \Omega_{t-1}\big]
\\
&= \mathbb{E}\left[\sum^K_{a=1}\Gamma_t(a; \epol) - R(\epol)\mid \Omega_{t-1}\right]
\\
&= \mathbb{E}\left[\sum^K_{a=1}\epol(a\mid X_t)f^*(a, X_t) - R(\epol_t)\mid \Omega_{t-1}\right]\\
&\ \ \ \ \ \ \ \ \ \ \ \ \ \ \ \  + \mathbb{E}\left[\sum^K_{a=1}\frac{\epol(a\mid X_t)\mathbbm{1}[A_t = a](Y_t(a) - f^*(a, X_t))}{\bar{\alpha}(A_t\mid X_t)}\mid \Omega_{t-1}\right]
\\
&= 0 + \mathbb{E}\left[\mathbb{E}\left[\sum^K_{a=1}\frac{\epol(a \mid X_t)\mathbbm{1}[A_t = a](Y_t(a) - f^*(a, X_t))}{\bar{\alpha}(a\mid X_t)}\mid X_t, \Omega_{t-1}\right]\mid \Omega_{t-1}\right]\\
&= \mathbb{E}\left[\mathbb{E}\left[\sum^K_{a=1}\frac{\epol(a \mid X_t)\pi(a\mid X_t, \Omega_{t-1})(Y_t(a) - f^*(a, X_t))}{\bar{\alpha}(a\mid X_t)}\mid X_t, \Omega_{t-1}\right]\mid \Omega_{t-1}\right]\\
&= \mathbb{E}\left[\mathbb{E}\left[\sum^K_{a=1}\frac{\epol(a \mid X_t)\pi(a\mid X_t, \Omega_{t-1})(f^*(a, X_t) - f^*(a, X_t))}{\bar{\alpha}(a\mid X_t)}\mid X_t, \Omega_{t-1}\right]\mid \Omega_{t-1}\right]\\
& = 0.
\end{align*}

Therefore, to derive the asymptotic distribution, we consider applying the CLT for a martingale difference sequences (MDS) introduced in Proposition~\ref{prp:marclt}. There are  following three conditions in the statement.

\begin{description}
\item[(a)] $\mathbb{E}\big[Z^2_t\big] = \nu^2_t > 0$ with $\big(1/T\big) \sum^T_{t=1}\nu^2_t\to \nu^2 > 0$;
\item[(b)] $\mathbb{E}\big[|Z_t|^r\big] < \infty$ for some $r>2$;
\item[(c)] $\big(1/T\big)\sum^T_{t=1}Z^2_t\xrightarrow{\mathrm{p}} \nu^2$. 
\end{description}
Because we assumed the boundedness of $z_t$ by assuming the boundedness of $Y_t$, $f^*$, and $\epol/\bar{\alpha}$, the condition~(b) holds. Therefore, the remaining task is to show the conditions~(a) and (c) hold.

\subsection*{Step~1: Check of Condition~(a)}
For $\mathbb{E}\big[Z^2_t\big]$, we have
\begin{align}
&\mathbb{E}\big[Z^2_t\big]=\mathbb{E}\left[\left(\sum^K_{a=1}\left(\frac{\epol(a\mid X_t)\mathbbm{1}[A_t = a](Y_t(a) - f^*(a, X_t))}{\bar{\alpha}(a\mid X_t)} + \epol(a\mid X_t)f^*(a, X_t)\right) - R(\epol)\right)^2\right]\nonumber\\
&= \mathbb{E}\left[\left(\sum^K_{a=1}\frac{\epol(a\mid X_t)\mathbbm{1}[A_t = a](Y_t(a) - f^*(a, X_t))}{\bar{\alpha}(a\mid X_t)} + \sum^K_{a=1}\epol(a\mid X_t)f^*(a, X_t) - R(\epol)\right)^2\right]\nonumber\\
\label{eq:first_term}
&= \mathbb{E}\Bigg[\left(\sum^K_{a=1}\frac{\epol(a\mid X_t)\mathbbm{1}[A_t = a](Y_t(a) - f^*(a, X_t))}{\bar{\alpha}(a\mid X_t)}\right)^2\\
\label{eq:second_term}
&\ \ \ \ \ \ \ \ +2\left(\sum^K_{a=1}\frac{\epol(a\mid X_t)\mathbbm{1}[A_t = a](Y_t(a) - f^*(a, X_t))}{\bar{\alpha}(a\mid X_t)}\right)\left(\sum^K_{a=1}\epol(a\mid X_t)f^*(a, X_t) - R(\epol)\right)\\
&\ \ \ \ \ \ \ \ + \left(\sum^K_{a=1}\epol(a\mid X_t)f^*(a, X_t) - R(\epol)\right)^2\Bigg].\nonumber
\end{align}
For the first term \eqref{eq:first_term}, we have
\begin{align*}
&\mathbb{E}\Bigg[\left(\sum^K_{a=1}\frac{\epol(a\mid X_t)\mathbbm{1}[A_t = a](Y_t(a) - f^*(a, X_t))}{\bar{\alpha}(a\mid X_t)}\right)^2\Bigg]\\
&= \sum^K_{a=1}\mathbb{E}\Bigg[\left(\frac{\epol(a\mid X_t)\mathbbm{1}[A_t = a](Y_t(a) - f^*(a, X_t))}{\bar{\alpha}(a\mid X_t)}\right)^2\Bigg]\\
&= \sum^K_{a=1}\mathbb{E}\Bigg[\frac{\big(\epol(a\mid X_t)\big)^2\mathbbm{1}[A_t = a](Y_t(a) - f^*(a, X_t))^2}{\bar{\alpha}(a\mid X_t)}\Bigg]\\
&= \sum^K_{a=1}\mathbb{E}\Bigg[\frac{\big(\epol(a\mid X_t)\big)^2\pi_t(a\mid X_t, \Omega_{t-1})\mathrm{Var}(Y_t(a)\mid X_t)}{\bar{\alpha}^2(a\mid X_t)}\Bigg].
\end{align*}
From the first to second line, we used $\mathbbm{1}[A_t=a]\mathbbm{1}[A_t=a'] = 0$ for $a\neq a'$. From the third to fourth line, we used the conditional independence between $\mathbbm{1}[A_t = a]$ and $(Y_t - f^*(a, X_t))^2$. The second term \eqref{eq:second_term} is $0$ because
\begin{align*}
&\mathbb{E}\Bigg[\left(\sum^K_{a=1}\frac{\epol(a\mid X_t)\mathbbm{1}[A_t = a](Y_t(a) - f^*(a, X_t))}{\bar{\alpha}(a\mid X_t)}\right)\left(\sum^K_{a=1}\epol(a\mid X_t)f^*(a, X_t) - R(\epol)\right)\Bigg]\\
&=\mathbb{E}\Bigg[\left(\sum^K_{a=1}\frac{\epol(a\mid X_t)\mathbbm{1}[A_t = a](Y_t(a) - f^*(a, X_t))}{\bar{\alpha}(a\mid X_t)}\right)\left(\sum^K_{a=1}\epol(a\mid X_t)f^*(a, X_t) - R(\epol)\right)\Bigg]\\
&=\mathbb{E}\Bigg[\left(\sum^K_{a=1}\epol(a\mid X_t)f^*(a, X_t) - R(\epol)\right)\mathbb{E}\Bigg[\sum^K_{a=1}\frac{\epol(a\mid X_t)\mathbbm{1}[A_t = a](Y_t(a) - f^*(a, X_t))}{\bar{\alpha}(a\mid X_t)}\mid X_t, \Omega_{t-1}\Bigg]\Bigg]\\
&=\mathbb{E}\Bigg[\left(\sum^K_{a=1}\epol(a\mid X_t)f^*(a, X_t) - R(\epol)\right)\Bigg(\sum^K_{a=1}\frac{\epol(a\mid X_t)\pi_t(a\mid X_t, \Omega_{t-1})(f^*(a, X_t) - f^*(a, X_t))}{\bar{\alpha}(a\mid X_t)}\Bigg)\Bigg]\\
&=0.
\end{align*}
In conclusion, we have
\begin{align*}
&\mathbb{E}\big[Z^2_t\big] = \mathbb{E}\Bigg[\sum^K_{a=1}\frac{\big(\epol(a\mid X_t)\big)^2\pi_t(a\mid X_t, \Omega_{t-1})\mathrm{Var}(Y_t(a)\mid X_t)}{\bar{\alpha}^2(a\mid X_t)} + \left(\sum^K_{a=1}\epol(a\mid X_t)f^*(a, X_t) - R(\epol)\right)^2\Bigg].
\end{align*}
Because $X_t$ and $Y_t(a)$ does not depend on the period in the expectation, by dropping their subscripts, we represent $\nu^2_t = \mathbb{E}\big[Z^2_t\big]$ as 
\begin{align*}
&\nu^2_t = \mathbb{E}\Bigg[\sum^K_{a=1}\frac{\big(\epol(a\mid X)\big)^2\pi_t(a\mid X, \Omega_{t-1})\mathrm{Var}(Y(a)\mid X)}{\bar{\alpha}^2(a\mid X)} + \left(\sum^K_{a=1}\epol(a\mid X)f^*(a, X) - R(\epol)\right)^2\Bigg].
\end{align*}
Next, we show that for $\nu^2 = \mathbb{E}\Bigg[\sum^K_{a=1}\frac{\big(\epol(a\mid X)\big)^2\bar{\alpha}(a\mid X)\mathrm{Var}(Y(a)\mid X)}{\bar{\alpha}^2(a\mid X)} + \left(\sum^K_{a=1}\epol(a\mid X)f^*(a, X) - R(\epol)\right)^2\Bigg]$, 
\begin{align*}
&\frac{1}{T}\sum^T_{t=1}\nu^2_t - \nu^2 \to 0\ \ \ \mathrm{as}\ T\to\infty\\
&\Leftrightarrow \frac{1}{T}\sum^T_{t=1}\mathbb{E}\Bigg[\sum^K_{a=1}\frac{\big(\epol(a\mid X)\big)^2\pi_t(a\mid X, \Omega_{t-1})\mathrm{Var}(Y(a)\mid X)}{\bar{\alpha}^2(a\mid X)} + \left(\sum^K_{a=1}\epol(a\mid X)f^*(a, X) - R(\epol)\right)^2\Bigg]\\
&\ \ \ - \mathbb{E}\Bigg[\sum^K_{a=1}\frac{\big(\epol(a\mid X)\big)^2\bar{\alpha}(a\mid X)\mathrm{Var}(Y(a)\mid X)}{\bar{\alpha}^2(a\mid X)} + \left(\sum^K_{a=1}\epol(a\mid X)f^*(a, X) - R(\epol)\right)^2\Bigg]\\
&\Leftrightarrow \sum^K_{a=1}\mathbb{E}\Bigg[\frac{1}{T}\sum^T_{t=1}\frac{\big(\epol(a\mid X)\big)^2\pi_t(a\mid X, \Omega_{t-1})\mathrm{Var}(Y(a)\mid X)}{\bar{\alpha}^2(a\mid X)} - \frac{\big(\epol(a\mid X)\big)^2\bar{\alpha}(a\mid X)\mathrm{Var}(Y(a)\mid X)}{\bar{\alpha}^2(a\mid X)}\Bigg]\\
&\ \ \ \ \ \ \ \ \ \ \ \ \ \ \ \ \ \ \to 0\\
&\Leftrightarrow \sum^K_{a=1}\mathbb{E}\Bigg[\frac{\big(\epol(a\mid X)\big)^2\mathrm{Var}(Y(a)\mid X)}{\bar{\alpha}^2(a\mid X)}\left(\frac{1}{T}\sum^T_{t=1}\pi_t(a\mid X, \Omega_{t-1}) - \bar{\alpha}(a\mid X)\right)\Bigg]\to 0.
\end{align*}
Because $\frac{\big(\epol(a\mid X)\big)^2\mathrm{Var}(Y(a)\mid X)}{\bar{\alpha}^2(a\mid X)}$ is upper bounded, there is a constant $C>0$ such that
\begin{align*}
&\mathbb{E}\Bigg[\frac{\big(\epol(a\mid X)\big)^2\mathrm{Var}(Y(a)\mid X)}{\bar{\alpha}^2(a\mid X)}\left(\frac{1}{T}\sum^T_{t=1}\pi_t(a\mid X, \Omega_{t-1}) - \bar{\alpha}(a\mid X)\right)\Bigg]\\
&\leq C \mathbb{E}\Bigg[\left|\frac{1}{T}\sum^T_{t=1}\pi_t(a\mid X, \Omega_{t-1}) - \bar{\alpha}(a\mid X)\right|\Bigg].
\end{align*}

We assumed that the  point-wise convergence of $\frac{1}{T}\sum^T_{t=1}\pi_t(a\mid x, \Omega_{t-1})$; that is, for all $x\in\mathcal{X}$, $k\in\mathcal{A}$, and $\Omega_{t-1} \in \mathcal{M}_{t-1}$, $\frac{1}{T}\sum^T_{t=1}\pi_t(a\mid x, \Omega_{t-1})\xrightarrow{\mathrm{d}} \bar{\alpha}(a\mid x)$.  From this assumption, if $\frac{1}{T}\sum^T_{t=1}\pi_t(a\mid x, \Omega_{t-1})$ is  uniformly integrable, we can show that 
\begin{align*}
&\mathbb{E}\left[\left|\frac{1}{T}\sum^T_{t=1}\pi_t(a\mid X, \Omega_{t-1}) - \bar{\alpha}(a\mid x)\right| \mid X=x\right] = \mathbb{E}\left[\left|\frac{1}{T}\sum^T_{t=1}\pi_t(a\mid x, \Omega_{t-1}) - \bar{\alpha}(a\mid x)\right| \right]  \to 0,
\end{align*}
as $t\to\infty$ using $L^r$-convergence theorem (Proposition~\ref{prp:lr_conv_theorem}). Note that $X_t$ is independent from $\Omega_{t-1}$. Here, for a fixed $x$, we can show that $\frac{1}{T}\sum^T_{t=1}\pi_t(a\mid X, \Omega_{t-1})$ is uniformly integrable from the boundedness of $\frac{1}{T}\sum^T_{t=1}\pi_t(a\mid X, \Omega_{t-1})$ (Proposition~\ref{prp:suff_uniint}). From the point-wise convergence of $\mathbb{E}\left[\left|\frac{1}{T}\sum^T_{t=1}\pi_t(a\mid x, \Omega_{t-1}) - \bar{\alpha}(a\mid x)\right| \mid X=x\right]$, by using the Lebesgue's dominated convergence theorem, we can show that

\begin{align*}
&\mathbb{E}\left[\mathbb{E}\left[\left|\frac{1}{T}\sum^T_{t=1}\pi_t(a\mid X, \Omega_{t-1}) - \bar{\alpha}(a\mid X)\right| \mid X\right]\right]  \to 0.
\end{align*}
In conclusion, as $t\to\infty$,

\begin{align*}
&\mathbb{E}\big[Z^2_t\big] - \mathbb{E}\Bigg[\sum^K_{a=1}\frac{\big(\epol(a\mid X)\big)^2\bar{\alpha}(a\mid X)\mathrm{Var}(Y(a)\mid X)}{\bar{\alpha}^2(a\mid X)} + \left(\sum^K_{a=1}\epol(a\mid X)f^*(a, X) - R(\epol)\right)^2\Bigg]\to 0.
\end{align*}

\subsection*{Step~2: Check of Condition~(c)}
Let $U_t$ be an MDS such that

\begin{align*}
&U_t = Z^2_t - \mathbb{E}\big[Z^2_t\mid \Omega_{t-1}\big]\\
&=\left(\sum^K_{a=1}\left(\frac{\epol(a\mid X_t)\mathbbm{1}[A_t = a](Y_t(a) - f^*(a, X_t))}{\bar{\alpha}(a\mid X_t)} + \epol(a\mid X_t)f^*(a, X_t)\right) - R(\epol)\right)^2\\\
& - \mathbb{E}\left[\left(\sum^K_{a=1}\left(\frac{\epol(a\mid X_t)\mathbbm{1}[A_t = a](Y_t(a) - f^*(a, X_t))}{\bar{\alpha}(a\mid X_t)} + \epol(a\mid X_t)f^*(a, X_t)\right) - R(\epol)\right)^2\mid \Omega_{t-1}\right].
\end{align*}
From the boundedness of each variable in $Z_t$, we can apply weak law of large numbers for an MDS (Proposition~\ref{prp:mrtgl_WLLN} in Appendix~\ref{appdx:prelim}). Then, we have

\begin{align*}
&\frac{1}{T}\sum^T_{t=1}U_t = \frac{1}{T}\sum^T_{t=1}\big(Z^2_t - \mathbb{E}\big[Z^2_t\mid \Omega_{t-1}\big]\big)\xrightarrow{\mathrm{p}} 0.
\end{align*}
Next, we show that

\begin{align*}
\frac{1}{T}\sum^T_{t=1}\mathbb{E}\big[Z^2_t\mid \Omega_{t-1}\big] - \sigma^2\xrightarrow{\mathrm{p}} 0.
\end{align*}
From Markov's inequality, for any $\varepsilon > 0$, we have

\begin{align*}
&\mathbb{P}\left(\left|\frac{1}{T}\sum^T_{t=1}\mathbb{E}\big[Z^2_t\mid \Omega_{t-1}\big] - \sigma^2\right| \geq \varepsilon\right)\leq \frac{\mathbb{E}\left[\left|\frac{1}{T}\sum^T_{t=1}\mathbb{E}\big[Z^2_t\mid \Omega_{t-1}\big] - \sigma^2\right|\right]}{\varepsilon}.
\end{align*}
Then, we consider showing $\mathbb{E}\left[\left|\frac{1}{T}\sum^T_{t=1}\mathbb{E}\big[Z^2_t\mid \Omega_{t-1}\big] - \sigma^2\right|\right] \to 0$. As well as Step~1, we have
\begin{align*}
&\mathbb{E}\big[Z^2_t\mid \Omega_{t-1}\big] = \\
&\mathbb{E}\Bigg[\sum^K_{a=1}\frac{\big(\epol(a\mid X)\big)^2\pi_t(a\mid X, \Omega_{t-1})\mathrm{Var}(Y(a)\mid X)}{\bar{\alpha}^2(a\mid X)} + \left(\sum^K_{a=1}\epol(a\mid X_t)f^*(a, X) - R(\epol)\right)^2\mid \Omega_{t-1}\Bigg].
\end{align*}
As Step~1, we drop the subscript $t$ from $X_t$ and $Y_t(a)$. Then,

\begin{align*}
&\mathbb{E}\left[\left|\frac{1}{T}\sum^T_{t=1}\mathbb{E}\big[Z^2_t\mid \Omega_{t-1}\big] - \sigma^2\right|\right]\\
&=\mathbb{E}\Bigg[\Bigg|\frac{1}{T}\sum^T_{t=1}\mathbb{E}\Bigg[\sum^K_{a=1}\frac{\big(\epol(a\mid X)\big)^2\pi_t(a\mid X, \Omega_{t-1})\mathrm{Var}(Y(a)\mid X)}{\bar{\alpha}^2(a\mid X)}\\
&\ \ \ \ \ \ \ \ \ \ \ \ \ \ \ \ \ \ \ \ \ \ \ \ \ \ \ \ \ \ \ \ \ \ \ \ \ \ \ \ \ \ \ \ \ \ \ \  + \left(\sum^K_{a=1}\epol(a\mid X)f^*(a, X) - R(\epol)\right)^2\mid \Omega_{t-1}\Bigg]\\
&\ \ \ - \mathbb{E}\Bigg[\sum^K_{a=1}\frac{\big(\epol(a\mid X)\big)^2\bar{\alpha}(a\mid X)\mathrm{Var}(Y(a)\mid X)}{\bar{\alpha}^2(a\mid X)} + \left(\sum^K_{a=1}\epol(a\mid X)f^*(a, X) - R(\epol)\right)^2\Bigg]\Bigg|\Bigg]\\
&=\mathbb{E}\Bigg[\Bigg|\frac{1}{T}\sum^T_{t=1}\mathbb{E}\Bigg[\sum^K_{a=1}\frac{\big(\epol(a\mid X)\big)^2\pi_t(a\mid X, \Omega_{t-1})\mathrm{Var}(Y(a)\mid X)}{\bar{\alpha}^2(a\mid X)}\mid \Omega_{t-1}\Bigg]\\
&\ \ \ - \mathbb{E}\Bigg[\sum^K_{a=1}\frac{\big(\epol(a\mid X)\big)^2\bar{\alpha}(a\mid X)\mathrm{Var}(Y(a)\mid X)}{\bar{\alpha}^2(a\mid X)} \Bigg]\Bigg|\Bigg]\\
\end{align*}

\begin{align*}
&=\mathbb{E}\Bigg[\Bigg|\frac{1}{T}\sum^T_{t=1}\mathbb{E}\Bigg[\sum^K_{a=1}\frac{\big(\epol(a\mid X)\big)^2\pi_t(a\mid X, \Omega_{t-1})\mathrm{Var}(Y(a)\mid X)}{\bar{\alpha}^2(a\mid X)}\mid \Omega_{t-1}\Bigg]\\
&\ \ \ - \mathbb{E}\Bigg[\sum^K_{a=1}\frac{\big(\epol(a\mid X)\big)^2\bar{\alpha}(a\mid X)\mathrm{Var}(Y(a)\mid X)}{\bar{\alpha}^2(a\mid X)} \mid \Omega_{t-1} \Bigg]\Bigg|\Bigg]\\
&=\mathbb{E}\Bigg[\Bigg|\frac{1}{T}\sum^T_{t=1}\mathbb{E}\Bigg[\sum^K_{a=1}\frac{\big(\epol(a\mid X)\big)^2\mathrm{Var}(Y(a)\mid X)}{\bar{\alpha}^2(a\mid X)}\Big(\pi_t(a\mid X, \Omega_{t-1}) - \bar{\alpha}(a\mid X)\Big) \mid \Omega_{t-1}\Bigg] \Bigg|\Bigg]\\
&\leq\sum^K_{a=1}\mathbb{E}\Bigg[\Bigg|\frac{1}{T}\sum^T_{t=1}\mathbb{E}\Bigg[\frac{\big(\epol(a\mid X)\big)^2\mathrm{Var}(Y(a)\mid X)}{\bar{\alpha}^2(a\mid X)}\Big(\pi_t(a\mid X, \Omega_{t-1}) - \bar{\alpha}(a\mid X)\Big) \mid \Omega_{t-1}\Bigg] \Bigg|\Bigg]\\
&=\sum^K_{a=1}\mathbb{E}\Bigg[\Bigg|\mathbb{E}\Bigg[\frac{\big(\epol(a\mid X)\big)^2\mathrm{Var}(Y(a)\mid X)}{\bar{\alpha}^2(a\mid X)}\frac{1}{T}\sum^T_{t=1}\Big(\pi_t(a\mid X, \Omega_{t-1}) - \bar{\alpha}(a\mid X)\Big) \mid \Omega_{T-1}\Bigg] \Bigg|\Bigg]\\
&=\sum^K_{a=1}\mathbb{E}\Bigg[\Bigg|\mathbb{E}\Bigg[\frac{\big(\epol(a\mid X)\big)^2\mathrm{Var}(Y(a)\mid X)}{\bar{\alpha}^2(a\mid X)}\mathbb{E}\Bigg[\frac{1}{T}\sum^T_{t=1}\Big(\pi_t(a\mid X, \Omega_{t-1}) - \bar{\alpha}(a\mid X)\Big) \mid X, \Omega_{T-1}\Bigg] \mid \Omega_{T-1}\Bigg] \Bigg|\Bigg]\\
&=\sum^K_{a=1}\mathbb{E}\Bigg[\Bigg|\mathbb{E}\Bigg[\frac{\big(\epol(a\mid X)\big)^2\mathrm{Var}(Y(a)\mid X)}{\bar{\alpha}^2(a\mid X)}\mathbb{E}\Bigg[\frac{1}{T}\sum^T_{t=1}\pi_t(a\mid X, \Omega_{t-1}) - \bar{\alpha}(a\mid X) \mid X, \Omega_{T-1}\Bigg] \mid \Omega_{T-1}\Bigg] \Bigg|\Bigg].
\end{align*}
Then, by using Jensen's inequality, 

\begin{align*}
&\mathbb{E}\left[\big|\mathbb{E}\big[Z^2_t\mid \Omega_{t-1}\big] - \sigma^2\big|\right]\\
&=\sum^K_{a=1}\mathbb{E}\Bigg[\mathbb{E}\Bigg[\Bigg|\frac{\big(\epol(a\mid X)\big)^2\mathrm{Var}(Y(a)\mid X)}{\bar{\alpha}^2(a\mid X)}\mathbb{E}\Bigg[\frac{1}{T}\sum^T_{t=1}\pi_t(a\mid X, \Omega_{t-1}) - \bar{\alpha}(a\mid X) \mid X_t, \Omega_{T-1}\Bigg]\Bigg| \mid \Omega_{T-1}\Bigg] \Bigg]\\
&=\sum^K_{a=1}\mathbb{E}\Bigg[\frac{\big(\epol(a\mid X)\big)^2\mathrm{Var}(Y(a)\mid X)}{\bar{\alpha}^2(a\mid X)}\Bigg|\mathbb{E}\Bigg[\frac{1}{T}\sum^T_{t=1}\pi_t(a\mid X, \Omega_{t-1}) - \bar{\alpha}(a\mid X) \mid X_t, \Omega_{T-1}\Bigg]\Bigg|  \Bigg].
\end{align*}

Then, from $L^r$ convergence theorem, by using pointwise convergence of $\frac{1}{T}\sum^T_{t=1}\pi(a\mid x, \Omega_{t-1})$ and the boundedness of $Z_t$, we have $\frac{\mathbb{E}\left[\left|\frac{1}{T}\sum^T_{t=1}\mathbb{E}\big[Z^2_t\mid \Omega_{t-1}\big] - \sigma^2\right|\right]}{\varepsilon} \to 0$. Therefore,

\begin{align*}
&\mathbb{P}\left(\left|\frac{1}{T}\sum^T_{t=1}\mathbb{E}\big[Z^2_t\mid \Omega_{t-1}\big] - \sigma^2\right| \geq \varepsilon\right) \leq \frac{\mathbb{E}\left[\left|\frac{1}{T}\sum^T_{t=1}\mathbb{E}\big[Z^2_t\mid \Omega_{t-1}\big] - \sigma^2\right|\right]}{\varepsilon} \to 0.
\end{align*}
In conclusion, 

\begin{align*}
&\frac{1}{T}\sum^T_{t=1}Z^2_t - \sigma^2 = \frac{1}{T}\sum^T_{t=1}\big(Z^2_t - \mathbb{E}\left[Z^2_t\mid \Omega_{t-1}\big] + \mathbb{E}\big[Z^2_t\mid \Omega_{t-1}\big] - \sigma^2\right)\xrightarrow{\mathrm{p}} 0.
\end{align*}

\subsection*{Step~3: Conclusion}
We can use CLT for an MDS. Hence, we have

\begin{align*} 
\sqrt{T}\left(\tilde{R}\left(\epol\right) - R(\epol)\right) \to \mathcal{N}\left(0, \sigma^2\right),
\end{align*}
where $\sigma^2 = \mathbb{E}\Bigg[\sum^K_{a=1}\frac{\big(\epol(a\mid X)\big)^2\mathrm{Var}(Y(a)\mid X)}{\bar{\alpha}(a\mid X)} + \left(\sum^K_{a=1}\epol(a\mid X)f^*(a, X) - R(\epol)\right)^2\Bigg]$.
\end{proof}

\end{document}